\theoremstyle{plain}
\def\eg{{\em e.g.}}
\def\PEG{PEG\xspace}
\def\PEGs{PEGs\xspace}
\def\EPEG{E-PEG\xspace}
\def\EPEGs{E-PEGs\xspace}
\def\Peggy{Peggy\xspace}
\def\param{{\it param}}
\def\eval{{\it eval}}
\def\pass{{\it pass}}
\def\peel{{\it peel}}
\def\suminner{{\it sum_{inner}}}
\def\sumouter{{\it sum_{outer}}}
\def\context{\Psi}
\newcommand{\B}{\mathbb{B}}
\newcommand{\N}{\mathbb{N}}
\newcommand{\W}{\mathbb{N}}
\newcommand{\I}{\mathbb{I}}
\renewcommand{\i}{\mathbf{i}}
\newcommand{\D}{\tau}
\newcommand{\U}[1]{#1_\bot}
\newcommand{\Lifted}[1]{\widetilde{#1}}
\newcommand{\set}[1]{\mathcal{#1}}
\newcommand{\sliwrt}[2]{\invariant_{#2}({#1})}
\newcommand{\corresponds}[3]{\semvalue{#1}(#3)=#2}
\newcommand{\true}{\mathbf{true}}
\newcommand{\false}{\mathbf{false}}
\newcommand{\undef}{\bot}
\newcommand{\base}{\mathit{base}}
\newcommand{\loopit}{\mathit{loop}}
\newcommand{\idx}{\mathit{idx}}
\newcommand{\cond}{\mathit{cond}}
\newcommand{\monotonize}{\mathit{monotonize}}
\newcommand{\sem}[1]{#1}
\newcommand{\semvalue}[1]{\llbracket #1 \rrbracket}
\DeclareMathOperator{\FixpointTemporaries}{FixpointTemps}
\DeclareMathOperator{\TranslateStmt}{TS}
\DeclareMathOperator{\TranslateExpr}{TE}
\DeclareMathOperator{\TranslateProg}{TranslateProg}
\DeclareMathOperator{\InitMap}{InitMap}
\DeclareMathOperator{\ParemeterNode}{\constructor{param}}
\DeclareMathOperator{\TemporaryNode}{TemporaryNode}
\DeclareMathOperator{\Combine}{Combine}
\DeclareMathOperator{\Keys}{Keys}
\DeclareMathOperator{\THETA}{THETA}
\DeclareMathOperator{\EVAL}{EVAL}
\DeclareMathOperator{\PHI}{PHI}
\newcommand{\params}{\mathit{params}}
\newcommand{\retvar}{\mathtt{retvar}}
\newcommand{\loopdepth}{\ell}
\newcommand{\Stmt}{\mathit{Stmt}}
\newcommand{\Prog}{\mathit{Prog}}
\newcommand{\Expr}{\mathit{Expr}}
\newcommand{\maptype}[2]{#1 \rightarrow #2}
\renewcommand{\maptype}[2]{\mathit{map}[#1,#2]}
\newcommand{\listtype}[1]{{\it list}[#1]}
\renewcommand{\listtype}[1]{#1^*}
\newcommand{\tab}{$\quad$}
\newcommand{\vars}{\mathit{vars}}
\newcommand{\nextnodes}{\mathit{next\_nodes}}
\renewcommand{\nextnodes}{\context'}
\newcommand{\varnodes}{\mathit{tmp\_nodes}}
\renewcommand{\varnodes}{\context_t}
\newcommand{\thetanodes}{\context_\theta}
\newcommand{\op}{\mathit{op}}
\newcommand{\constructor}[1]{\overline{#1}}
\newcommand{\NodeType}{N}
\newcommand{\MustEval}{\text{MustEval}}
\newcommand{\EvalCond}{\text{EvalCond}}
\DeclareMathOperator{\Z}{Z}
\let\S=\undefined
\DeclareMathOperator{\S}{S}
\DeclareMathOperator{\Optimize}{Optimize}
\DeclareMathOperator{\CfgToIr}{ConvertToIR}
\DeclareMathOperator{\CreateInitialEPEG}{CreateInitialEPEG}
\DeclareMathOperator{\IrToCfg}{ConvertToCFG}
\DeclareMathOperator{\Saturate}{Saturate}
\DeclareMathOperator{\SelectBest}{SelectBest}
\DeclareMathOperator{\Match}{Match}
\DeclareMathOperator{\AddEdges}{AddEqualities}
\newcommand{\CFG}{\mathit{CFG}}
\newcommand{\EPEGType}{\mathit{EPEG}}
\newcommand{\PEGType}{\mathit{PEG}}
\newcommand{\cfg}{\mathit{cfg}}
\newcommand{\ir}{\mathit{ir}}
\newcommand{\best}{\mathit{best}}
\newcommand{\eir}{\mathit{ir}}
\newcommand{\satir}{\mathit{saturated\_ir}}
\newcommand{\Trigger}{p}
\newcommand{\Search}{f}
\newcommand{\subst}{\mathit{subst}}
\newcommand{\SubstType}{\mathit{S}}
\newcommand{\invariant}{\mathit{invariant}}
\newcommand{\Var}{V}
\newcommand{\maxvariance}{\mathit{depth}}
\newcommand{\var}{\mathit{var}}
\newcommand{\bcost}{\mathit{basic\_cost}}
\newcommand{\peg}{\mathit{peg}}
\newcommand{\epeg}{\mathit{epeg}}
\newcommand{\body}{\mathit{body}}
\newcommand{\valueit}{\mathit{value}}
\newcommand{\Node}{\mathit{N}}
\newcommand{\Label}{\mathit{L}}
\newcommand{\Param}{\mathit{C}}
\newcommand{\Func}{\mathit{F}}
\newcommand{\DomainFunc}{\mathit{Domain}}
\newcommand{\Primitives}{\mathit{Prims}}
\newcommand{\DomainOperators}{\mathit{DomainOp}}
\newcommand{\ParamLabels}{\mathit{P}}
\newcommand{\linespace}{\vspace{3pt}}
\newcommand{\thickstraightline}{\linespace\hrule height .8pt \linespace}
\newcommand{\declarefunction}[1]{
   \thickstraightline
   \STATE \textbf{function} $#1$
   }
\newcommand{\declaresmallfunction}[1]{
   \STATE \textbf{function} $#1$
   }
\newcommand{\finishfunction}{\vspace{9pt}}
\newcommand{\mypara}[1]{\subsection*{#1}}
\newcommand{\tvsuccess}{98}
\newcommand{\tvnummethods}{3,416\xspace}
\newcommand{\compilemethods}{2,461\xspace}
\newcommand{\heapsize}{200\xspace}
\newcommand{\completionrate}{84}
\newcommand{\enginebound}{500\xspace}
\newcommand{\lowerbound}{$2^{103}$\xspace}
\newcounter{linenumber}
\newcommand{\clearlineno}{\setcounter{linenumber}{0}}
\newcommand{\lineno}{\addtocounter{linenumber}{1} \arabic{linenumber}.}
\newcommand{\savelineno}[1]{\newcounter{#1}\setcounter{#1}{\value{linenumber}}}
\newcommand{\linelabel}[1]{\savelineno{#1}}
\newcommand{\reflineno}[1]{\arabic{#1}}
\newcommand{\startrange}[1]{\savelineno{#1-begin}\addtocounter{#1-begin}{1}}
\newcommand{\finishrange}[1]{\savelineno{#1-end}}
\newcommand{\refrange}[1]{\reflineno{#1-begin} through \reflineno{#1-end}}
\newcommand{\trans}{\rightarrow}
\newcommand{\atrans}[1]{\stackrel{#1}{\trans}}
\newtheorem{theorem}{Theorem}
\newtheorem{lemma}{Lemma}
\newtheorem{defn}{Definition}
\newcommand{\SIMPLE}{SIMPLE\xspace}
\newcommand{\simpleite}[3]{\textbf{if}~(#1)~\{#2\}~\textbf{else}~\{#3\}}
\newcommand{\simplewhile}[2]{\textbf{while}~(#1)~\{#2\}}
\renewcommand{\simpleite}[3]{\texttt{if}~(#1)~\{#2\}~\texttt{else}~\{#3\}}
\renewcommand{\simplewhile}[2]{\texttt{while}~(#1)~\{#2\}}
\newcommand{\typestmnt}[3]{#1 \vdash #2 : #3}
\newcommand{\typeexpr}[3]{#1 \vdash #2 : #3}
\newcommand{\transstmnt}[4][\ell]{#2 \vdash #3 \mapsto_{#1} #4}
\newcommand{\transexpr}[3]{#1 \vdash #2 \mapsto #3}
\renewcommand{\transstmnt}[4][\ell]{#2 \leadsto_{#1} #4}
\renewcommand{\transexpr}[3]{#1 \leadsto #3}
\newcommand{\typetransstmnt}[6][\ell]{\typestmnt{#3}{#2}{#4} \; \Longrightarrow \; \transstmnt[#1]{#5}{#2}{#6}}
\newcommand{\typetransexpr}[5]{\typeexpr{#2}{#1}{#3} \; \Longrightarrow \; \transexpr{#4}{#1}{#5}}
\renewcommand{\typetransstmnt}[6][\ell]{\typestmnt{#3}{#2}{#4} \;\; \vartriangleright \;\; \transstmnt[#1]{#5}{#2}{#6}}
\renewcommand{\typetransexpr}[5]{\typeexpr{#2}{#1}{#3} \;\; \vartriangleright \;\; \transexpr{#4}{#1}{#5}}
\def\doi{7 (1:10) 2011}
\begin{document}

\title{Equality Saturation: a New Approach to Optimization}

\thanks{Supported in part by NSF CAREER grant CCF-0644306.}

\author[R.~Tate]{Ross Tate}
\author[M.~Stepp]{Michael Stepp}
\author[Z.~Tatlock]{Zachary Tatlock}
\author[S.~Lerner]{Sorin Lerner}
\address{Department of Computer Science and Engineering,
         University of California, San Diego}
\email{\{rtate,mstepp,ztatlock,lerner\}@cs.ucsd.edu}

\keywords{Compiler Optimization, Equality Reasoning, Intermediate Representation}
\subjclass{D.3.4}
\titlecomment{An earlier version of this work appeared at the 36th
Annual ACM SIGPLAN - SIGACT Symposium on Principles of Programming
Languages (POPL 2009)}

\begin{abstract}

Optimizations in a traditional compiler are applied sequentially, with
each optimization destructively modifying the program to produce a
transformed program that is then passed to the next optimization. We
present a new approach for structuring the optimization phase of a
compiler. In our approach, optimizations take the form of equality
analyses that add equality information to a common intermediate
representation. The optimizer works by repeatedly applying these
analyses to infer equivalences between program fragments, thus
saturating the intermediate representation with equalities. Once
saturated, the intermediate representation encodes multiple optimized
versions of the input program. At this point, a profitability
heuristic picks the final optimized program from the various programs
represented in the saturated representation. Our proposed way of
structuring optimizers has a variety of benefits over previous
approaches: our approach obviates the need to worry about optimization
ordering, enables the use of a global optimization heuristic that
selects among fully optimized programs, and can be used to perform
translation validation, even on compilers other than our own. We
present our approach, formalize it, and describe our choice of
intermediate representation. We also present experimental results
showing that our approach is practical in terms of time and space
overhead, is effective at discovering intricate optimization
opportunities, and is effective at performing translation validation
for a realistic optimizer.

\end{abstract}

\maketitle

\section*{Introduction}

In a traditional compilation system, optimizations are applied
sequentially, with each optimization taking as input the program
produced by the previous one. This traditional approach to compilation
has several well-known drawbacks. One of these drawbacks is that the
order in which optimizations are run affects the quality of the
generated code, a problem commonly known as the \emph{phase ordering
problem}. Another drawback is that profitability heuristics, which
decide whether or not to apply a given optimization, tend to make
their decisions one optimization at a time, and so it is difficult for
these heuristics to account for the effect of future transformations.

In this paper, we present a new approach for structuring optimizers
that addresses the above limitations of the traditional approach, and
also has a variety of other benefits. Our approach consists of
computing a set of optimized versions of the input program and then
selecting the best candidate from this set. The set of candidate
optimized programs is computed by repeatedly inferring equivalences
between program fragments, thus allowing us to represent the effect of
many possible optimizations at once. This, in turn, enables the
compiler to delay the decision of whether or not an optimization is
profitable until it observes the full ramifications of that decision.
Although related ideas have been explored in the context of
super-optimizers, as Section~\ref{sec:rel} on related work will point
out, super-optimizers typically operate on straight-line code, whereas
our approach is meant as a general-purpose compilation paradigm that
can optimize complicated control flow structures.

At its core, our approach is based on a simple change to the
traditional compilation model: whereas traditional optimizations
operate by destructively performing transformations, in our approach
optimizations take the form of \emph{equality analyses} that simply
add equality information to a common intermediate representation (IR),
without losing the original program. Thus, after each
equality analysis runs, both the old program and the new program are
represented.

The simplest form of equality analysis looks for ways to instantiate
equality axioms like $a * 0 = 0$, or $a * 4 = a~\text{\tt
<<}~2$. However, our approach also supports arbitrarily complicated
forms of equality analyses, such as inlining, tail recursion
elimination, and various forms of user defined axioms. The flexibility
with which equality analyses are defined makes it easy for compiler
writers to port their traditional optimizations to our equality-based
model: optimizations can work as before, except that when the
optimization would have performed a transformation, it now simply
records the transformation as an equality.

The main technical challenge that we face in our approach is that the
compiler's IR must now use equality information to represent not just
one optimized version of the input program, but multiple versions at
once. We address this challenge through a new IR that compactly
represents equality information, and as a result can simultaneously
store multiple optimized versions of the input program. After a
program is converted into our IR, we repeatedly apply equality
analyses to infer new equalities until no more equalities can be
inferred, a process known as saturation. Once saturated with
equalities, our IR compactly represents the various possible ways of
computing the values from the original program modulo the given set of
equality analyses (and modulo some bound in the case where applying
equality analyses leads to unbounded expansion).

Our approach of having optimizations add equality information to a
common IR until it is saturated with equalities has a variety of
benefits over previous optimization models.

\mypara{Optimization order does not matter.} The first benefit of our
approach is that it removes the need to think about optimization
ordering. When applying optimizations sequentially, ordering is a
problem because one optimization, say $A$, may perform some
transformation that will irrevocably prevent another optimization, say
$B$, from triggering, when in fact running $B$ first would have
produced the better outcome. This so-called \emph{phase ordering
problem} is ubiquitous in compiler design. In our approach, however,
the compiler writer does not need to worry about ordering, because
optimizations do not destructively update the program -- they simply
add equality information. Therefore, after an optimization $A$ is
applied, the original program is still represented (along with the
transformed program), and so any optimization $B$ that could have been
applied before $A$ is still applicable after $A$. Thus, there is no
way that applying an optimization $A$ can irrevocably prevent another
optimization $B$ from applying, and so there is no way that applying
optimizations will lead the search astray. As a result, compiler
writers who use our approach do not need to worry about the order in
which optimizations run. Better yet, because optimizations are allowed
to freely interact during equality saturation, without any
consideration for ordering, our approach can discover intricate
optimization opportunities that compiler writers may not have
anticipated, and hence would not have implemented in a general purpose
compiler.

\mypara{Global profitability heuristics.} The second benefit of our
approach is that it enables \emph{global profitability
heuristics}. Even if there existed a perfect order to run
optimizations in, compiler writers would still have to design
profitability heuristics for determining whether or not to perform
certain optimizations such as inlining. Unfortunately, in a
traditional compilation system where optimizations are applied
sequentially, each heuristic decides in isolation whether or not to
apply an optimization at a particular point in the compilation
process. The local nature of these heuristics makes it difficult to
take into account the effect of future optimizations.

Our approach, on the other hand, allows the compiler writer to design
profitability heuristics that are global in nature. In particular,
rather than choosing whether or not to apply an optimization locally,
these heuristics choose between fully optimized versions of the input
program. Our approach makes this possible by separating the decision
of whether or not a transformation is \emph{applicable} from the
decision of whether or not it is \emph{profitable}. Indeed, using an
optimization to add an equality in our approach does not indicate a
decision to perform the transformation -- the added equality just
represents the \emph{option} of picking that transformation later. The
actual decision of which transformations to apply is performed by a
global heuristic \emph{after} our IR has been saturated with
equalities. This global heuristic simply chooses among the various
optimized versions of the input program that are represented in the
saturated IR, and so it has a global view of all the transformations
that were tried and what programs they generated.

There are many ways to implement this global profitability heuristic,
and in our prototype compiler we have chosen to implement it using a
Pseudo-Boolean solver (a form of Integer Linear Programming
solver). In particular, after our IR has been saturated with
equalities, we use a Pseudo-Boolean solver and a static cost model for
every node to pick the lowest-cost program that computes the same
result as the original program.

\mypara{Translation validation.} The third benefit of our approach is
that it can be used not only to optimize programs, but also to prove
programs equivalent: intuitively, if during saturation an equality
analysis finds that the return values of two programs are equal, then
the two programs are equivalent. Our approach can therefore be used to
perform \emph{translation validation}, a technique that consists of
automatically checking whether or not the optimized version of an
input program is semantically equivalent to the original program. For
example, we can prove the correctness of optimizations performed by
existing compilers, even if our profitability heuristic would not have
selected those optimizations.

\mypara{Contributions.} The contributions of this paper can therefore
be summarized as follows:

\begin{enumerate}[$\bullet$]

\item We present a new approach for structuring optimizers. In our
approach optimizations add equality information to a common IR that
simultaneously represents multiple optimized versions of the input
program. Our approach obviates the need to worry about optimization
ordering, it enables the use of a global optimization heuristic (such
as a Pseudo-Boolean solver), and it can be used to perform translation
validation for any compiler.
Sections~\ref{sec:overview} and~\ref{sec:loops} present an overview of
our approach and its capabilities, Section~\ref{sec:formal} makes our
approach formal, and Sections~\ref{sec:semantics}
through~\ref{sec:peg2cfg} describe the new IR that allows our
approach to be effective.

\item We have instantiated our approach in a new Java bytecode
  optimizer called
\Peggy (Section~\ref{sec:implementation}). \Peggy uses our approach
not only to optimize Java methods, but also to perform translation
validation for existing compilers. Our experimental results
(Section~\ref{sec:eval}) show that our approach (1) is practical both
in terms of time and space overhead, (2) is effective at discovering
both simple and intricate optimization opportunities and (3) is
effective at performing translation validation for a realistic
optimizer -- \Peggy is able to validate \tvsuccess\% of the runs of
the Soot optimizer~\cite{vall99soot}, and within the remaining 2\% it
uncovered a bug in Soot.

\end{enumerate}

\section{Overview}
\label{sec:overview}

Our approach for structuring optimizers is based on the idea of having
optimizations propagate equality information to a common IR that
simultaneously represents multiple optimized versions of the input
program. The main challenge in designing this IR is that it must make
equality reasoning \emph{effective} and \emph{efficient}.

To make equality reasoning \emph{effective}, our IR needs to support
the same kind of basic reasoning that one would expect from simple
equality axioms like \hbox{$a*(b+c) = a*b + a*c$}, but with more complicated
computations such as branches and loops. We have designed a
representation for computations called Program Expression Graphs
(\PEGs) that meets these requirements. Similar to the \emph{gated SSA}
representation~\cite{GSSA, ThinGSSA}, \PEGs are \emph{referentially
  transparent}, which intuitively means that the value of an
expression depends only on the value of its constituent expressions,
without any side-effects. As has been observed previously in many
contexts, referential transparency makes equality reasoning simple and
effective. However, unlike previous SSA-based representations, \PEGs
are also \emph{complete}, which means that there is no need to
maintain any additional representation such as a control flow graph
(CFG). Completeness makes it easy to use equality for performing
transformations: if two \PEG nodes are equal, then we can pick either
one to create a program that computes the same result, without
worrying about the implications on any underlying representation.

In addition to being effective, equality reasoning in our IR must be
\emph{efficient}. The main challenge is that each added equality can
potentially double the number of represented programs, thus making the
number of represented programs exponential in the worst case. To
address this challenge, we record equality information of PEG nodes by
simply merging PEG nodes into equivalence classes. We call the
resulting equivalence graph an \EPEG, and it is this \EPEG
representation that we use in our approach. Using equivalence classes
allows \EPEGs to efficiently represent exponentially many ways of
expressing the input program, and it also allows the equality
saturation engine to efficiently take into account previously
discovered equalities. Among existing IRs, \EPEGs are unique in their
ability to represent multiple optimized versions of the input
program. A more detailed discussion of how \PEGs and \EPEGs relate to
previous IRs can be found in Section~\ref{sec:rel}.

We illustrate the main features of our approach by showing how it can
be used to implement loop-induction-variable strength reduction. The
idea behind this optimization is that if all assignments to a variable
\verb-i- in a loop are increments, then an expression \verb-i * c- in
the loop (with \verb-c- being loop invariant) can be replaced with
\verb-i-, provided all the increments of \verb-i- in the loop are
appropriately scaled by \verb-c-.

\begin{figure}
\begin{minipage}[t]{1.8in}
\begin{verbatim}
  i := 0;
  while (...) {
     use(i * 5);
     i := i + 1;
     if (...) {
        i := i + 3;
     }
  }
\end{verbatim}
\end{minipage}
\begin{minipage}[t]{1.3in}
\begin{verbatim}
i := 0;
while (...) {
   use(i);
   i := i + 5;
   if (...) {
      i := i + 15;
   }
}
\end{verbatim}
\end{minipage} \\
\begin{minipage}[t]{1.8in}
\begin{center}
(a)
\end{center}
\end{minipage}
\begin{minipage}[t]{1.3in}
\begin{center}
(b)
\end{center}
\end{minipage}
\caption{Loop-induction-variable strength reduction: (a) shows
  the original code, and (b) shows the optimized code.}
\label{fig:strength-reduction-code}
\end{figure}

As an example, consider the code snippet from
Figure~\ref{fig:strength-reduction-code}(a). The use of
\verb-i*5- inside the loop can be replaced with \verb-i- as long as
the two increments in the loop are scaled by \verb-5-. The resulting
code is shown in Figure~\ref{fig:strength-reduction-code}(b).

\subsection{Program Expression Graphs}

\begin{figure}[!t]
\begin{center}
\includegraphics[width=6in,clip]{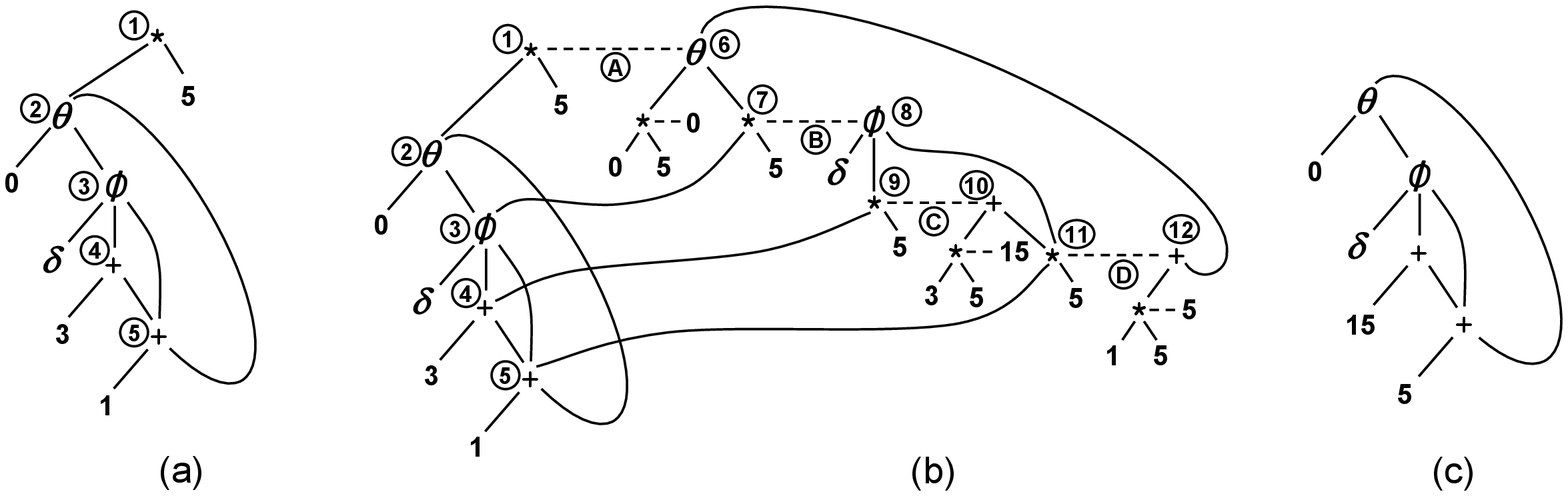}
\end{center}
\caption{Loop-induction-variable Strength Reduction using \PEGs: 
  (a) shows the original \PEG, (b) shows the \EPEG that our
  engine produces from the original \PEG and (c) shows the
  optimized \PEG, which results by choosing nodes 6,
  8, 10, and 12 from (b).}
\label{fig:strength-reduction-veg}
\end{figure}

A Program Expression Graph (\PEG) is a graph containing: (1) operator
nodes, for example ``plus'', ``minus'', or any of our built-in nodes
for representing conditionals and loops, and (2) ``dataflow'' edges
that specify where operator nodes get their arguments from.  As an
example, consider the ``use'' statement in
Figure~\ref{fig:strength-reduction-code}(a). This is meant as a
placeholder for any kind of use of the value \texttt{i*5}; it is used
to mark the specific location inside the loop where we examine this
value. The \PEG for the value \texttt{i*5} is shown in
Figure~\ref{fig:strength-reduction-veg}(a).  At the very top of the
\PEG we see node 1, which represents the \verb-i*5- multiply operation
from inside the loop. Each \PEG node represents an operation, with the
children nodes being the arguments to the operation. The links from
parents to children are shown using solid (non-dashed) lines. For
example, node 1 represents the multiplication of node 2 by the
constant 5. \PEGs follow the notational convention used in
E-graphs~\cite{Nelson:1979:SCD, NelsonOpenCongruence80, simplify} and
Abstract Syntax Trees (ASTs) of displaying operators above the
arguments that flow into them, which is the opposite convention
typically used in Dataflow Graphs~\cite{ssa,ZadeckVariableEquality}.
We use the E-graph/AST orientation because we think of \PEGs as
recursive expressions.

Node 2 in our \PEG represents the value of variable \verb-i- inside
the loop, right before the first instruction in the loop is
executed. We use $\theta$ nodes to represent values that vary inside
of a loop. A \PEG contains one $\theta$ node per variable that is live
in the loop, and a variable's $\theta$ node represents the entire
sequence of values that the variable takes throughout the
loop. Intuitively, the left child of a $\theta$ node computes the
initial value, whereas the right child computes the value at the
current iteration in terms of the value at the previous iteration. In
our example, the left child of the $\theta$ node is the constant 0,
representing the initial value of \verb-i-. The right child of the
$\theta$ node uses nodes 3, 4, and 5 to compute the value of \verb-i-
at the current iteration in terms of the value of \verb-i- from the
previous iteration. The two plus nodes (nodes 4 and 5) represent the
two increments of \verb-i- in the loop, whereas the $\phi$ node (node
3) represents the merging of the two values of \verb-i- produced by
the two plus nodes. In traditional SSA, a $\phi$ node has only two
inputs (the true value and the false value) and as a result the node
itself does not know which of the two inputs to select, relying
instead on an explicit control-flow join to know at run-time which
case of the branch was taken. In contrast, our $\phi$ nodes are like
those in \emph{gated} SSA~\cite{GSSA, ThinGSSA}: they take an
additional parameter (the first left-most one) which is used to select
between the second and the third parameter. As a result, our $\phi$
nodes are executable by themselves, and so there is no need to
explicitly encode a control-flow join. Our example doesn't use the
branch condition in an interesting way, and so we just let $\delta$
represent the \PEG sub-graph that computes the branch condition.
Furthermore, since this \PEG represents the value of \texttt{i} {\em
inside} the loop, it does not contain any operators to describe the
\texttt{while}-condition, since this information is only relevant for
computing the value of \texttt{i} after the loop has terminated. We
show how to express the value of variables after a loop in
Section~\ref{sec:loops}.

From a more formal point of view, each $\theta$ node produces a
\emph{sequence} of values, one value for each iteration of the
loop. The first argument of a $\theta$ node is the value for the first
iteration, whereas the second argument is a sequence that represents
the values for the remaining iterations. For example, in
Figure~\ref{fig:strength-reduction-veg}, the nodes labeled 3 through 5
compute this sequence of remaining values in terms of the sequence
produced by the $\theta$ node. In particular, nodes 3, 4 and 5 have
been implicitly lifted to operate on this sequence. The fact that a
single $\theta$ node represents the entire sequence of values that a
loop produces allows us to represent that two loops compute the same
sequence of values with a single equality between two $\theta$ nodes.

\PEGs are well-suited for equality reasoning because all \PEG
operators, even those for branches and loops, are mathematical
functions with no side effects. As a result, \PEGs are
\emph{referentially transparent}, which allows us to perform the same
kind of equality reasoning that one is familiar with from
mathematics. Though \PEGs are related to functional programs, in our
work we have used \PEGs to represent intra-procedural imperative code
with branches and looping constructs. Furthermore, even though all
\PEG operators are pure, \PEGs can still represent programs with state
by using heap summary nodes: stateful operations, such as heap reads
and writes, can take a heap as an argument and return a new heap. This
functional representation of stateful programs allows our \Peggy
compiler to use \PEGs to reason about Java programs. The heap summary
node can also be used to encode method/function calls in an
intra-procedural setting by simply threading the heap summary node
through special nodes representing method/function calls. There is
however one big challenge with heap summary nodes, which we have not
yet fully addressed yet. Although in the \PEG domain, heap summary
nodes can be reasoned about as if the heap can be duplicated, when a
\PEG is converted back to a CFG, heap summary nodes must obey a
linear-typing discipline. We have developed a simple constraint
solving technique for finding a linearization of heap operations in a
\PEG, but this technique is not complete (as in, even if there is a
linearization, we are not guaranteed to find it). In our \Peggy
implementation, after optimizations have been applied, this
incompleteness affects 3\% of Java methods (in which case we do not
optimize the method). Section~\ref{sec:implementation} explains in
more detail how we represent several features of Java programs in
\PEGs (including the heap and method calls) and what the issues are
with our linearization incompleteness. We also present some ideas on
how to address this incompleteness in future work.

\subsection{Encoding equalities using \EPEGs}
\label{sec:reasoning-with-a-veg}

A \PEG by itself can only represent a single way of expressing the
input program. To represent \emph{multiple} optimized versions of the
input program, we need to encode equalities in our representation. To
this end, an \EPEG is a graph that groups together \PEG nodes that are
equal into equivalence classes. As an example,
Figure~\ref{fig:strength-reduction-veg}(b) shows the \EPEG that our
engine produces from the \PEG of
Figure~\ref{fig:strength-reduction-veg}(a). We display equalities
graphically by adding a dashed edge between two nodes that have become
equal. These dashed edges are only a visualization mechanism. In
reality, \PEG nodes that are equal are grouped together into an
equivalence class.

Reasoning in an \EPEG is done through the application of
optimizations, which in our approach take the form of equality
analyses that add equality information to the \EPEG. An equality
analysis consists of two components: a trigger, which is an expression
pattern stating the kinds of expressions that the analysis is
interested in, and a callback function, which should be invoked when
the trigger pattern is found in the \EPEG. The saturation engine
continuously monitors all the triggers simultaneously, and invokes the
necessary callbacks when triggers match. When invoked, a callback
function adds the appropriate equalities to the \EPEG.

The simplest form of equality analysis consists of instantiating
axioms such as $a*0 = 0$. In this case, the trigger would be $a*0$,
and the callback function would add the equality $a*0=0$. Even though
the vast majority of our reasoning is done through such declarative
axiom application, our trigger and callback mechanism is much more
general, and has allowed us to implement equality analyses such as
inlining, tail-recursion elimination, and constant folding.

The following three axioms are the equality analyses required to perform
loop-induction-variable strength reduction. They state that
multiplication distributes over addition, $\theta$, and $\phi$:
\begin{align}
(a + b) * m & = a * m + b * m \label{mult-distr-plus} \\
\theta(a,b) * m & = \theta(a * m, b * m) \label{mult-distr-theta} \\
\phi(a,b,c) * m& = \phi(a, b * m, c * m) \label{mult-distr-phi} 
\end{align}

After a program is converted to a \PEG, a saturation engine repeatedly
applies equality analyses until either no more equalities can be
added, or a bound is reached on the number of expressions that have
been processed by the engine. As Section~\ref{sec:eval} will describe
in more detail, our experiments show that \completionrate\% of
methods can be completely saturated, without any bounds being imposed.

Figure~\ref{fig:strength-reduction-veg}(b) shows the
saturated \EPEG that results from applying the above distributivity
axioms, along with a simple constant folding equality analysis. In
particular, distributivity is applied four times:
axiom~\eqref{mult-distr-theta} adds equality edge A,
axiom~\eqref{mult-distr-phi} edge B, axiom~\eqref{mult-distr-plus}
edge C, and axiom~\eqref{mult-distr-plus} edge D. Our engine also
applies the constant folding equality analysis to show that $0 * 5 =
0$, $3 * 5 = 15$ and $1 * 5 = 5$. Note that when
axiom~\eqref{mult-distr-theta} adds edge A, it also adds node 7, which
then enables axiom~\eqref{mult-distr-phi}. Thus, equality analyses
essentially communicate with each other by propagating equalities
through the \EPEG. Furthermore, note that the instantiation of
axiom~\eqref{mult-distr-plus} adds node 12 to the \EPEG, but it does
not add the right child of node 12, namely $\theta(\ldots) * 5$,
because it is already represented in the \EPEG.

Once saturated with equalities, an \EPEG compactly represents multiple
optimized versions of the input program -- in fact, it compactly
represents all the programs that could result from applying the
optimizations in any order to the input program. For example, the
\EPEG in Figure~\ref{fig:strength-reduction-veg}(b) encodes 128 ways
of expressing the original program (because it encodes 7 independent
equalities, namely the 7 dashed edges). In general, a single \EPEG can
efficiently represent exponentially many ways of expressing the input
program.

After saturation, a global profitability heuristic can pick which
optimized version of the input program is best. Because this
profitability heuristic can inspect the entire \EPEG at once, it has a global
view of the programs produced by various optimizations, \emph{after}
all other optimizations were also run. In our example, starting at
node 1, by choosing nodes 6, 8, 10, and 12, we can construct the graph
in Figure~\ref{fig:strength-reduction-veg}(c), which corresponds
exactly to performing loop-induction-variable strength reduction in
Figure~\ref{fig:strength-reduction-code}(b).

More generally, when optimizing an entire function, one has to pick a
node for the equivalence class of the return value and nodes for all
equivalence classes that the return value depends on. There are many
plausible heuristics for choosing nodes in an \EPEG. In our \Peggy
implementation, we have chosen to select nodes using a Pseudo-Boolean
solver, which is an Integer Linear Programming solver where variables
are constrained to 0 or 1. In particular, we use a Pseudo-Boolean
solver and a static cost model for every node to compute the
lowest-cost program that is encoded in the \EPEG. In the example from
Figure~\ref{fig:strength-reduction-veg}, the Pseudo-Boolean solver
picks the nodes described above. Section~\ref{sec:heuristic} describes
our technique for selecting nodes in more detail.\vfill\eject

\subsection{Benefits of our approach}

\mypara{Optimization order does not matter.} To understand how our
approach addresses the phase ordering problem, consider a simple
peephole optimization that transforms \verb-i * 5- into
\verb-i << 2 + i-. On the surface, one may think that this
transformation should always be performed if it is applicable -- after
all, it replaces a multiplication with the much cheaper shift and
add. In reality, however, this peephole optimization may disable other
more profitable transformations. The code from
Figure~\ref{fig:strength-reduction-code}(a) is such an example:
transforming \verb-i * 5- to \verb-i << 2 + i- disables
loop-induction-variable strength reduction, and therefore
generates code that is worse than the one from
Figure~\ref{fig:strength-reduction-code}(b).

The above example illustrates the ubiquitous \emph{phase ordering
problem}. In systems that apply optimizations sequentially, the
quality of the generated code depends on the order in which
optimizations are applied. Whitfield and Soffa~\cite{WhitfieldSoffa97}
have shown experimentally that enabling and disabling interactions
between optimizations occur frequently in practice, and furthermore
that the patterns of interaction vary not only from program to
program, but also within a single program. Thus, no one order is best
across all compilation.

A common partial solution consists of carefully considering all the
possible interactions between optimizations, possibly with the help of
automated tools, and then coming up with a carefully tuned sequence
for running optimizations that strives to enable most of the
beneficial interactions.  This technique, however, puts a heavy burden
on the compiler writer, and it also does not account for the fact that
the best order may vary between programs.

At high levels of optimizations, some compilers may even run
optimizations in a loop until no more changes can be made. Even so, if
the compiler picks the wrong optimization to start with, then no
matter what optimizations are applied later, in any order, any number
of times, the compiler will not be able to reverse the disabling
consequences of the first optimization.

In our approach, the compiler writer does not need to worry about the
order in which optimizations are applied. The previous peephole
optimization would be expressed as the axiom
\verb-i * 5 = i << 2 + i-. However, unlike in a traditional
compilation system, applying this axiom in our approach does not
remove the original program from the representation --- it only adds
information --- and so it cannot disable other
optimizations. Therefore, the code from
Figure~\ref{fig:strength-reduction-code}(b) would still be discovered,
even if the peephole optimization was run first. In essence, our
approach is able to simultaneously explore all possible sequences of
optimizations, while sharing work that is common across the various
sequences.

In addition to reducing the burden on compiler writers, removing the
need to think about optimization ordering has two additional
benefits. First, because optimizations interact freely with no regard
to order, our approach often ends up combining optimizations in
unanticipated ways, leading to surprisingly complicated optimizations
given how simple our equality analyses are --- Section~\ref{sec:loops}
gives such an example. Second, it makes it easier for end-user
programmers to add domain-specific axioms to the compiler, because
they don't have to think about where exactly in the compiler the axiom
should be run, and in what order relative to other optimizations.

\mypara{Global profitability heuristics.} Profitability heuristics in
traditional compilers tend to be local in nature, making it difficult to
take into account the effect of future optimizations. For example,
consider inlining. Although it is straightforward to estimate the
\emph{direct cost} of inlining (the code-size increase) and the
\emph{direct benefit} of inlining (the savings from removing the call
overhead), it is far more difficult to estimate the potentially larger
\emph{indirect benefit}, namely the additional optimization
opportunities that inlining exposes.

To see how inlining would affect our running example, consider again
the code from Figure~\ref{fig:strength-reduction-code}(a), but assume
that instead of \verb-use(i * 5)-, there was a call to a function
\verb-f-, and the use of \verb-i*5- occurred \emph{inside}
\verb-f-. If \verb-f- is sufficiently large, a traditional inliner
would not inline \verb-f-, because the code bloat would outweigh the
call-overhead savings. However, a traditional inliner would miss the
fact that it may still be worth inlining \verb-f-, despite its size,
because inlining would expose the opportunity for
loop-induction-variable strength reduction. One solution to this
problem consists of performing an \emph{inlining
trial}~\cite{inlining-trials}, where the compiler simulates the
inlining transformation, along with the effect of subsequent
optimizations, in order to decide whether or not to actually
inline. However, in the face of multiple inlining decisions (or more
generally multiple optimization decisions), there can be exponentially
many possible outcomes, each one of which has to be compiled
separately.

In our approach, on the other hand, inlining simply adds an equality
to the \EPEG stating that the call to a given function is equal to its
body instantiated with the actual arguments. The resulting \EPEG
simultaneously represents the program where inlining is performed and
where it is not. Subsequent optimizations then operate on both of
these programs at the same time.  More generally, our approach can
simultaneously explore exponentially many possibilities in parallel,
while sharing the work that is redundant across these various
possibilities. In the above example with inlining, once the \EPEG is
saturated, a global profitability heuristic can make a more informed
decision as to whether or not to pick the inlined version, since it
will be able to take into account the fact that inlining enabled
loop-induction-variable strength reduction.

\mypara{Translation Validation.} Unlike traditional compilation
frameworks, our approach can be used not only to optimize programs,
but also to establish equivalences between programs. In particular, if we
convert two programs into an \EPEG, and then saturate it with
equalities, then we can conclude that the two programs are equivalent
if they belong to the same equivalence class in the saturated
\EPEG. In this way, our approach can be used to perform translation
validation for any compiler (not just our own), by checking
that each function in the input program is equivalent to the
corresponding optimized function in the output program.

For example, our approach would be able to show that the two program
fragments from Figure~\ref{fig:strength-reduction-code} are
equivalent. Furthermore, it would also be able to validate a
compilation run in which \verb-i * 5 = i << 2 + i- was applied first
to Figure~\ref{fig:strength-reduction-code}(a). This shows that we are
able to perform translation validation regardless of what optimized
program our own profitability heuristic would choose.

Although our translation validation technique is intraprocedural, we
can use interprocedural equality analyses such as inlining to enable a
certain amount of interprocedural reasoning. This allows us to reason
about transformations like reordering function calls.

\section{Reasoning about loops}
\label{sec:loops}

This section shows how our approach can be used to reason across
nested loops. The example highlights the fact that a simple axiom set
can produce unanticipated optimizations which traditional
compilers would have to explicitly search for.

We start in Sections~\ref{sec:single-loop} and~\ref{sec:nested-loops}
by describing all \PEG constructs used to represent loops. We then
show in Section~\ref{sec:inter-loop-strength-reduction} how our
approach can perform an inter-loop strength reduction optimization.

\subsection{Single loop}
\label{sec:single-loop}

\begin{figure}[t]
\begin{center}
\includegraphics[width=6.0in]{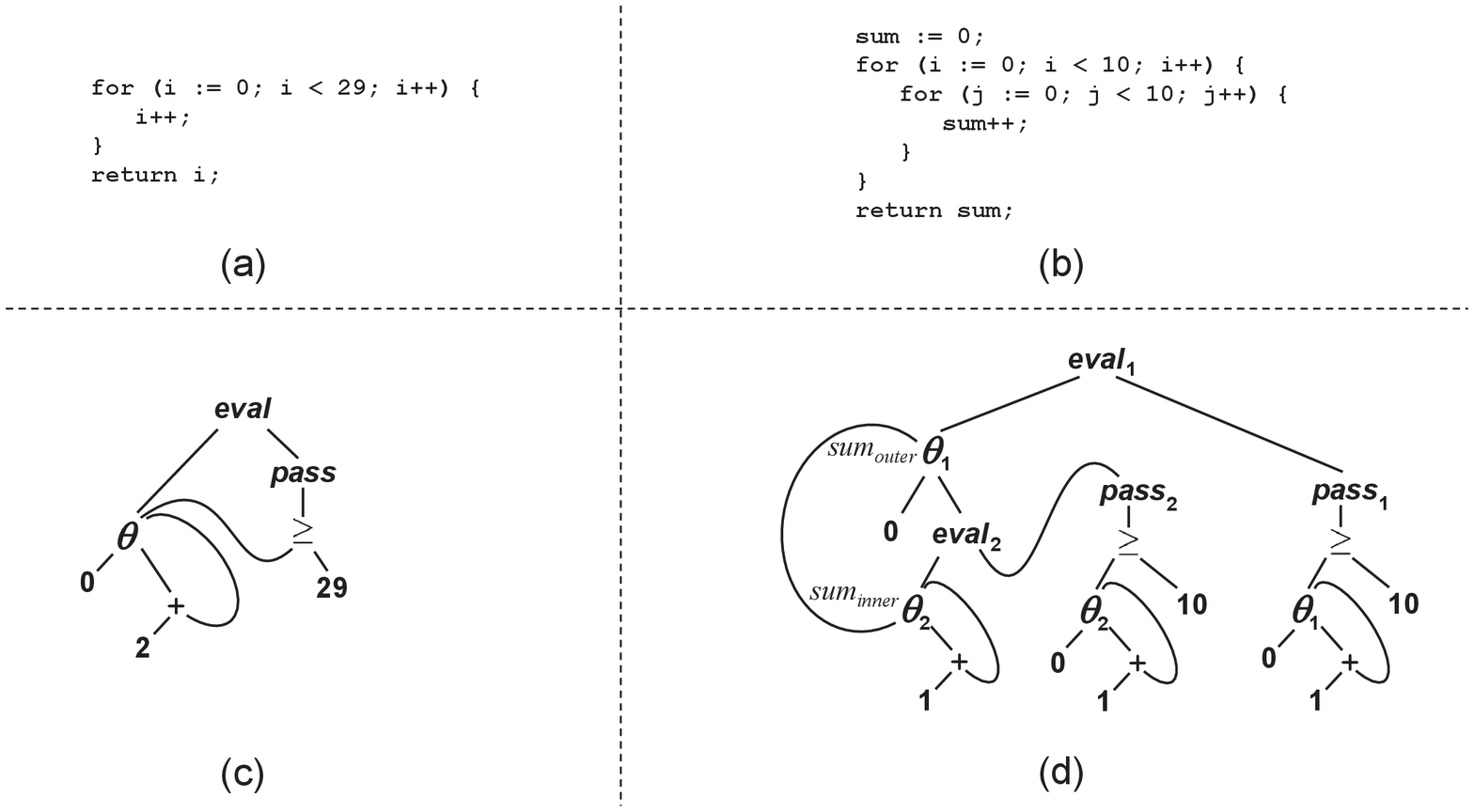}
\end{center}
\caption{Two loops and their \PEG representations.}
\label{fig:loop-single-double}
\end{figure}

Consider the simple loop from Figure~\ref{fig:loop-single-double}(a).
This loop iterates 15 times, incrementing the value of \verb-i- each time by 2. 
The final value of \verb-i- is then returned at the end of the function.
The \PEG for this code is shown in Figure~\ref{fig:loop-single-double}(c).
The value of \verb-i- inside the loop is represented by a $\theta$ node.
Intuitively, this $\theta$ node produces the sequence of values that \verb-i- takes throughout the loop, in this case $[0, 2, 4, \ldots]$.
The value of $\verb-i-$ after the loop is represented by the $\eval$ node at the top of the \PEG.
Given a sequence $s$ and an index $n$, $\eval(s,n)$ produces the $n^{\it th}$ element of sequence $s$.
To determine which element to select from a sequence, our \PEG representation uses $\pass$ nodes.
Given a sequence $s$ of booleans, $\pass(s)$ returns the index of the first element in the sequence that is true.
In our example, the $\geq$ node uses the result of the $\theta$ node to produce the sequence of values taken on by the boolean expression ${\tt i} \geq 29$ throughout the loop.
This sequence is then sent to $\pass$, which in this case produces the value $15$, since the $15^{\it th}$ value (counting from 0) of \verb-i- in the loop (which is $30$) is the first one to make ${\tt i} \geq 29$ true.
The $\eval$ node then selects the $15^{\it th}$ element of the sequence produced by the $\theta$ node, which is $30$.
In our previous example from Section~\ref{sec:overview}, we omitted $\eval$/$\pass$ from the \PEG for clarity -- because we were not interested in any of the values after the loop, the $\eval$/$\pass$ nodes would not have been used in any reasoning.

Note that every loop-varying value will be represented by its own
$\theta$ node, and so there will be one $\theta$ node in the \PEG per
live variable in the loop. Also, every variable that is live after the
loop has its own $\eval$ node, which represents the value after the
loop. However, there is only one $\pass$ node per loop, which
represents the iteration at which the loop terminates. Thus, there can
be many $\theta$ and $\eval$ nodes per loop, but only one $\pass$
node.

Since the $\eval$ and $\pass$ operators are often paired together, it
is natural to consider merging them into a single operator.  However,
we have found that the separation is useful.  For one, although there
will be many $\eval$ nodes corresponding to a single loop, each loop
has only one corresponding $\pass$ node.  Having this single node to
represent each loop is useful in many of the compilation stages for
\PEGs.  Also, $\pass$ nodes are not the only nodes we will use as the
second argument to an $\eval$ node. For example, to accomplish loop
peeling (as shown in Section~\ref{sec:loop-peeling}) we use $\phi$
nodes and other special-purpose nodes as the second
argument. Furthermore, Section~\ref{evalpass} will present a more
detailed reflection on our design choice after we have shown how the
$\eval$ and $\pass$ operators are used in our various compilation
stages.

\subsection{Nested loops}
\label{sec:nested-loops}

We now illustrate, through an example, how nested loops can be encoded in our \PEG representation.
Consider the code snippet from Figure~\ref{fig:loop-single-double}(b), which has two nested loops.
The \PEG for this code snippet is shown in Figure~\ref{fig:loop-single-double}(d).
Each $\theta$, $\eval$ and $\pass$ node is labeled with a subscript indicating what loop depth it operates on (we previously omitted these subscripts for clarity).
The topmost $\eval_1$ node represents the final value of \verb-sum-.
The node labeled $\suminner$ represents the value of \verb-sum- at the beginning of the inner loop body.
Similarly, $\sumouter$ is the value of \verb-sum- at the beginning of the outer loop body.
Looking at $\suminner$, we can see that: (1) on the first iteration (the left child of $\suminner$), $\suminner$ gets the value of \verb-sum- from the outer loop; (2) on other iterations, it gets one plus the value of \verb-sum- from the previous iteration of the inner loop.
Looking at $\sumouter$, we can see that: (1) on the first iteration, $\sumouter$ gets $0$; (2) on other iterations, it gets the value of \verb-sum-
right after the inner loop terminates.
The value of \verb-sum- after the inner loop terminates is computed using a similar $\eval$/$\pass$ pattern as in Figure~\ref{fig:loop-single-double}(c), as is the value of \verb-sum- after the outer loop terminates.

\subsection{Inter-loop strength reduction}
\label{sec:inter-loop-strength-reduction}

\begin{figure}[t]
\begin{center}
\includegraphics[width=6.0in]{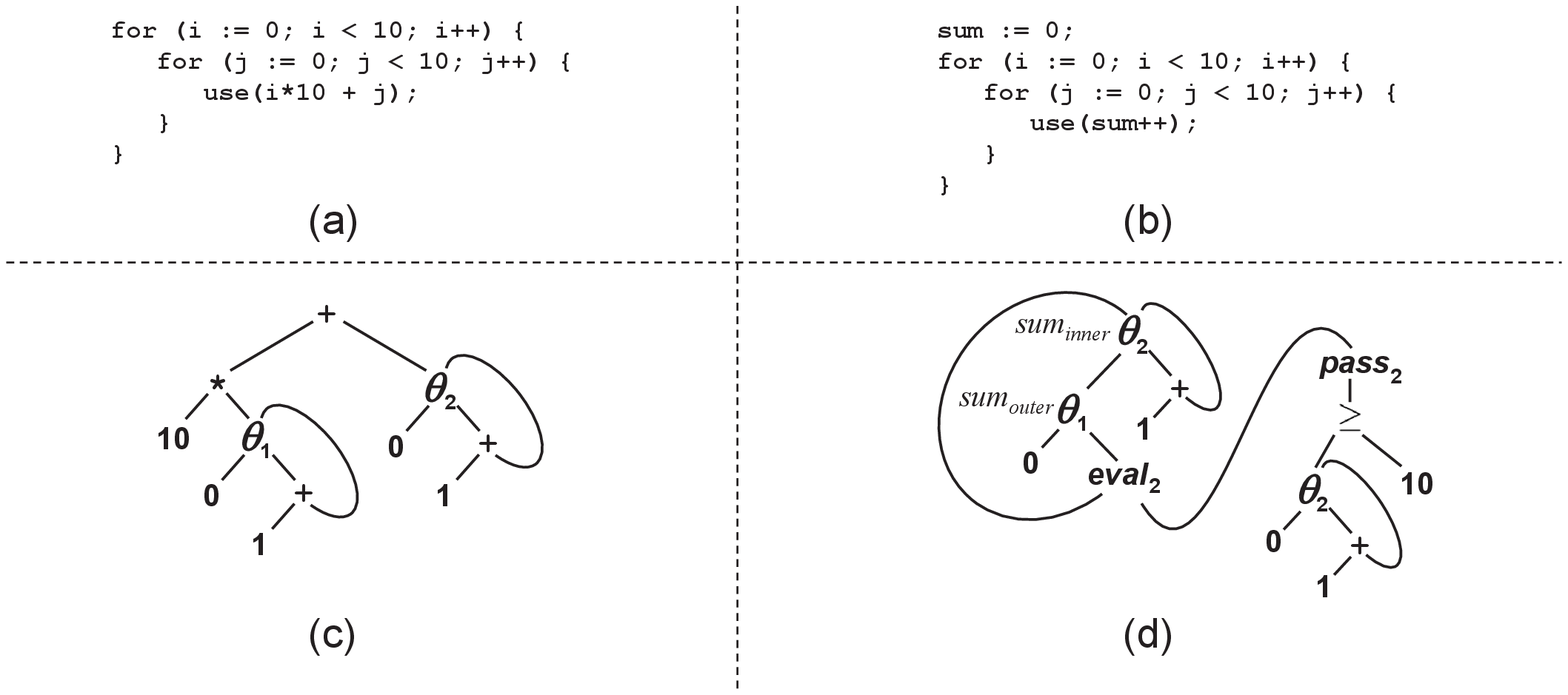} 
\end{center}
\caption{Two equivalent loops and their \PEG representations. The
  \PEGs for the expressions inside the {\tt use} statements in (a)
  and (b) are shown in (c) and (d), respectively.}
\label{fig:loop-nested-ilsr}
\end{figure}

Our approach allows an optimizing compiler to perform intricate optimizations of looping structures.
We present such an example here, with a kind of inter-loop strength reduction.
Consider the code snippets from Figure~\ref{fig:loop-nested-ilsr}(a) and (b).
The code in Figure~\ref{fig:loop-nested-ilsr}(b) is equivalent to the code in Figure~\ref{fig:loop-nested-ilsr}(a), but it is faster because \verb-sum++- is cheaper than $\verb-i- * 10 + j$.
We show how our approach can transform the code in Figure~\ref{fig:loop-nested-ilsr}(a) to the code in Figure~\ref{fig:loop-nested-ilsr}(b).

The \PEGs for the code from parts (a) and (b) are shown in parts (c)
and (d), respectively.  We do not show the entire \PEGs, but only the
parts that are relevant to the optimization -- namely the \PEGs for
the expressions inside the {\tt use} statements. More specifically,
Figure~\ref{fig:loop-nested-ilsr}(c) shows the \PEG for
\verb-i*10 + j-, which is the \PEG that our optimization will apply
to. The top-level \verb-+- node occurs in some larger \PEG context
which includes $\eval$ and $\pass$ nodes, but we do not show the
larger context (\textit{i.e.}: the parents of \verb-+-), because they
are not used in this example, except in one step that we will make
explicit. The result of the optimization, in \PEG form, is shown in
Figure~\ref{fig:loop-nested-ilsr}(d). This is the \PEG for the
\texttt{sum++} expression from Figure~\ref{fig:loop-nested-ilsr}(b).
Note that the code snippet in Figure~\ref{fig:loop-nested-ilsr}(b) is
the same as Figure~\ref{fig:loop-single-double}(b), and as a result
Figure~\ref{fig:loop-nested-ilsr}(d) is just the $\suminner$ node from
Figure~\ref{fig:loop-single-double}(d), along with its children. To summarize,
in terms of \PEGs, our optimization will replace the \verb-+- node
from Figure~\ref{fig:loop-nested-ilsr}(c), which occurs in some larger
\PEG context, with the $\suminner$ node from
Figure~\ref{fig:loop-nested-ilsr}(d). The surrounding \PEG context,
which we do not show, remains unchanged.

\begin{figure}[!t]
\begin{center}
\includegraphics[width=3.5in]{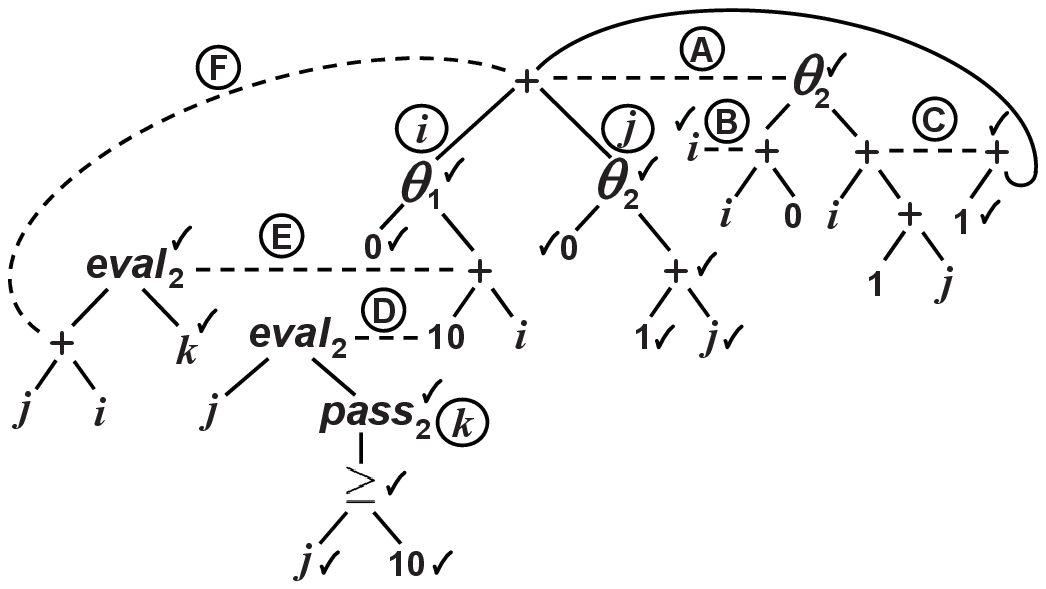}
\end{center}
\caption{
\EPEG that results from running the saturation engine on the
  \PEG from Figure~\ref{fig:loop-nested-ilsr}(c). By picking the nodes
  that are checkmarked, we get the \PEG from
  Figure~\ref{fig:loop-nested-ilsr}(d).  To make the graph more
  readable, we sometimes label nodes, and then connect an edge
  directly to a label name, rather than connecting it to the node with
  that label. For example, consider node $j$ in the \EPEG, which reads
  as $\theta_2(0, 1 + j)$. Rather than explicitly drawing an edge from
  $+$ to $j$, we connect $+$ to a new copy of label $j$. 
}
\label{fig:inter-loop-strength-reduction}
\end{figure}

Figure~\ref{fig:inter-loop-strength-reduction} shows the saturated
\EPEG that results from running the saturation engine on the \PEG from
Figure~\ref{fig:loop-nested-ilsr}(c). The checkmarks indicate which
nodes will eventually be selected -- they can be ignored for now. In
drawing Figure~\ref{fig:inter-loop-strength-reduction}, we have
already performed loop-induction variable strength reduction on the
left child of the topmost $+$ from
Figure~\ref{fig:loop-nested-ilsr}(c). In particular, this left child
has been replaced with a new node $i$, where $i =
\theta_1(0,10+i)$. We skip the steps in doing this because they are
similar to the ones described in
Section~\ref{sec:reasoning-with-a-veg}.

Figure~\ref{fig:inter-loop-strength-reduction} shows the relevant
equalities that our saturation engine would add. We describe each in
turn.
\begin{enumerate}[$\bullet$]
\item Edge A is added by distributing $+$ over $\theta_2$:
  $$i + \theta_2(0, 1 + j) = \theta_2(i+0, i + (1 + j))$$

\item Edge B is added because $0$ is the identity of $+$:
  $$i + 0 = i$$

\item Edge C is added because addition is associative and commutative:
  $$i + (1 + j) = 1 + (i + j)$$

\item Edge D is added because $0$, incremented $n$ times, produces $n$:
$$\eval_\ell(id_\ell, \pass_\ell(id_\ell \geq n)) = n \mbox{ where }id_\ell =
  \theta_\ell(0, 1 + id_\ell)$$
This is an example of a loop optimization expressible as a simple \PEG axiom.

\item Edge E is added by distributing $+$ over the first
  child of $\eval_2$:
  $$\eval_2(j, k) + i = \eval_2(j + i, k)$$

\item Edge F is added because addition is commutative:
  $$j + i = i + j$$
\end{enumerate}

We use checkmarks in Figure~\ref{fig:inter-loop-strength-reduction} to
highlight the nodes that \Peggy would select using its Pseudo-Boolean
profitability heuristic. These nodes constitute exactly the \PEG from
Figure~\ref{fig:loop-nested-ilsr}(d), meaning that \Peggy optimizes
the code in Figure~\ref{fig:loop-nested-ilsr}(a) to the one in
Figure~\ref{fig:loop-nested-ilsr}(b).

\mypara{Summary.} This example illustrates several
points. First, it shows how a transformation that locally seems
undesirable, namely transforming the constant 10 into an expensive
loop (edge D), in the end leads to much better code. Our global
profitability heuristic is perfectly suited for taking advantage of
these situations. Second, it shows an example of an
\emph{unanticipated optimization}, namely an optimization that we did
not realize would fall out from the simple equality analyses we
already had in place. In a traditional compilation system, a
specialized analysis would be required to perform this optimization,
whereas in our approach the optimization simply happens without any
special casing. In this way, our approach essentially allows a few
general equality analyses to do the work of many specialized
transformations.  Finally, it shows how our approach is able to reason
about complex loop interactions, something that is beyond the
reach of current super-optimizer-based techniques.

\section{Local changes have non-local effects}
\label{sec:local-nonlocal}

The axioms we apply during our saturation phase tend to be simple and
local in nature. It is therefore natural to ask how such axioms can
perform anything more than peephole optimizations. The examples shown
so far have already given a flavor of how local reasoning on a \PEG
can lead to complex optimizations. In this section, we show additional
examples of how \Peggy is capable of making significant changes in the
program using its purely local reasoning. We particularly emphasize
how local changes in the \PEG representation can lead to large changes
in the CFG of the program. We conclude the section by describing some
loop optimizations that we have not fully explored using \PEGs, and
which could pose additional challenges.

\subsection{Loop-based code motion}
\label{sec:loop-based-code-motion}

\begin{figure}[t]
  \begin{center}
    \includegraphics[width=4.0in]{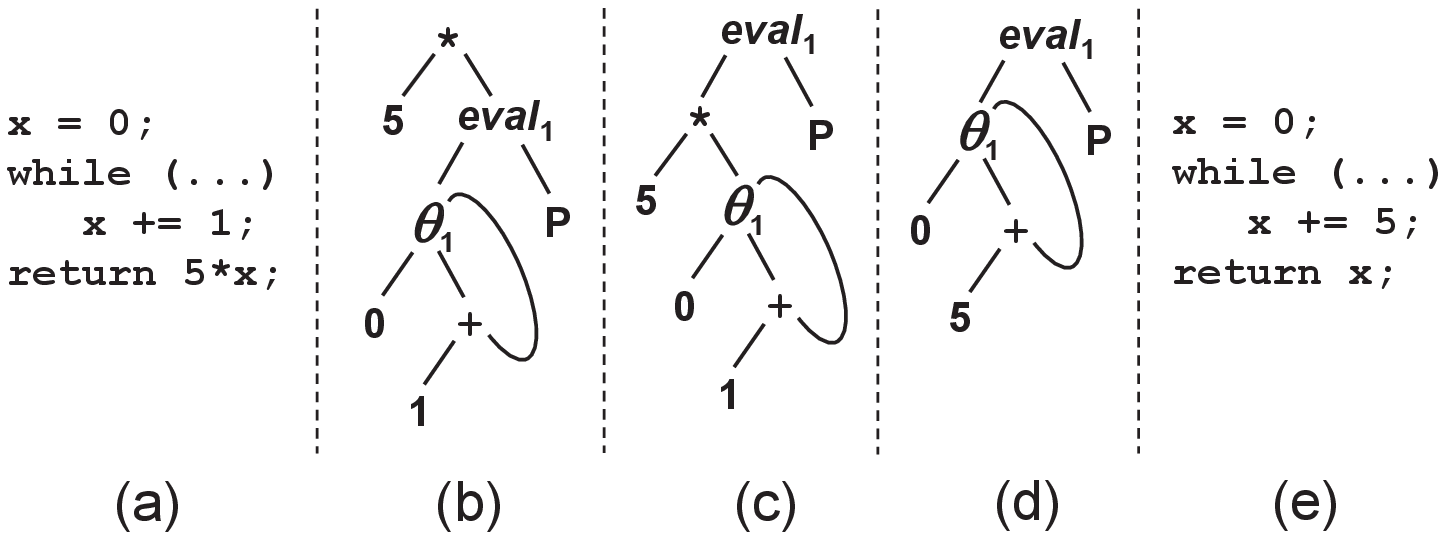}
  \end{center}
  \caption{An example of loop-based code motion from simple axiom
    applications; 
    (a) the original source code, 
    (b) the original \PEG,
    (c) the \PEG after distributing $*$ through $\eval_1$,
    (d) the \PEG after performing loop-induction-variable strength reduction,
    (e) the resulting source code.
  }
  \label{fig:loop-code-motion}
\end{figure}

We start with an example showing how \Peggy can use simple local
axioms to achieve code motion through a loop. Consider the program in
Figure~\ref{fig:loop-code-motion}. Part (a) shows the source code for
a loop where the counter variable is multiplied by 5 at the end, and
part (e) shows equivalent code where the multiplication is removed and
the increment has been changed to 5. Essentially, this optimization
moves the $(*5)$ from the end of the loop and applies it to the
increment and the initial value instead. This constitutes code motion
into a loop, and is a non-local transformation in the CFG.

\Peggy can perform this optimization using local axiom applications,
without requiring any additional non-local
reasoning. Figure~\ref{fig:loop-code-motion}(b) shows the \PEG for the
expression \verb-5*x- in the code from part (a). Parts (c) and (d)
show the relevant pieces of the \EPEG used to optimize this
program. The \PEG in part (c) is the result of distributing
multiplication through the $\eval$ node. The \PEG in part (d) is the
result of applying loop-induction-variable strength reduction to part
(c) (the intermediate steps are omitted for brevity since they are
similar to the earlier example from
Section~\ref{sec:overview}). Finally, the code in part (e) is
equivalent to the \PEG in part (d).

Our mathematical representation of loops is what makes this
optimization so simple. Essentially, when an operator distributes
through $\eval$ (a local transformation in the \PEG), it enters the
loop (leading to code motion). Once inside the loop, distributing it
through $\theta$ makes it apply separately to the initial value and
the inductive value. Then, if there are axioms to simplify those two
expressions, an optimization may result. This is exactly what happened
to the multiply node in the example. In this case, only a simple
operation $(*5)$ was moved into the loop, but the same set of axioms
would allow more complex operations to do the same, using the same
local reasoning.

\subsection{Restructuring the CFG}
\label{sec:restructuring-the-cfg}

In addition to allowing non-local optimizations, small changes in the
\PEG can cause large changes in the program's CFG. Consider the
program in Figure~\ref{fig:phi-on-phi}. Parts (a) and (f) show two CFGs
that are equivalent but have very different structure.  \Peggy can use
several local axiom applications to achieve this same
restructuring. Figure~\ref{fig:phi-on-phi}(b) shows the \PEG version
of the original CFG, and parts (c)-(e) show the relevant portions of
the \EPEG used to optimize it. Part (c) results from distributing the
multiply operator through the left-hand $\phi$ node. Similarly, part
(d) results from distributing each of the two multiply operators
through the bottom $\phi$ node. Part (e) is simply the result of
constant folding, and is equivalent to the CFG in part (f).

\begin{figure}[t]
  \begin{center}
    \includegraphics[width=6.0in]{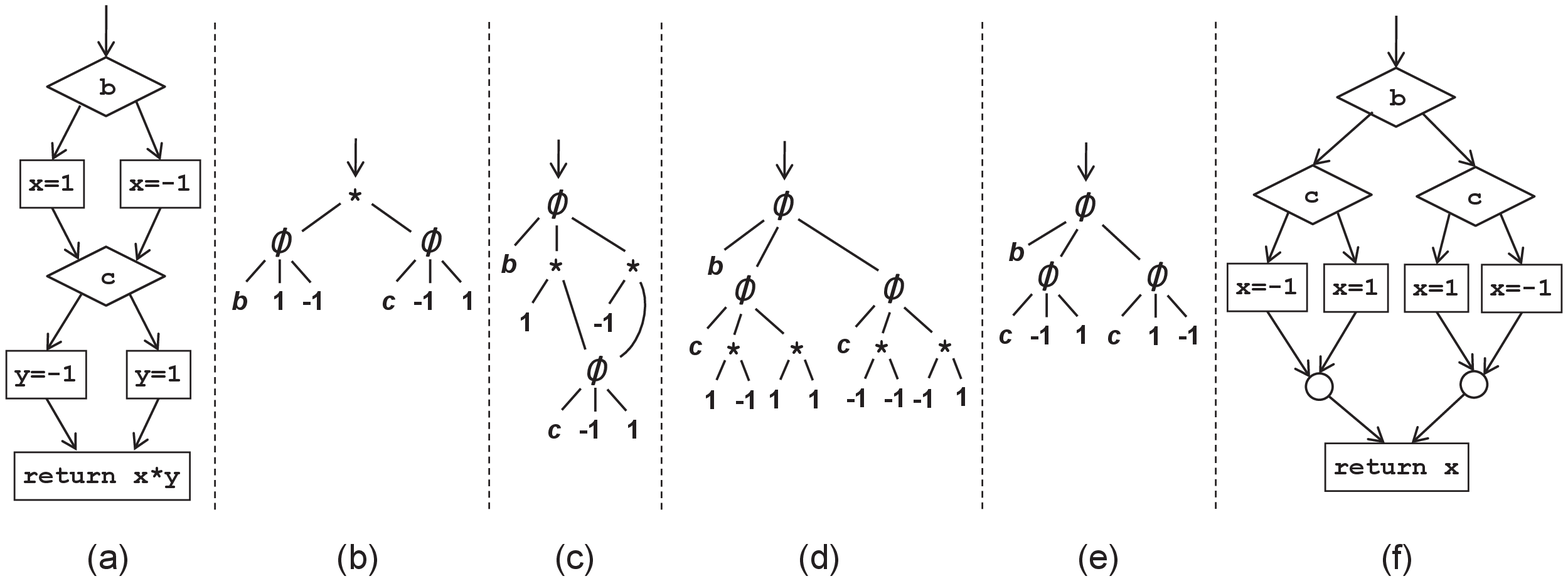}
  \end{center}
  \caption{
    An example of how local changes in the \PEG can cause large changes in the CFG:
    (a) the original CFG,
    (b) the original \PEG,
    (c) the \PEG after distributing $*$ through the left-hand $\phi$,
    (d) the \PEG after distributing $*$ through the bottom $\phi$,
    (e) the \PEG after constant folding,
    (f) the resulting CFG.
  }
  \label{fig:phi-on-phi}
\end{figure}

By simply using the local reasoning of distributing multiplications
through $\phi$ nodes, we have radically altered the branching
structure of the corresponding CFG. This illustrates how small, local
changes to the \PEG representation can have large, far-reaching
effects on the program.

\subsection{Loop Peeling}
\label{sec:loop-peeling}

\begin{figure}[t]
  \begin{center}
    \includegraphics[width=6.0in]{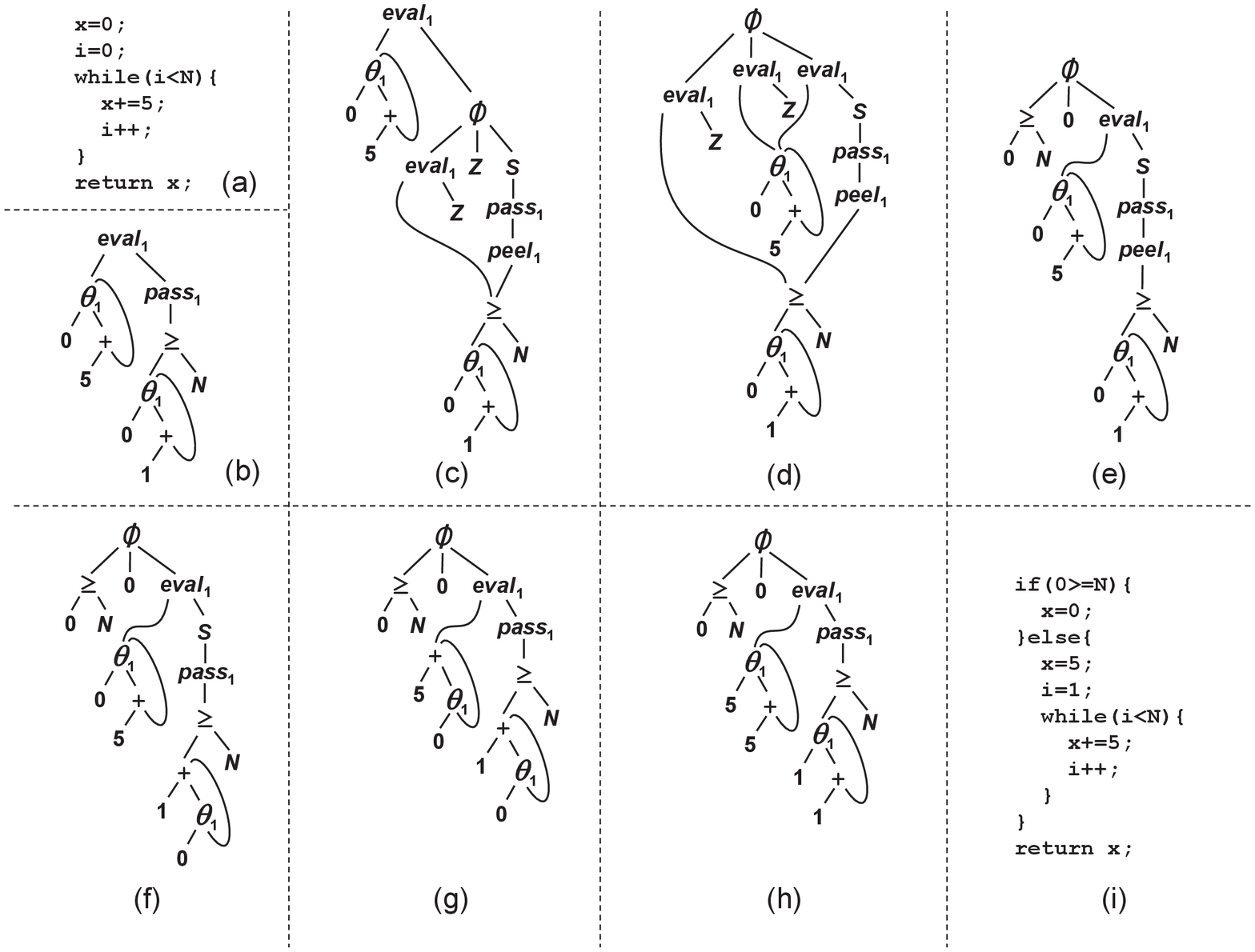}
  \end{center}
  \caption{ 
    An example of axiom-based loop peeling: 
    (a) the original loop, 
    (b) the \PEG for part (a),
    (c)-(h) intermediate steps of the optimization,
    (i) the final peeled loop, which is equivalent to (h).
  }
  \label{fig:loop-peeling}
\end{figure}

Here we present an in-depth example to show how loop peeling is
achieved using equality saturation. Loop peeling essentially takes the
first iteration from a loop and places it before the loop. Using very
simple, general-purpose axioms, we can peel a loop of any type and
produce code that only executes the peeled loop when the original
would have iterated at least once. Furthermore, the peeled loop will
also be a candidate for additional peeling.

Consider the source code in Figure~\ref{fig:loop-peeling}(a). We want
to perform a loop peeling on this code, which will result in the code
shown in Figure~\ref{fig:loop-peeling}(i). This can be done through
axiom application through the following steps, depicted in
Figure~\ref{fig:loop-peeling} parts (c) through (h).  

Starting from the \PEG for the original code, shown in part (b), the
first step transforms the $\pass_1$ node using the axiom
$\pass_1(C) = \phi(\eval_1(C,\Z),\Z,\S(\pass_1(\peel_1(C))))$, yielding the \PEG in part (c). In this axiom, $\Z$ is the zero iteration
count value, $\S$ is a function that takes an iteration count and
returns its successor (i.e. $S = \lambda x.x+1$), and $\peel$ takes a
sequence and strips off the first element (i.e. $\peel(C)[i] =
C[i+1]$). This axiom is essentially saying that the iteration where a
loop stops is equal to one plus where it would stop if you peeled off
the first iteration, but only if the loop was going to run at least
one iteration.

The second step, depicted in part (d), involves distributing the
topmost $\eval_1$ through the $\phi$ node using the axiom
$\op(\phi(A,B,C),D) = \phi(A,\op(B,D),\op(C,D))$. Note that $\op$ only
distributes on the second and third children of the $\phi$ node,
because the first child is the condition.

The third step, shown in part (e), is the result of propagating the
two $\eval_1(\cdot,\Z)$ expressions downward, using the axiom
$\eval_1(\op(a_1,\ldots,a_k),\Z) =
\op(\eval_1(a_1,\Z),\ldots,\eval_1(a_k,\Z))$ when $\op$ is a domain
operator, such as $+, *,$ or $S$. When the $\eval$ meets a $\theta$, it
simplifies using the following axiom: $\eval_1(\theta_1(A,B),\Z) =
A$. Furthermore, we also use the axiom that $\eval_1(C,\Z) = C$ for any
constant or parameter $C$, which is why $\eval_1(\texttt{N},\Z) =
\texttt{N}$.

The fourth step, shown in part (f), involves propagating the $\peel_1$
operator downward, using the axiom $\peel_1(\op(a_1,\ldots,a_k)) =
\op(\peel_1(a_1),\ldots,\peel_1(a_k))$ when $\op$ is a domain
operator. When the $\peel$ operator meets a $\theta$, it simplifies
with the axiom \hbox{$\peel_1(\theta_1(A,B)) = B$}. Furthermore, we also use
the axiom that $\peel_1(C) = C$ for any constant or parameter $C$,
which is why $\peel_1(\texttt{N}) = \texttt{N}$.

The fifth step, shown in part (g), involves removing the $\S$ node
using the axiom $\eval_1(\theta_1(A,B),\S(C)) = \eval_1(B,C)$.

The final step (which is not strictly necessary, as the peeling is
complete at this point) involves distributing the two plus operators
through their $\theta$'s and doing constant folding afterward, to
yield the \PEG in part (h). This \PEG is equivalent to the final
peeled source code in part (i).

It is interesting to see that this version of loop peeling includes
the conditional test to make sure that the original loop would iterate
at least once, before executing the peeled loop. Another way to
implement loop peeling is to exclude this test, opting only to peel
when the analysis can determine statically that the loop will always
have at least one iteration. This limits peeling to certain types of
loops, those with guards that fit a certain pattern. This can both
increase the analysis complexity and reduce the applicability of the
optimization. In the \PEG-based loop peeling, not only do we use the
more applicable version of peeling, but the loop guard expression is
immaterial to the optimization. 

The resulting \PEG shown in Figure~\ref{fig:loop-peeling}(h) is
automatically a candidate for another peeling, since the original
axiom on $\pass$ can apply again. Since we separate our profitability
heuristic from the saturation engine, \Peggy may attempt any number of
peelings. After saturation has completed, the global profitability
heuristic will determine which version of the \PEG is best, and hence
what degree of peeling yields the best result.

\subsection{Branch Hoisting}
\label{sec:branch-hoisting}

\begin{figure}[t]
  \begin{center}
    \includegraphics[width=6.0in]{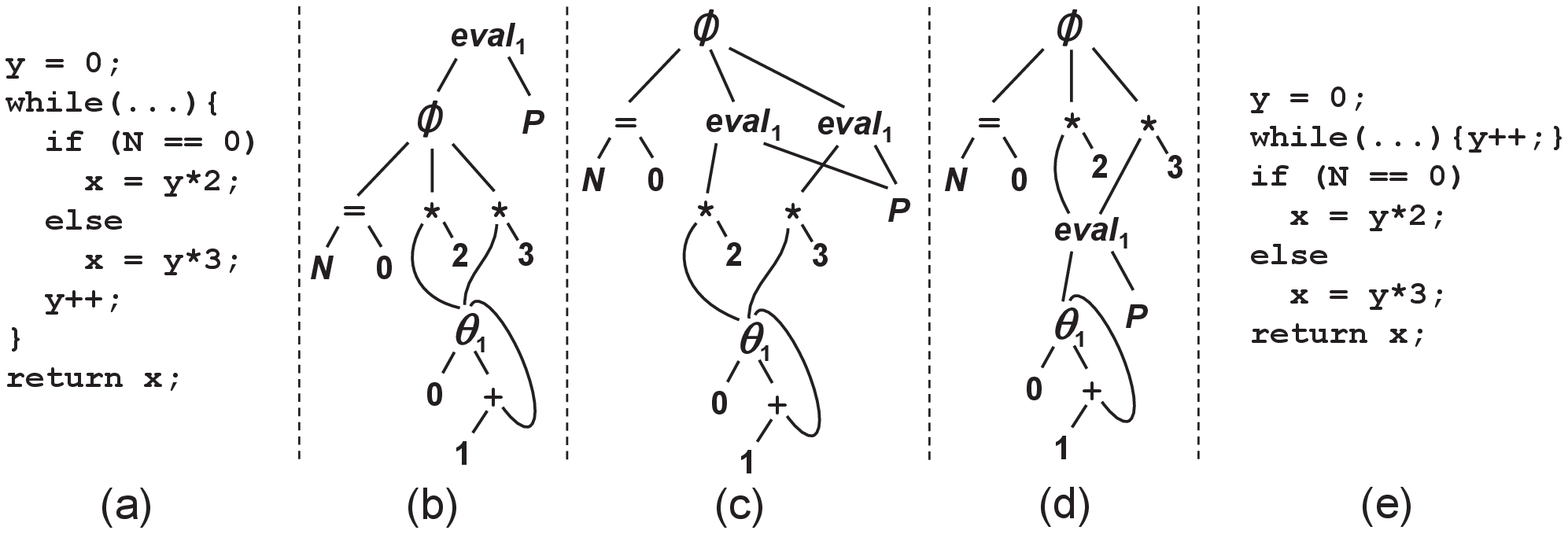}
  \end{center}
  \caption{ 
    An example of branch hoisting:
    (a) the original program,
    (b) the \PEG for part (a),
    (c) the \PEG after distributing $\eval$ through $\phi$,
    (d) the \PEG after distributing $\eval$ through $*$,
    (e) the code resulting from (d).
  }
  \label{fig:branch-hoisting}
\end{figure}

We now examine an example of branch hoisting, where a conditional
branch is moved from inside the loop to after the loop. This is
possible when the condition of the branch is loop-invariant, and hence
is not affected by the loop it's in. This is another example of code
motion, and is an optimization because the evaluation of the branch no
longer happens multiple times inside the loop, but only once at the
end.

Consider the code in Figure~\ref{fig:branch-hoisting}(a). We assume
that \texttt{N} is a parameter or a variable initialized elsewhere,
and is clearly not altered inside the loop. Hence the condition on the
if-statement is loop-invariant. Also we see that \texttt{x} is never
read inside the loop, so the value it holds at the end of the loop can
be expressed entirely in terms of the final values of the other
variables (i.e. \texttt{y}). Hence, this code is equivalent to the
code seen in part (e), where the branch is moved outside the loop and
\texttt{x} is assigned once, using only the final value of \texttt{y}.

Our saturation engine can perform this optimization using simple
axioms, starting with the \PEG shown in part~(b) corresponding
to the code in part~(a). In part~(b), we display the pass condition as
$P$, since we never need to reason about it. Parts~(c)~and~(d) depict
the relevant intermediate steps in the optimization. Part~(c) results
from distributing the $\eval$ operator through the $\phi$ operator
using the axiom \hbox{$\op(\phi(A,B,C),D) = \phi(A,\op(B,D),\op(C,D))$} with $\op = \eval_1$. Part~(d) comes from
distributing the two $\eval$ nodes through the multiplication
operator, using the axiom \hbox{$\eval_1(\op(A,B),P) =
\op(\eval_1(A,P),\eval_1(B,P))$} where $\op$ is any domain operator. Part~(e) is the final code, which is equivalent to the \PEG in part~(d).

Our semantics for $\phi$ nodes allows the $\eval$ to distribute
through them, and hence the loop moves inside the conditional in one
axiom. Since we can further factor the $*$'s out of the $\eval$'s, all
of the loop-based operations are joined at the ``bottom'' of the \PEG,
which essentially means that they are at the beginning of the
program. Here we again see how a few simple axioms can work together
to perform a quite complex optimization that involves radical
restructuring of the program.

\subsection{Limitations of \PEGs}
\label{sec:limitations}

The above examples show how local changes to a \PEG lead to non-local
changes to the CFG. There are however certain kinds of more advanced
loop optimizations that we have not yet fully explored. Although we
believe that these optimizations could be handled with equality
saturation, we have not worked out the full details, and there could
be additional challenges in making these optimizations work in
practice. One such optimization would be to fuse loops from different
nesting levels into a single loop. For example, in the inter-loop
strength reduction example from Section~\ref{sec:loops}, the ideal
output would be a single loop that increments the \verb-sum-
variable. One option for doing this kind of optimization is to add
build-in axioms for fusing these kinds of loops together into
one. Another optimization that we have not fully explored is loop
unrolling. By adding a few additional higher-level operators to our
\PEGs, we were able to perform loop unrolling on paper using just
equational reasoning. Furthermore, using similar higher-level
operators, we believe that we could also perform loop interchange
(which changes a loop \verb-for i in R1, for j in R2- into
\verb-for j in R2 for i in R1-). However, both of these optimizations
do require adding new operators to the \PEG, which would require
carefully formalizing their semantics and axioms that govern
them. Finally, these more sophisticated loop optimizations would also
require a more sophisticated cost model. In particular, because our
current cost model does not take into account loop bounds (only loop
depth), it has only a coarse approximation of the number of times a
loop executes. As a result, it would assign the same cost to the loop
before and after interchange, and it would assign a higher cost to an
unrolled loop than the original. For our cost model to see these
optimizations as profitable, we would have to update it with more
precise information about loop bounds, and a more precise modeling of
various architectural effects like caching and scheduling. We leave
all of these explorations to future work.

\section{Formalization of our Approach}
\label{sec:formal}

\begin{figure}
\begin{algorithmic}[1]
\declarefunction{\Optimize(\cfg:\CFG):\CFG}
  \STATE {\bf let } $\ir = \CfgToIr(\cfg)$
  \STATE {\bf let } $\satir = \Saturate(\ir,A)$
  \STATE {\bf let } $\best = \SelectBest(\satir)$
  \STATE {\bf return } $\IrToCfg(\best)$
\thickstraightline
\end{algorithmic}
\caption{Optimization phase in our approach. We assume a global set $A$ of
  equality analyses to be run.}
\label{fig:optimize}
\end{figure}

Having given an intuition of how our approach works through examples,
we now move to a formal description. Figure~\ref{fig:optimize} shows
the $\Optimize$ function, which embodies our approach. $\Optimize$
takes four steps: first, it converts the input CFG into an internal
representation of the program; second, it saturates this internal
representation with equalities; third, it uses a global profitability
heuristic to select the best program from the saturated
representation; finally, it converts the selected program back to a
CFG.

An instantiation of our approach therefore consists of three
components: (1) an IR where equality reasoning is effective, along
with the translation functions $\CfgToIr$ and $\IrToCfg$, (2) a
saturation engine $\Saturate$, and (3) a global profitability
heuristic $\SelectBest$. Future sections will show how we instantiate
these three components in our \Peggy compiler.

\mypara{Saturation Engine.} The saturation engine $\Saturate$ infers
equalities by repeatedly running a set $A$ of equality analyses. Given
an equality analysis $a \in A$, we define $\eir_1 \atrans{a} \eir_2$
to mean that $\eir_1$ produces $\eir_2$ when the equality analysis $a$
runs and adds some equalities to $\eir_1$. If $a$ chooses not to add
any equalities, then $\eir_2$ is simply the same as $\eir_1$. Note
that $a$ is not required to be deterministic: given a single $\eir_1$,
there may be many $\eir_2$ such that $\eir_1 \atrans{a} \eir_2$. This
non-determinism gives equality analyses that are applicable in multiple
locations in the \EPEG the choice of where to apply. For example, the
distributivity of an operator could apply in many locations, and the
non-determinism allows the distributivity analysis the flexibility of
choosing which instances of distributivity to apply.  Note also that
at this point in the presentation, when saying $\eir_1 \atrans{a}
\eir_2$ we keep $\eir_1$ and $\eir_2$ as abstract representations of
an \EPEG (which includes a \PEG graph and a set of equalities over
\PEG nodes). Later in Section~\ref{sec:semantics} we will formally
define what \PEGs and \EPEGs are.

We define a partial order $\sqsubseteq$ on IRs, based on the nodes and
equalities they encode: $\eir_1 \sqsubseteq \eir_2$ iff the nodes in
$\eir_1$ are a subset of the nodes in $\eir_2$, and the equalities in
$\eir_1$ are a subset of the equalities in $\eir_2$. Immediately from
this definition, we get:
\begin{equation}
(\eir_1 \atrans{a} \eir_2) \Rightarrow \eir_1
\sqsubseteq \eir_2
\label{eq:simplesubset}
\end{equation}

We define an equality analysis $a$ to be monotonic iff:
\begin{equation}
(\eir_1 \sqsubseteq \eir_2) \wedge 
  (\eir_1 \atrans{a} \eir_1') \Rightarrow
  \exists \eir_2'. [(\eir_2 \atrans{a} \eir_2') \wedge (\eir_1' \sqsubseteq
  \eir_2')]
\label{eq:monotonicity}
\end{equation}
This basically states that if $a$ is able to apply to $\eir_1$ to
produce $\eir_1'$ and $\eir_1 \sqsubseteq \eir_2$, then there is a way
to apply $a$ on $\eir_2$ to get some $\eir_2'$ such that $\eir_1'
\sqsubseteq \eir_2'$

If $a$ is monotonic, properties~\eqref{eq:simplesubset}
and~\eqref{eq:monotonicity} immediately imply the following
property:
\begin{equation}
 (\eir_1 \atrans{a} \eir_1') \wedge (\eir_1 \atrans{b} \eir_2)
 \Rightarrow \exists \eir_2'.[(\eir_2 \atrans{a} \eir_2') \wedge
(\eir_1'\sqsubseteq \eir_2')]
\label{eq:non-interference}
\end{equation}
Intuitively, this simply states that applying an equality analysis $b$
before $a$ cannot make $a$ less effective.

We now define $\eir_1 \trans \eir_2$ as:
$$\eir_1 \trans \eir_2 \iff \exists a\in A~.~ (\eir_1 \atrans{a}
\eir_2 \wedge \eir_1 \neq \eir_2)$$ 
The $\trans$ relation formalizes one step taken by the saturation
engine. We also define $\trans^*$ to be the reflexive transitive
closure of $\trans$. The $\trans^*$ relation formalizes an entire run
of the saturation engine. We call a sequence $\eir_1 \atrans{a} \eir_2
\atrans{b} \ldots$ a trace through the saturation engine. We define
$\eir_2$ to be a normal form of $\eir_1$ if $\eir_1 \trans^* \eir_2$
and there is no $\eir_3$ such that $\eir_2 \trans \eir_3$.  It is
straightforward to show the following property:
\begin{equation}
\begin{array}{l}
\mbox{\textit{Given a set $A$ of monotonic equality analyses, 
if $\eir_2$ is a normal formal of $\eir_1$,}}\\
\mbox{\textit{then any other normal form of $\eir_1$ is equal to $\eir_2$.}}
\end{array}
\label{eq:normal-unique}
\end{equation}
In essence, property~\eqref{eq:normal-unique} states that if one trace
through the saturation engine leads to a normal form (and thus a
saturated IR), then any other trace that also leads to a normal form
results in the same saturated IR. In other words, if a given $\eir$
has a normal form, it is unique.

If the set $A$ of analyses makes the
saturation engine terminate on all inputs, then
property~\eqref{eq:normal-unique} implies that the engine is
convergent, meaning that every $\eir$ has a unique normal form.
In general, however, equality saturation may not terminate. For a
given $\eir$ there may not be a normal form, and even if there is a
normal form, some traces may not lead to it because they run forever.
Non-termination occurs when the saturation engine never runs out of
equality analyses that can match in the \EPEG and produce new nodes
and new equalities. For example, the axiom $A = (A+1)-1$ used in the
direction from left to right can be applied an unbounded number of
times, producing successively larger and larger expressions (\verb|x|,
\verb|(x+1)-1|, \verb|(((x+1)-1)+1)-1|, and so on). An inlining axiom
applied to a recursive function can also be applied an unbounded
number of times.

Because unrestricted saturation may not terminate, we bound the number
of times that individual analyses can run, thus ensuring that the
$\Saturate$ function will always halt. In the case when the saturation
engine is stopped early, we cannot provide the same convergence
property, but property~\eqref{eq:non-interference} still implies that
no area of the search space can be made unreachable by applying an
equality analysis (a property that traditional compilation systems
lack).

\section{\PEGs and \EPEGs}
\label{sec:semantics}

The first step in instantiating our approach from the previous section
is to pick an appropriate IR. To this end, we have designed a new IR
called the \EPEG which can simultaneously represent multiple optimized
versions of the input program. We first give a formal description of
our IR (Section~\ref{sec:formalization}), then we present its benefits
(Section~\ref{sec:benefits}), and finally we give a detailed
description of how to translate from CFGs to our IR and back
(Sections~\ref{sec:cfg2peg} and \ref{sec:peg2cfg}).

\subsection{Formalization of \PEGs}
\label{sec:formalization}

A \PEG is a triple $\langle \Node, \Label, \Param\rangle$, where
$\Node$ is a set of nodes, $\Label:\Node \rightarrow \Func$ is a
labeling that maps each node to a semantic function from a set of
semantic functions $\Func$, and $\Param:\Node \rightarrow
\listtype{\Node}$ is a function that maps each node to its children
(i.e. arguments). For a given node $n$, if $\Label(n) = f$, we say
that $n$ is \emph{labeled} with $f$. We say that a node $n'$ is a
\emph{child} of node $n$ if $n'$ is an element of
$\Param(n)$. Finally, we say that $n_k$ is a \emph{descendant} of
$n_0$ if there is a sequence of nodes $n_0, n_1, \ldots, n_k$ such
that $n_{i+1}$ is a child of $n_i$ for $0 \leq i < k$.

\mypara{Types.}  Before giving the definition of semantic functions,
we first define the types of values that these functions operate
over. Values that flow through a \PEG are lifted in two ways. First,
they are $\bot$-lifted, meaning that we add the special value $\bot$
to each type domain. The $\bot$ value indicates that the computation
fails or does not terminate. Formally, for each type $\tau$, we define
$\U{\D} = \tau \cup \{ \undef \}$.

Second, values are loop-lifted, which means that instead of
representing the value at a particular iteration, \PEG nodes represent
values for all iterations at the same time. Formally, we let $\set{L}$
be a set of loop identifiers, with each $\ell \in \set{L}$
representing a loop from the original code (in our previous examples
we used integers). We assume a partial order $\leq$ that represents
the loop nesting structure: $\ell < \ell'$ means that $\ell'$ is
nested within $\ell$. An iteration index $\i$ captures the iteration
state of all loops in the \PEG. In particular, $\i$ is a function that
maps each loop identifier $\ell \in \set{L}$ to the iteration that
loop $\ell$ is currently on. Suppose for example that there are two
nested loops in the program, identified as $\ell_1$ and $\ell_2$. Then
the iteration index $\i = [ \ell_1 \mapsto 5, \ell_2 \mapsto 3 ]$
represents the state where loop $\ell_1$ is on the $5^{th}$ iteration
and loop $\ell_2$ is on the $3^{rd}$ iteration.  We let $\I = \set{L}
\to \W$ be the set of all loop iteration indices (where $\W$ denotes
the set of non-negative integers). For $\i \in \I$, we use the
notation $\i[\ell \mapsto v]$ to denote a function that returns the
same value as $\i$ on all inputs, except that it returns $v$ on input
$\ell$. The output of a \PEG node is a map from loop iteration indices
in $\I$ to values. In particular, for each type $\tau$, we define a
loop-lifted version $\Lifted{\tau} = \I \to \U{\D}$. \PEG nodes
operate on these loop-lifted types.

\mypara{Semantic Functions.}
The semantic functions in $\Func$ actually implement the operations
represented by the \PEG nodes. Each function $f \in F$ has type
$\Lifted{\tau_1} \times \ldots \times \Lifted{\tau_k} \rightarrow
\Lifted{\tau}$, for some $k$. Such an $f$ can be used as the label for
a node that has $k$ children. That is to say, if $\Label(n) = f$,
where $f: \Lifted{\tau_1} \times \ldots \times \Lifted{\tau_k}
\rightarrow \Lifted{\tau}$, then $\Param(n)$ must be a list of $k$
nodes.

The set of semantic functions $\Func$ is divided into two: $\Func =
\Primitives \cup \DomainFunc$. $\Primitives$ contains the
\emph{primitive functions} like $\phi$ and $\theta$, which are built
into the \PEG representation, whereas $\DomainFunc$ contains semantic
functions for particular domains like arithmetic.

\begin{figure}
$$
\begin{array}{ll}

\fbox{\raisebox{0pt}[8pt][1pt]{
$\sem{\phi}: \Lifted{\B} \times \Lifted{\D} \times \Lifted{\D} \to \Lifted{\D}$}} \\

\hspace{0.25in}
\sem{\phi}(cond, t, f)(\i) = 
\begin{cases}
\text{\textbf{if} $\;cond(\i) = \undef$} & 
\text{\textbf{then} $\;\undef$} \\
\text{\textbf{if} $\;cond(\i) = \true$} & 
\text{\textbf{then} $\;t(\i)$} \\
\text{\textbf{if} $\;cond(\i) = \false$} & 
\text{\textbf{then} $\;f(\i)$}
\end{cases} 
\vspace{0.2cm}
\\

\fbox{\raisebox{0pt}[8pt][1pt]{
$\sem{\theta_\ell}: \Lifted{\D} \times \Lifted{\D} \to \Lifted{\D}$}} \\

\hspace{0.25in}
\sem{\theta_\ell}(\base, \loopit)(\i) = 
\begin{cases}
\text{\textbf{if} $\;\i(\ell) = 0$ \textbf{ then} $\;\base(\i)$} \\
\text{\textbf{if} $\;\i(\ell) > 0$ \textbf{ then} $\;\loopit(\i[\ell \mapsto \i(\ell)-1])$}
\end{cases}
\vspace{0.2cm}
\\

\fbox{\raisebox{0pt}[8pt][1pt]{
$\sem{\eval_\ell}: \Lifted{\D} \times \Lifted{\W} \to \Lifted{\D}$}} \\

\hspace{0.25in}
\sem{\eval_\ell}(\loopit, \idx)(\i) = 
\begin{cases}
\text{\textbf{if} $\;\idx(\i) = \undef\;$ 
\textbf{then} $\;\undef$} \\
\text{\textbf{else} $\monotonize_\ell(\loopit)(\i[\ell \mapsto \idx(\i)])$}
\end{cases} 
\vspace{0.2cm}
\\

\fbox{\raisebox{0pt}[8pt][1pt]{
$\sem{\pass_\ell}: \Lifted{\B} \to \Lifted{\W}$}}\\

\hspace{0.25in}
\sem{\pass_\ell}(\cond)(\i) = 
\begin{cases}
\text{\textbf{if} $\;\set{I} = \emptyset$} &
\text{\textbf{then} $\;\undef$} \\
\text{\textbf{if} $\;\set{I} \neq \emptyset$} &
\text{\textbf{then} $\;\min\set{I}$}
\end{cases} 
\vspace{0.2cm}
\\

\hspace{0.25in}
\text{where } \set{I} = \left\{i \in \W \mid
\monotonize_\ell(\cond)(\i[\ell \mapsto i]) = \true \right\} \\
\\

\mbox{where $\sem{\monotonize_\ell}: \Lifted{\D} \to \Lifted{\D}$ is defined as:} \\

\monotonize_\ell(\valueit)(\i) =
\begin{cases}
\text{\textbf{if} $\;\exists \; 0 \leq i < \i(\ell). \; \valueit(\i[\ell \mapsto i]) = \undef$ \textbf{ then} $\;\undef$} \\
\text{\textbf{if} $\;\forall \; 0 \leq i < \i(\ell). \; \valueit(\i[\ell \mapsto i]) \neq \undef$ \textbf{ then} $\;\valueit(\i)$} \\
\end{cases}

\end{array}
$$
\caption{Definition of primitive PEG functions. The important
notation: $\set{L}$ is the set of loop identifiers, $\W$ is the set of
non-negative integers, $\B$ is the set of booleans, $\;\I = \set{L}
\to \W$, $\ \ \U{\D} = \tau \cup \{ \undef \}$, ~~and
$\Lifted{\tau} = \I \to \U{\D}$.
}
\label{fig:prims-defns}
\end{figure}

Figure~\ref{fig:prims-defns} gives the definition of the primitive
functions $\Primitives = \{\phi, \theta_\ell, \eval_\ell,
\pass_\ell\}$. These functions are polymorphic in $\tau$, in that they
can be instantiated for various $\tau$'s, ranging from basic types
like integers and strings to complicated types like the heap summary
nodes that \Peggy uses to represent Java objects. The definitions of
$\eval_\ell$ and $\pass_\ell$ make use of the function $\monotonize_\ell$,
whose definition is given in Figure~\ref{fig:prims-defns}. The
$\monotonize_\ell$ function transforms a sequence so that, once an indexed
value is undefined, all following indexed values are undefined. The
$\monotonize_\ell$ function formalizes the fact that once a value is
undefined at a given loop iteration, the value remains undefined at
subsequent iterations.

The domain semantic functions are defined as $\DomainFunc = \{
\Lifted{op} \mid op \in \DomainOperators \}$, where $\DomainOperators$
is a set of domain operators (like $+$, $*$ and $-$ in the case of
arithmetic), and $\Lifted{op}$ is a $\bot$-lifted, and then
loop-lifted version of $op$. Intuitively, the $\bot$-lifted version of
an operator works like the original operator except that it returns
$\bot$ if any of its inputs are $\bot$, and the loop-lifted version of
an operator applies the original operator for each loop index.

As an example, the semantic function of $+$ in a PEG is $\Lifted{+}$,
and the semantic function of $1$ is $\Lifted{1}$ (since constants like
$1$ are simply nullary operators). 
However, to make the notation less crowded, we omit the tildes
on all domain operators.

\mypara{Node Semantics.} For a \PEG node $n \in \Node$, we denote
its semantic value by $\semvalue{n}$. We assume that
$\semvalue{\cdot}$ is lifted to sequences $\listtype{\Node}$ in the standard
way. The semantic value of $n$ is defined as:
\begin{equation}
\label{eq:semantic-value-defn}
\semvalue{n} = \Label(n)(\semvalue{\Param(n)})
\end{equation}
Equation~\ref{eq:semantic-value-defn} is essentially the evaluation
semantics for expressions. The only complication here is that our
expression graphs are recursive. In this setting, one can think of
Equation~\ref{eq:semantic-value-defn} as a set of recursive equations
to be solved. To guarantee that a unique solution exists, we impose
some well-formedness constraints on \PEGs.

\vspace{12pt}

\begin{defi}[\PEG Well-formedness]
A PEG is well-formed iff:
\begin{enumerate}[(1)]
\item All cycles pass through the second child edge of a
  $\theta$
\item A path from a $\theta_\ell$, $\eval_\ell$, or $\pass_\ell$ to a
$\theta_{\ell'}$ implies $\ell' \leq \ell$ or the path passes through
the first child edge of an $\eval_{\ell'}$ or $\pass_{\ell'}$
\item All cycles containing $\eval_\ell$ or $\pass_\ell$ contain some
$\theta_{\ell'}$ with $\ell' < \ell$
\end{enumerate}
\label{defn:well-formed}
\end{defi}
Condition 1 states that all cyclic paths in the PEG are due to looping
constructs. Condition 2 states that a computation in an outer-loop
cannot reference a value from inside an inner-loop. Condition 3 states
that the final value produced by an inner-loop cannot be expressed in
terms of itself, except if it's referencing the value of the
inner-loop from a \emph{previous} outer-loop iteration. From this
point on, all of our discussion of \PEGs will assume they are
well-formed.

\begin{theorem}
If a \PEG is well-formed, then for each node
$n$ in the \PEG there is a unique semantic value $\semvalue{n}$
satisfying Equation~\ref{eq:semantic-value-defn}.
\label{thm:value-existence}
\end{theorem}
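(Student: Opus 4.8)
The plan is to read Equation~\ref{eq:semantic-value-defn} as one simultaneous recursive definition of the family of values $\semvalue{n}(\i)$, for all nodes $n$ and all iteration indices $\i$, and to equip the index set $\Node \times \I$ with a well-founded relation along which that recursion is carried out. Once such a relation is available, both halves of the theorem are formal consequences of well-founded recursion: existence of $\semvalue{\cdot}$ by defining $\semvalue{n}(\i)$ by recursion on the relation using the right-hand side of Equation~\ref{eq:semantic-value-defn}, and uniqueness because any two families satisfying Equation~\ref{eq:semantic-value-defn} must coincide at every pair $(n,\i)$ by induction on the same relation. So the real content is the construction of this relation.

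First I would make the \emph{direct-dependency} relation on $\Node \times \I$ explicit, by unfolding the right-hand side of Equation~\ref{eq:semantic-value-defn} for each possible label (the primitives of Figure~\ref{fig:prims-defns} together with the $\bot$- and loop-lifted domain operators) and reading off which pairs $(n',\i')$ the value of $(n,\i)$ consults. The features that matter are: a domain operator or a $\phi$ consults its children only at the same $\i$; a $\theta_\ell$ consults its first child at the same $\i$ when $\i(\ell)=0$, and otherwise consults its second child at $\i[\ell\mapsto\i(\ell)-1]$, \emph{strictly lowering} the $\ell$-coordinate; and $\eval_\ell$ and $\pass_\ell$ consult their second child at the same $\i$, but may consult their first child, through $\monotonize_\ell$, at indices $\i[\ell\mapsto i]$ that \emph{raise} the $\ell$-coordinate arbitrarily. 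Two small observations from this analysis I would record now: $\pass_\ell$ has infinitely many direct dependants (one per candidate iteration), which is harmless since well-foundedness only forbids infinite descending chains; and $\monotonize_\ell$ only ever touches coordinates $\le$ the current one, so the $\ell^\ast$-coordinate of a running index can be increased only at an $\eval_{\ell^\ast}$ or $\pass_{\ell^\ast}$ node.

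The heart of the proof is then to show that the transitive closure of this relation has no infinite descending chain $(n_0,\i_0), (n_1,\i_1), \ldots$ in which each pair depends on the next. Since each $n_{k+1}$ is a child of $n_k$, the node sequence is an infinite walk in the finite PEG graph, so there is a set $E_\infty$ of edges traversed infinitely often, the walk eventually uses only those, and every node visited infinitely often lies on a cycle contained in $E_\infty$. By condition~(1) of Definition~\ref{defn:well-formed} every such cycle uses the second-child edge of some $\theta$, so the set $S = \{\ell \mid \text{the second-child edge of } \theta_\ell \in E_\infty\}$ is nonempty; since $\set{L}$ is finite I can pick a $\le$-minimal $\ell^\ast \in S$. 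Each traversal of $\theta_{\ell^\ast}$'s second-child edge strictly decreases the $\ell^\ast$-coordinate of the running index, which is a natural number, so that coordinate is increased infinitely often as well; by the observation above this can happen only at an $\eval_{\ell^\ast}$ or $\pass_{\ell^\ast}$ node, so some such node $e$ is visited infinitely often and hence lies on a cycle $C \subseteq E_\infty$. Applying condition~(3) to $C$ produces a node $\theta_{\ell'} \in C$ with $\ell' < \ell^\ast$, and the remaining task is to derive from conditions~(1) and~(2) that some $\theta$-back-edge belonging to a loop strictly below $\ell^\ast$ actually lies in $E_\infty$, contradicting the minimality of $\ell^\ast$.

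That final step is the main obstacle, and the difficulty is squarely due to $\eval_\ell$ and $\pass_\ell$: they re-set loop coordinates and can raise them, so no naive lexicographic rank on $\Node \times \I$ is monotone, and condition~(3) only hands us a \emph{node} $\theta_{\ell'}$ with $\ell' < \ell^\ast$ on the offending cycle, not that its back-edge is used. Closing the gap needs an iterated use of conditions~(1) and~(2): tracking, along the sub-paths of $C$ between $e$, the $\theta$-nodes and the $\theta$-back-edges on $C$, and using condition~(2) — a path out of a $\theta_\ell$, $\eval_\ell$, or $\pass_\ell$ that reaches a $\theta_{\ell'}$ must have $\ell' \le \ell$ or pass through the first-child edge of an $\eval_{\ell'}$ or $\pass_{\ell'}$ — to drive some $\theta$-back-edge on $C$ down the finite, well-founded loop-nesting order until it falls strictly below $\ell^\ast$. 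If that bookkeeping proves too delicate, I would fall back to an induction on loop-nesting depth instead: peel off an outermost loop $\ell$ (condition~(3) guarantees its $\eval_\ell$/$\pass_\ell$ nodes lie on no cycle, and condition~(2) controls how its body may reference sibling loops), solve the resulting system by a straightforward recursion on $\i(\ell)$ given the values of the remaining nodes, and apply the induction hypothesis to the rest.
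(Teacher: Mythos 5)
You have correctly reduced the theorem to the well-foundedness of a dependency relation on $\Node\times\I$, and your bookkeeping of how $\theta_\ell$, $\eval_\ell$, $\pass_\ell$ and $\monotonize_\ell$ move the iteration index is accurate; existence and uniqueness would indeed both follow by well-founded recursion. The genuine gap is exactly the step you flag as ``the remaining task'', and it cannot be closed the way you propose (iterated use of conditions~(1) and~(2)): condition~(3) as literally stated only puts a \emph{node} $\theta_{\ell'}$ with $\ell'<\ell^\ast$ on the cycle $C$, and $C$ may enter and leave that node through its first-child edge, in which case no back-edge of a loop outside $\{\ell^\ast\}$ need lie in $E_\infty$. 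Concretely, take $\ell_1<\ell_2$ and set $a=\theta_{\ell_1}(b,x)$, $b=\eval_{\ell_2}(c,\pass_{\ell_2}(q))$, $c=\theta_{\ell_2}(y,a)$ with $x,y$ loop-invariant leaves and $q=\theta_{\ell_2}(\false,\true)$, so the $\pass$ value is $1$. The only cycle is $a\to b\to c\to a$; it traverses the second-child edge of $\theta_{\ell_2}$ (condition~(1)), one checks that every path satisfies condition~(2), and the cycle contains $\theta_{\ell_1}$ with $\ell_1<\ell_2$, so the literal condition~(3) holds as well. Yet at any $\i$ with $\i(\ell_1)=\i(\ell_2)=0$, unfolding Equation~\ref{eq:semantic-value-defn} gives $\semvalue{a}(\i)=\semvalue{b}(\i)=\semvalue{c}(\i[\ell_2\mapsto 1])=\semvalue{a}(\i)$: a dependency cycle at a single index, so the relation you want to be well-founded is not, and the value of $a$ there is not pinned down. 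The fix is not more path-chasing but using condition~(3) in the stronger form the paper's prose actually describes: a cycle through $\eval_\ell/\pass_\ell$ may reference the loop's result only via a \emph{previous outer-loop iteration}, i.e.\ it must traverse the \emph{second-child} edge of some $\theta_{\ell'}$ with $\ell'<\ell$. Under that reading your argument finishes in one line: that back-edge lies on $C\subseteq E_\infty$, so $\ell'\in S$, contradicting minimality of $\ell^\ast$.

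For comparison, the paper proves the theorem by induction over the strongly-connected-component DAG of the \PEG and the loop-nesting order $\leq$, which is essentially your fallback plan; but note that the fallback, as you sketched it, hits the same subtlety in the same place --- in the example above the offending cycle never uses the outer $\theta$'s back-edge, so ``peel an outermost loop and recurse on $\i(\ell)$'' does not break the circularity either without the strengthened reading of~(3) --- and your parenthetical that condition~(3) keeps $\eval_\ell/\pass_\ell$ off all cycles is valid only when $\ell$ is minimal among the labels of $\theta$ nodes actually occurring in the \PEG. If you state explicitly which reading of~(3) you use and prove the back-edge fact from it, either of your two routes becomes a complete proof, and the global well-founded-relation version is then a reasonable, somewhat more self-contained alternative to the paper's SCC/nesting induction; as written, the decisive step is missing.
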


The proof is by induction over the strongly-connected-component DAG of
the \PEG and the loop nesting structure $\leq$.

\mypara{Evaluation Semantics.} The meaning function $\semvalue{\cdot}$
can be evaluated on demand, which provides an executable semantics for
\PEGs. For example, suppose we want to know the result of
$\eval_\ell(x, \pass_\ell(y))$ at some iteration state $\i$. To
determine which case of $\eval_\ell$'s definition we are in, we must
evaluate $\pass_\ell(y)$ on $\i$. From the definition of $\pass_\ell$,
we must compute the minimum $i$ that makes $y$ true. To do this, we
iterate through values of $i$ until we find an appropriate one. The
value of $i$ we've found is the number of times the loop iterates, and
we can use this $i$ back in the $\eval_\ell$ function to extract the
appropriate value out of $x$.  This example shows how an on-demand
evaluation of an $\eval$/$\pass$ sequence essentially leads to an
operational semantics for loops. Though it may seem that this
semantics requires each loop to be evaluated twice (once to determine
the $\pass$ value and once to determine the $\eval$ result), a
practical implementation of \PEGs (such as our
\PEG-to-imperative-code conversion algorithm in
Section~\ref{sec:peg2cfg}) can use a single loop to compute both the
$\pass$ result and the $\eval$ result.

\mypara{Parameter nodes.} Our \PEG definition can easily be extended
to have \emph{parameter nodes}, which are useful for encoding the
input parameters of a function or method. In particular, we assume
that a \PEG $\langle \Node, \Label, \Param\rangle$ has a (possibly
empty) set $\Node_p \subseteq \Node$ of parameter nodes.  A parameter
node $n$ does not have any children, and its label is of the form
$param(x)$ where $x$ is the variable name of the parameter.  To
accommodate for this in the formalism, we extend the type of our
labeling function $\Label$ as $\Label: \Node \rightarrow \Func \cup
\ParamLabels$, where \hbox{$\ParamLabels = \{ param(x) \mid x \mbox{ is a
  variable name} \}$}. There are several ways to give semantics to
\PEGs with parameter nodes. One way is to update the semantic
functions in Figure~\ref{fig:prims-defns} to pass around a value
context $\Sigma$ mapping variables to values. Another way, which we
use here, is to first apply a substitution to the \PEG that replaces
all parameter nodes with constants, and then use the node semantics
$\semvalue{\cdot}$ defined earlier. The node semantics
$\semvalue{\cdot}$ is well defined on a \PEG where all parameters have
been replaced, since $\Label(n)$ in this case would always return a
semantic function from $\Func$, never a parameter label $param(x)$
from $\ParamLabels$.

\begin{figure}
\includegraphics[width=\textwidth]{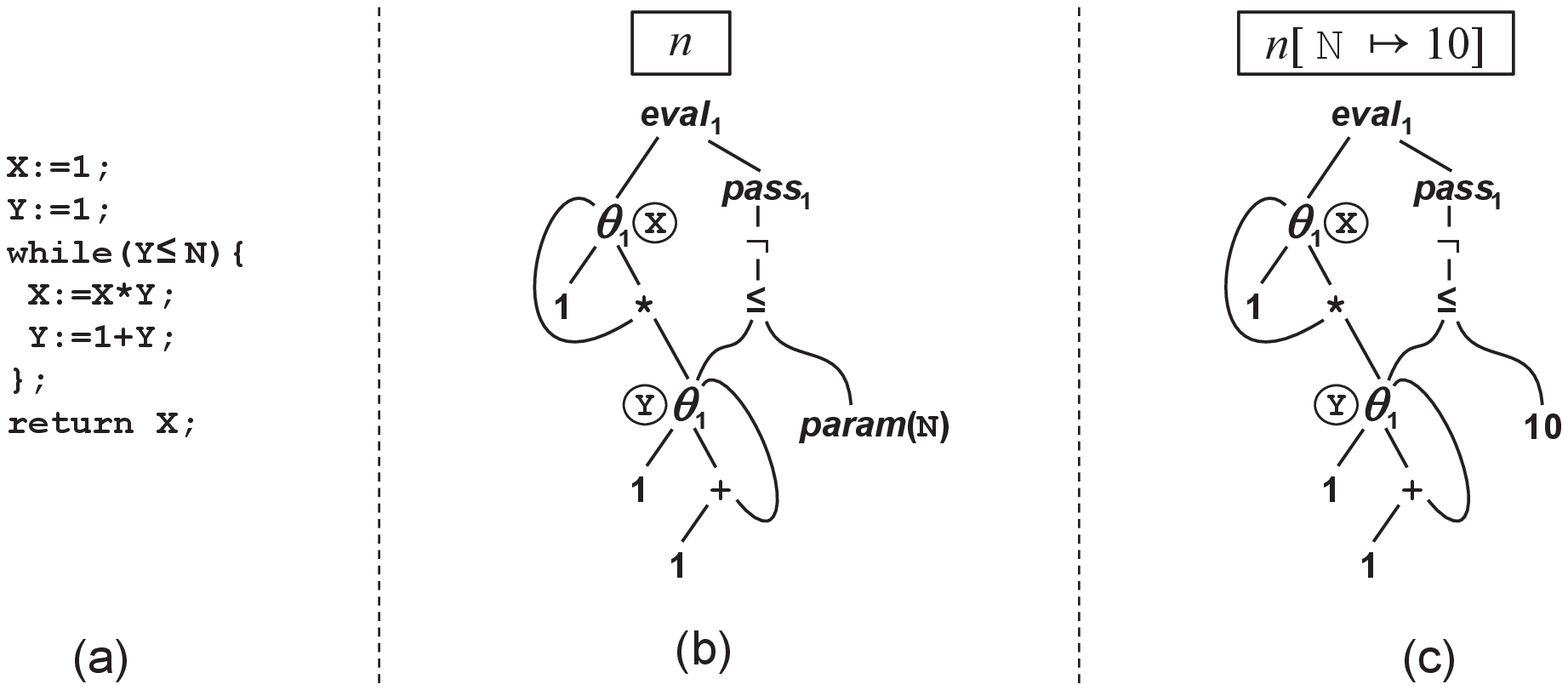}
\caption{Example showing \PEG with parameter nodes. (a) shows code for
  computing the factorial of $\texttt{N}$, where $\texttt{N}$ is a
  parameter (b) shows the \PEG with $n$ being the value returned and
  (c) shows $n[\texttt{N} \mapsto 10]$, which is now a \PEG whose
  semantics is well-defined in our formalism.}
\label{param-example}
\end{figure}

We use the following notation for substitution: given a \PEG node $n$,
a variable name $x$, and a constant $c$ (which is just a nullary
domain operator $\op \in \DomainFunc$), we use $n[x \mapsto c]$ to
denote $n$ with every descendant of $n$ that is labeled with
$param(x)$ replaced with a node labeled with $\Lifted{c}$. We use
$n[x_1 \mapsto c_1, \ldots, x_k \mapsto c_k]$ to denote $n[x_1\mapsto
  c_1] \ldots [x_k \mapsto c_k]$. Figure~\ref{param-example} shows an
example with parameter nodes and an example of the substitution
notation.

\subsection{Formalization of \EPEGs} An \EPEG is a \PEG with a set of equalities
$E$ between nodes. Thus, formally, an \EPEG is a quadruple $\langle
\Node, \Label, \Param, E\rangle$, where $\langle \Node, \Label, \Param
\rangle$ is a \PEG and $E \subseteq \Node \times \Node$ is a set of
pairs of nodes representing equalities. An equality between $n$ and
$n'$ denotes value equality: $\semvalue{n} = \semvalue{n'}$. The set
$E$ forms an equivalence relation $\sim$ (that is, $\sim$ is the
reflexive transitive symmetric closure of $E$), which in turn
partitions the \PEG nodes into equivalence classes. We denote by $[n]$
the equivalence class that $n$ belongs, so that $[n] = \{ n' \in \Node
\mid n' \sim n\}$. We denote by $N/E$ the set of all equivalence
classes. For $n \in \Node$, we denote by $params(n)$ the list of
equivalence classes that are parameters to $n$. In particular, if
$\Param(n) = (n_1, \ldots, n_k)$ then $params(n) = ([n_1], \ldots,
[n_k])$. As mentioned in more detail in
Section~\ref{sec:implementation}, our implementation strategy keeps
track of these equivalence classes, rather than the set $E$.

\subsection{Built-in Axioms.} 
\label{sec:build-in-axioms}

We have developed a set of \PEG built-in axioms that state properties
of the primitive semantic functions. These axioms are used in our
approach as a set of equality analyses that enable reasoning about
primitive \PEG operators.  Some important built-in axioms are given
below, where $\bullet$ denotes ``don't care'':
$$\begin{array}{rcl}
\theta_\ell(A,B) & = & \theta_\ell(\eval_\ell(A,0),B) \\
\eval_\ell(\theta_\ell(A,\bullet),0) & = & \eval_\ell(A,0) \\
\eval_\ell(\eval_\ell(A,B),C) & = & \eval_\ell(A,\eval_\ell(B,C)) \\
\pass_\ell(\true) & = & 0 \\
\pass_\ell(\theta_\ell(\true,\bullet)) & = & 0 \\
\pass_\ell(\theta_\ell(\false,A)) & = & \pass_\ell(A) + 1 \\
\end{array}$$

Furthermore, some axioms make use of an invariance predicate: $\invariant_\ell(n)$ is true if the value of $n$ does not
vary on loop $\ell$. Although we define $\invariant_\ell$ here first,
$\invariant_\ell$ will be used more broadly than for defining
axioms. It will also be used in
Section~\ref{sec:loop-invariant-code-motion} to optimize the
\PEG-to-imperative-code translation, and in
Section~\ref{sec:heuristic} to help us define the cost model for
\PEGs. Invariance can be computed using several syntactic rules, as
shown in the following definition, although there are \PEG nodes which
are semantically invariant but do not satisfy the following syntactic
predicate. Note that we sometimes use ``$n$ is $\invariant_\ell$''
instead of $\invariant_\ell(n)$.

\begin{defi}[Invariance Predicate]
The $\invariant_\ell(n)$ predicate is the largest predicate that
satisfies the following three rules:
\begin{enumerate}[(1)]
\item if $\Label(n)$ is $\theta_\ell$, then $n$ is not $\invariant_\ell$
\item if $\Label(n)$ is $\eval_\ell$, then if the second child of $n$ is not $\invariant_\ell$ then $n$ is also not $\invariant_\ell$
\item otherwise if $\Label(n)$ is not $\pass_\ell$, then if any child of $n$ is not $\invariant_\ell$ then $n$ is also not $\invariant_\ell$
\end{enumerate}

\label{defn:invariant}
\end{defi}

Note that, due to the last rule, nodes without any children, such as
constants and parameter nodes, will always be $\invariant_\ell$ for
all $\ell$. Also, since no rule restricts $\pass_\ell$ nodes, such
nodes will always be $\invariant_\ell$. This syntactic definition of
$\invariant_\ell$ is best computed using an optimistic dataflow
analysis.

Having defined $\invariant_\ell$, the following built-in axioms hold
if $\invariant_\ell(A)$ holds:
$$\begin{array}{rcl}
\eval_\ell(A,\bullet) & = & A \\
x & = & A \quad \mbox{where } x=\theta_\ell(A,x) \\
\peel_\ell(A) & = & A \\
\end{array}$$

One of the benefits of having a well-defined semantics for primitive
\PEG functions is that we can reason formally about these
functions. To demonstrate that this is feasible, we used our semantics
to prove a handful of axioms, in particular, the above axioms, and all
the axioms required to perform the optimizations presented in
Sections~\ref{sec:overview}
through~\ref{sec:local-nonlocal}. Appendix~\ref{sec:axioms} contains
the much longer list of all axioms that we have used in \Peggy.

\subsection{How \PEGs enable our approach}
\label{sec:benefits}

The key feature of \PEGs that makes our equality-saturation approach
effective is that they are referentially transparent, which
intuitively means that the value of an expression depends only on the
values of its constituent expressions~\cite{strachey_ref_trans,
  landin_ref_trans, quine_ref_trans}. In our \PEG representation,
referential transparency can be formalized as follows:
$$
\begin{array}{l}
\forall (n,n') \in \Node^2~.~
\left(
\begin{array}{l}
\Label(n) = \Label(n') \wedge \\
\semvalue{\Param(n)} = \semvalue{\Param(n')}
\end{array}
\right)
\Rightarrow
\semvalue{n} = \semvalue{n'}
\end{array}
$$ 
This property follows from the definition in
Equation~\eqref{eq:semantic-value-defn}, and
the fact that for any $n$, $\Label(n)$ is a pure mathematical
function.

Referential transparency makes equality reasoning effective because it
allows us to show that two expressions are equal by only considering
their constituent expressions, without having to worry about
side-effects. Furthermore, referential transparency has the benefit
that a single node in the \PEG entirely captures the value of a
complex program fragment (including loops) enabling us to record
equivalences between program fragments by using equivalence classes of
nodes. Contrast this to CFGs, where to record equality between complex
program fragments, one would have to record subgraph equality.

Finally, \PEGs allow us to record equalities at the granularity of
individual values, for example the iteration count in a loop, rather
than at the level of the entire program state. Again, contrast this to
CFGs, where the simplest form of equality between program fragments
would record program-state equality.

\section{Representing Imperative Code as \PEGs}
\label{sec:cfg2peg}

In this section we describe how programs written in an imperative
style can be transformed to work within our \PEG-based optimization
system. We first define a minimal imperative language
(Section~\ref{sec:simple}) and then present ML-style functional
pseudocode for converting any program written in this language into a
\PEG (Section~\ref{sec:convert-pseudo-code}). Next, we present a
formal account of the conversion process using type-directed
translation rules in the style of~\cite{Leijen:mlftof}
(Section~\ref{sec:type-directed-translation}). Finally, we outline a
proof of the semantic preservation of the translation
(Section~\ref{sec:preservation-of-semantics}).
Our technical report~\cite{peg2cfg} shows how to extend the techniques in this section for converting \SIMPLE programs to \PEGs in order to convert more complex control flow graphs to \PEGs.

\subsection{The \SIMPLE programming language}
\label{sec:simple}

\begin{figure}[t]
$$\begin{array}{r@{~::=~}l}
p & \texttt{main}(x_1:\tau_1, \ldots, x_n:\tau_n):\tau~\{ s \} \\
s & s_1;s_2 \mid x := e \mid \simpleite{e}{s_1}{s_2} ~|~ \simplewhile{e}{s} \\
e & x \mid \op(e_1, \dots, e_n)
\end{array}$$
\caption{Grammar for \SIMPLE programs}
\label{simple-grammar}
\end{figure}

We present our algorithm for converting to the \PEG representation
using a simplified source language. In particular, we use the \SIMPLE
programming language, the grammar of which is shown in
Figure~\ref{simple-grammar}. A \SIMPLE program contains a single
function \texttt{main}, which declares parameters with their
respective types, a body which uses these variables, and a return
type. There is a special variable $\retvar$, the value of which is
returned by \texttt{main} at the end of execution. \SIMPLE programs
may have an arbitrary set of primitive operations on an arbitrary set
of types; we only require that there is a Boolean type for
conditionals (which makes the translation simpler). Statements in
\SIMPLE programs have four forms: statement sequencing (using
semicolon), variable assignment (the variable implicitly inherits the
type of the expression), if-then-else branches and while
loops. Expressions in \SIMPLE programs are either variables or
primitive operations (such as addition). Constants in \SIMPLE are
nullary primitive operations.

\begin{figure}[t]
{
\begin{tabular}{c}
\multicolumn{1}{l}{
\fbox{\raisebox{0pt}[8pt][1pt]{$\vdash p$ \;\; (programs)}} 
} \\[15pt]
\inference[Type-Prog]{x_1 : \tau_1, \dots, x_n : \tau_n \vdash s : \Gamma
       & \Gamma(\retvar) = \tau
       }{\vdash \texttt{main}(x_1 : \tau_1, \dots, x_n : \tau_n):\tau~\{s\}} \\[20pt]
\multicolumn{1}{l}{
\fbox{\raisebox{0pt}[8pt][1pt]{$\Gamma \vdash s: \Gamma'$ \;\; (statements)}} 
} \\[15pt]
\inference[Type-Seq]{\Gamma \vdash s_1 : \Gamma'
       & \Gamma' \vdash s_2 : \Gamma''
       }{\Gamma \vdash s_1; s_2 : \Gamma''} \;\;
\inference[Type-Asgn]{\Gamma \vdash e : \tau}{\Gamma \vdash x := e : (\Gamma, x : \tau)} \\[20pt]
\inference[Type-If]{\Gamma \vdash e : \texttt{bool}
       & \Gamma \vdash s_1 : \Gamma'
       & \Gamma \vdash s_2 : \Gamma'
       }{\Gamma \vdash \simpleite{e}{s_1}{s_2} : \Gamma'} \;\;
\inference[Type-While]{\Gamma \vdash e : \texttt{bool}
       & \Gamma \vdash s : \Gamma
       }{\Gamma \vdash \simplewhile{e}{s} : \Gamma} \\[20pt]
\inference[Type-Sub]{\Gamma \vdash s : \Gamma'
       & \Gamma'' \subseteq \Gamma'
       }{\Gamma \vdash s : \Gamma''} \\[20pt]
\multicolumn{1}{l}{
\fbox{\raisebox{0pt}[8pt][1pt]{$\Gamma \vdash e: \tau$ \;\; (expressions)}} 
} \\[15pt]
\inference[Type-Var]{\Gamma(x) = \tau
       }{\Gamma \vdash x : \tau} \;\;
\inference[Type-Op]{op : (\tau_1, \dots, \tau_n) \to \tau
       & \Gamma \vdash e_1 : \tau_1 \;\; \dots \;\; \Gamma \vdash e_n : \tau_n
       }{\Gamma \vdash op(e_1, \dots, e_n) : \tau} 
\end{tabular}
}
\caption{Type-checking rules for \SIMPLE programs}
\label{simple-types}
\end{figure}

The type-checking rules for \SIMPLE programs are shown in
Figure~\ref{simple-types}. There are three kinds of judgments: (1)
judgment $\vdash p$ (where $p$ is a program) states that $p$ is
well-typed; (2) judgment $\Gamma \vdash s:\Gamma'$ (where $s$ is a
statement) states that starting with context $\Gamma$, after $s$ the
context will be $\Gamma'$; (3) judgment $\Gamma \vdash e: \tau$ (where
$e$ is an expression) states that in type context $\Gamma$, expression
$e$ has type $\tau$.

For program judgments, there is only one rule, Type-Prog, which
ensures that the statement inside of \texttt{main} is well-typed;
$\Gamma(\retvar) = \tau$ ensures that the return value of
\texttt{main} has type $\tau$. For statement judgments, Type-Seq
simply sequences the typing context through two statements. Type-Asgn
replaces the binding for $x$ with the type of the expression assigned
into $x$ (if $\Gamma$ contains a binding for $x$, then
$(\Gamma,x:\tau)$ is $\Gamma$ with the binding for $x$ replaced with
$\tau$; if $\Gamma$ does not contain a binding for $x$, then
$(\Gamma,x:\tau)$ is $\Gamma$ extended with a binding for $x$). The
rule Type-If requires the context after each branch to be the
same. The rule Type-While requires the context before and after the
body to be the same, specifying the loop-induction variables. The rule
Type-Sub allows the definition of variables to be ``forgotten'',
enabling the use of temporary variables in branches and loops.

\subsection{Translating \SIMPLE Programs to \PEGs}
\label{sec:convert-pseudo-code}

\begin{figure}
\begin{algorithmic}[1]
\declarefunction{\TranslateProg(p: \Prog):\NodeType =} 
\STATE \tab\textbf{let} $m = \InitMap(p.\params,\lambda x.\ParemeterNode(x))$
\STATE \tab\textbf{in} $\TranslateStmt(p.\body, m, 0)(\retvar)$
\finishfunction
\declarefunction{\TranslateStmt(s: \Stmt, \context:\maptype{\Var}{\NodeType}, \loopdepth: \N):\maptype{\Var}{\NodeType}=}
\STATE \tab\textbf{match} $s$ \textbf{with}
\STATE \tab\tab $``s_1;s_2" \Rightarrow \TranslateStmt(s_2, \TranslateStmt(s_1, \context, \loopdepth), \loopdepth)$
\STATE \tab\tab $``x := e" \Rightarrow \context[x \mapsto \TranslateExpr(e,\context)]$
\STATE \tab\tab $``\simpleite{e}{s_1}{s_2}" \Rightarrow \PHI(\TranslateExpr(e,\context), \TranslateStmt(s_1,\context, \loopdepth), \TranslateStmt(s_2,\context,\loopdepth))$
\STATE \tab\tab $``\simplewhile{e}{s}" \Rightarrow$
\STATE \tab\tab \tab \textbf{let} $\vars = \Keys(\context)$
\STATE \tab\tab \tab \textbf{let} $\varnodes = \InitMap(\vars, \lambda v. \TemporaryNode(v))$
\STATE \tab\tab \tab \textbf{let} $\nextnodes = \TranslateStmt(s, \varnodes, \loopdepth + 1)$
\STATE \tab\tab \tab \textbf{let} $\thetanodes = \THETA_{\loopdepth+1}(\context, \nextnodes)$
\STATE \tab\tab \tab \textbf{let} $\thetanodes' = \InitMap(\vars, \lambda v.\FixpointTemporaries(\thetanodes, \thetanodes(v)))$
\STATE \tab\tab \tab \textbf{in} $\EVAL_{\loopdepth + 1}(\thetanodes', \constructor{\pass_{\loopdepth + 1}}(\constructor{\neg}(\TranslateExpr(e,\thetanodes'))))$
\finishfunction
\declarefunction{\TranslateExpr(e: \Expr,\context:\maptype{\Var}{\NodeType}):\NodeType=}
\STATE \tab\textbf{match} $e$ \textbf{with}
\STATE \tab\tab $``x" \Rightarrow \context(x)$
\STATE \tab\tab $``\op(e_1, \ldots, e_k)" \Rightarrow \constructor{\op}(\TranslateExpr(e_1,\context), \ldots, \TranslateExpr(e_k,\context))$ 
\finishfunction
\declarefunction{\PHI(n:\NodeType,\context_1: \maptype{\Var}{\NodeType}, \context_2:\maptype{\Var}{\NodeType}):\maptype{\Var}{\NodeType} =} 
\STATE \tab$\Combine(\context_1,\context_2,\lambda\ t\ f\ .\ \constructor{\phi}(n,t,f))$
\finishfunction
\declaresmallfunction{\THETA_{\loopdepth:\N}(\context_1: \maptype{\Var}{\NodeType}, \context_2:\maptype{\Var}{\NodeType}):\maptype{\Var}{\NodeType} =} 
\STATE \tab$\Combine(\context_1,\context_2,\lambda\ b\ n\ .\ \constructor{\theta_{\loopdepth}}(b,n))$
\finishfunction
\declaresmallfunction{\EVAL_{\loopdepth:\N}(\context: \maptype{\Var}{\NodeType}, n:\NodeType):\maptype{\Var}{\NodeType} =} 
\STATE \tab$\InitMap(\Keys(\context), \lambda v\ . \ \constructor{\eval_{\loopdepth}}(\context(v), n)$
\finishfunction
\declaresmallfunction{\Combine(m_1: \maptype{a}{b}, m_2:\maptype{a}{c},f: b*c\rightarrow d):\maptype{a}{d} =} 
\STATE \tab$\InitMap(\Keys(m_1) \cap \Keys(m_2), \lambda k.f(m_1[k],m_2[k]))$
\finishfunction
\thickstraightline
\end{algorithmic}
\caption{ML-style pseudo-code for converting \SIMPLE programs to \PEGs}
\label{simple-implementation}
\end{figure}

Here we use ML-style functional pseudo-code to describe the
translation from \SIMPLE programs to
\PEGs. Figure~\ref{simple-implementation} shows the entirety of the
algorithm, which uses a variety of simple types and data structures,
which we explain first.Note that if we use \SIMPLE programs to
instantiate our equality saturation approach described in
Figure~\ref{fig:optimize}, then the $\CfgToIr$ function from
Figure~\ref{fig:optimize} is implemented using a call to
$\TranslateProg$.

In the pseudo-code (and in all of Sections~\ref{sec:cfg2peg}
and~\ref{sec:peg2cfg}), we use the notation $\constructor{a}(n_1, \ldots,
n_k)$ to represent a \PEG node with label $a$ and children $n_1$
through $n_k$. Whereas previously the distinction between creating
\PEG nodes and applying functions was clear from context, in a
computational setting like pseudo-code we want to avoid confusion
between for example applying negation in the pseudo-code
$\neg(\ldots)$, vs.\ creating a \PEG node labeled with negation
$\constructor{\neg}(\ldots)$. For parameter nodes, there can't be any
confusion because when we write $param(x)$, $param$ is actually not a
function -- instead $param(x)$ as a whole is label. Still, to be
consistent in the notation, we use $\constructor{param}(x)$ for
constructing parameter nodes.

We introduce the concept of a \emph{node context} $\context$, which is a
set of bindings of the form $x:n$, where $x$ is a \SIMPLE variable,
and $n$ is a \PEG node. A node context states, for each variable $x$,
the PEG node $n$ that represents the current value of $x$. We use
$\context(x) = n$ as shorthand for $(x:n) \in \context$. Aside from using node
contexts here, we will also use them later in our type-directed
translation rules (Section~\ref{sec:type-directed-translation}). For
our pseudo-code, we implement node contexts as an immutable map data
structure that has the following operations defined on it

\begin{enumerate}[$\bullet$]
\item \textit{Map initialization.} The $\InitMap$ function is used to
  create maps. Given a set $K$ of keys and a function $f$ from $K$ to
  $D$, $\InitMap$ creates a map of type $\maptype{K}{D}$ containing,
  for every element $k \in K$, an entry mapping $k$ to $f(k)$.
\item \textit{Keys of a map.} Given a map $m$, $\Keys(m)$ returns the set
  of keys of $m$.
\item \textit{Map read.} Given a map $m$ and a key $k \in \Keys(m)$,
  $m(k)$ returns the value associated with key $k$ in $m$.
\item \textit{Map update.} Given a map $m$, $m[k\mapsto
  d]$ returns a new map in which key $k$ has been updated to map to
  $d$.
\end{enumerate}

The pseudo-code also uses the types $\Prog$, $\Stmt$, $\Expr$ and
$\Var$ to represent \SIMPLE programs, statements, expressions and
variables. Given a program $p$, $p.\params$ is a list of its parameter
variables, and $p.\body$ is its body statement. We use syntax-based
pattern matching to extract information from $\Stmt$ and $\Expr$ types
(as shown on lines 6,7 for statements, and 18, 19 for expressions).

\mypara{Expressions} We explain the pieces of this algorithm
one-by-one, starting with the $\TranslateExpr$ function on line
16. This function takes a \SIMPLE expression $e$ and a node context
$\context$ and returns the \PEG node corresponding to $e$. There are two
cases, based on what type of expression $e$ is.  Line 18 states that
if $e$ is a reference to variable $x$, then we simply ask $\context$ for
its current binding for $x$. Line 19 states that if $e$ is the
evaluation of operator $\op$ on arguments $e_1,\ldots,e_k$, then we
recursively call $\TranslateExpr$ on each $e_i$ to get $n_i$, and then
create a new \PEG node labeled $\op$ that has child nodes
$n_1,\ldots,n_k$.

\mypara{Statements}
Next we explore the $\TranslateStmt$ function on line 4, which takes a
\SIMPLE statement $s$, a node context $\context$, and a loop depth $\ell$
and returns a new node context that represents the changes $s$ made to
$\context$. There are four cases, based on the four statement types. 

Line 6 states that a sequence $s_1;s_2$ is simply the result of
translating $s_2$ using the node context that results from translating
$s_1$. Line 7 states that for an assignment $x:=e$, we simply update
the current binding for $x$ with the \PEG node that corresponds to
$e$, which is computed with a call to $\TranslateExpr$. 

Line 8 handles if-then-else statements by introducing $\phi$ nodes. We
recursively produce updated node contexts $\context_1$ and $\context_2$ for
statements $s_1$ and $s_2$ respectively, and compute the \PEG node
that represents the guard condition, call it $n_c$. We then create
\PEG $\phi$ nodes by calling the PHI function defined on line 20. This
function takes the guard node $n_c$ and the two node contexts $\context_1$
and $\context_2$ and creates a new $\phi$ node in the \PEG for each
variable that is defined in both node contexts. The true child for
each $\phi$ node is taken from $\context_1$ and the false child is taken
from $\context_2$, while all of them share the same guard node $n_c$. Note
that this is slightly inefficient in that it will create $\phi$ nodes
for all variables defined before the if-then-else statement, whether
they are modified by it or not. These can be easily removed, however,
by applying the rewrite $\phi(C,A,A) = A$.

Finally we come to the most complicated case on line 9, which handles
while loops. In line 10 we extract the set of all variables defined up
to this point, in the set $\vars$. We allocate a temporary
\PEG node for each item in $\vars$ on line 11, and bind them
together in the node context $\context_t$. We use $\TemporaryNode(v)$ to
refer to a temporary \PEG node named $v$, which is a new kind of node
that we use only for the conversion process. We then recursively
translate the body of the while loop using the context full of
temporary nodes on line 12. In the resulting context $\context^\prime$,
the temporary nodes act as placeholders for loop-varying values. Note
that here is the first real use of the loop depth parameter $\ell$,
which is incremented by 1 since the body of this loop will be at a
higher loop depth than the code before the loop. For every variable in
$\vars$, we create $\theta_{\ell+1}$ nodes using the THETA
function defined on line 22. This function takes node contexts $\context$
and $\context^\prime$, which have bindings for the values of each variable
before and during the loop, respectively. The binding for each
variable in $\context$ becomes the first child of the $\theta$ (the base
case) and the binding in $\context^\prime$ becomes the second child (the
inductive case). Unfortunately, the $\theta$ expressions we just
created are not yet accurate, because the second child of each
$\theta$ node is defined in terms of temporary nodes. The correct
expression should replace each temporary node with the new $\theta$
node that corresponds to that temporary node's variable, to ``close
the loop'' of each $\theta$ node. That is the purpose of the
$\FixpointTemporaries$ function called on line 14. For each variable
$v\in\vars$, $\FixpointTemporaries$ will rewrite
$\context_\theta(v)$ by replacing any edges to $\TemporaryNode(x)$ with
edges to $\context_\theta(x)$, yielding new node context
$\context_\theta^\prime$. Now that we have created the correct $\theta$
nodes, we merely need to create the $\eval$ and $\pass$ nodes to go
with them. Line 15 does this, first by creating the $\pass_{\ell+1}$
node which takes the break condition expression as its child. The
break condition is computed with a call to $\TranslateExpr$ on $e$,
using node context $\context_\theta^\prime$ since it may reference some of
the newly-created $\theta$ nodes. The last step is to create $\eval$
nodes to represent the values of each variable after the loop has
terminated. This is done by the EVAL function defined on line 24. This
function takes the node context $\context_\theta^\prime$ and the
$\pass_{\ell+1}$ node and creates a new $\eval_{\ell+1}$ node for each
variable in $\vars$. This final node context that maps each
variable to an $\eval$ is the return value of $\TranslateStmt$. Note
that, as in the case of if-then-else statements, we introduce an
inefficiency here by replacing all variables with $\eval's$, not just
the ones that are modified in the loop. For any variable $v$ that was
bound to node $n$ in $\context$ and not modified by the loop, its binding
in the final node context would be
$\eval_{\ell+1}(T,\pass_{\ell+1}(C))$, where $C$ is the guard
condition node and $T=\theta_{\ell+1}(n,T)$ (i.e. the $\theta$ node
has a direct self-loop). We can easily remove the spurious nodes by
applying a rewrite to replace the $\eval$ node with $n$.

\mypara{Programs}
The $\TranslateProg$ function on line 1 is the top-level call to
convert an entire \SIMPLE program to a \PEG. It takes a \SIMPLE
program $p$ and returns the root node of the translated \PEG. It
begins on line 2 by creating the initial node context which contains
bindings for each parameter variable. The nodes that correspond to the
parameters are opaque parameter nodes, that simply name the
parameter variable they represent. Using this node context, we
translate the body of the program starting at loop depth 0 on line
3. This will yield a node context that has \PEG expressions for the
final values of all the variables in the program. Hence, the root of
our translated \PEG will be the node that is bound to the special
return variable \texttt{retvar} in this final node context.

\begin{figure}[t]
\begin{center}
\includegraphics[width=6in]{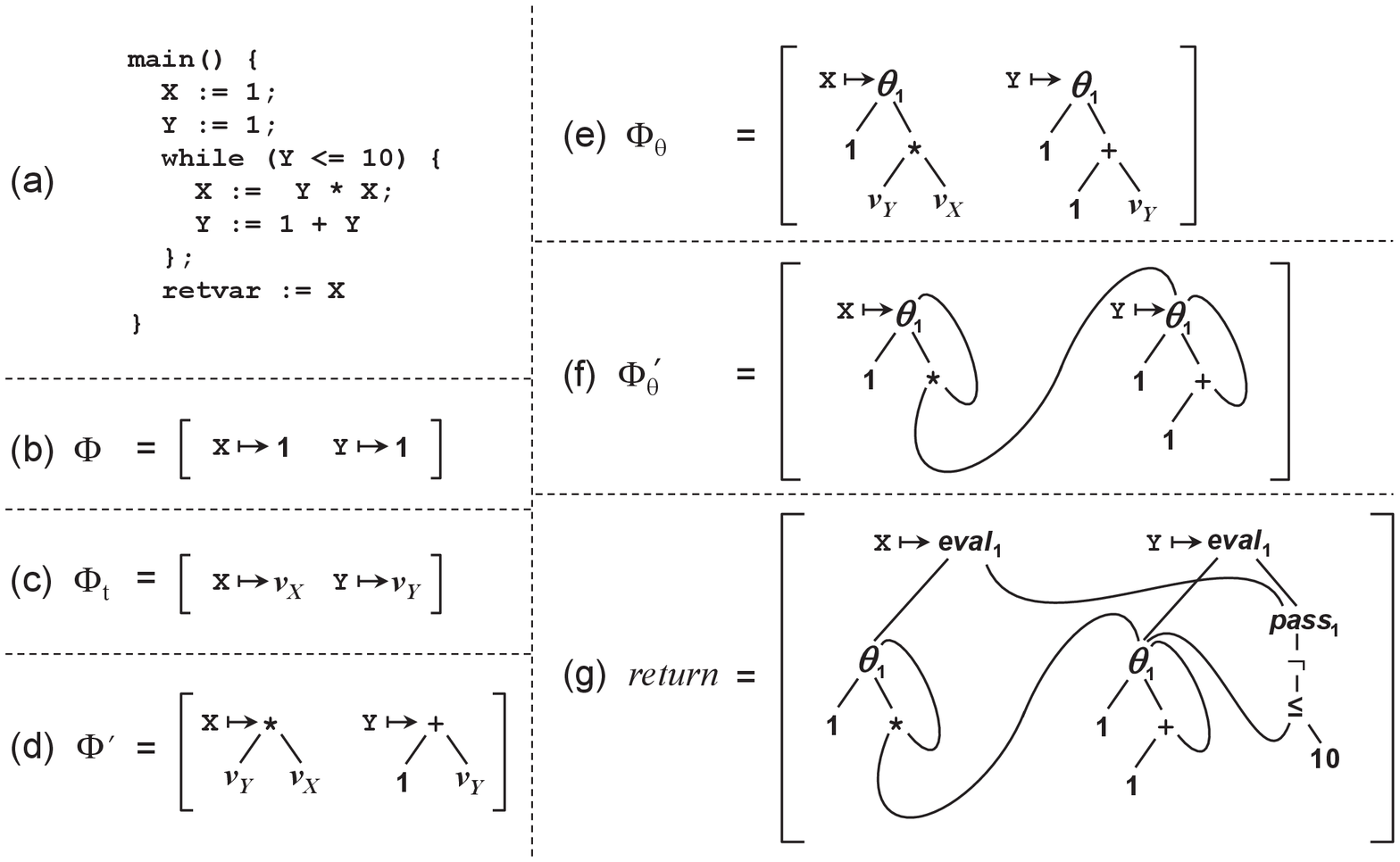}
\end{center}
\caption{Steps of the translation process. (a) shows a \SIMPLE program
  computing the factorial of 10; (b) through (f) show the contents of
  the variables in $\TranslateStmt$ when processing the \texttt{while}
  loop; and (g) shows the return value of $\TranslateStmt$ when
  processing the \texttt{while} loop.}
\label{fig:simple-to-peg-example}
\end{figure}

\mypara{Example} We illustrate how the translation process works on
the \SIMPLE program from Figure~\ref{fig:simple-to-peg-example}(a),
which computes the factorial of $10$. After processing the first two
statements, both \verb-X- and \verb-Y- are bound to the \PEG node
$1$. Then $\TranslateStmt$ is called on the \verb-while- loop, at
which point $\context$ maps both \verb-X- and \verb-Y- to
$1$. Figures~\ref{fig:simple-to-peg-example}(b)
through~\ref{fig:simple-to-peg-example}(g) show the details of
processing the \verb-while- loop. In particular, (b) through (f) show
the contents of the variables in $\TranslateStmt$, and (g) shows the
return value of $\TranslateStmt$. Note that in (g) the node labeled
$i$ corresponds to the $\pass$ node created on line 15 in
$\TranslateStmt$. After the loop is processed, the assignment to
$\retvar$ simply binds $\retvar$ to whatever \verb-X- is bound to in
Figure~\ref{fig:simple-to-peg-example}(g).

\begin{figure}[t]
{
\begin{tabular}{c}
\multicolumn{1}{l}{
\fbox{\raisebox{0pt}[8pt][1pt]{$\vdash p\;\; \vartriangleright \;\; n$ \;\; (programs)}} 
} \\[15pt]
\inference[Trans-Prog]{\typetransstmnt[0]{s}{\{x_1 : \tau_1, \dots, x_k : \tau_k\}}{\Gamma}{\{x_1 : \constructor{param}(x_1), \dots, x_k : \constructor{param}(x_k)\}}{\context} \\
       n = \context(\retvar) \mbox{ (well defined because $\Gamma(\retvar) = \tau$)}
       }{\vdash \texttt{main}(x_1 : \tau_1, \dots, x_k : \tau_k):\tau~\{s\} \;\; \vartriangleright \;\; n} \\[30pt]
\multicolumn{1}{l}{
\fbox{\raisebox{0pt}[8pt][1pt]{$\typetransstmnt{s}{\Gamma}{\Gamma'}{\context}{\context'}$ \;\; (statements)}} 
} \\[15pt]
\inference[Trans-Seq]{\typetransstmnt{s_1}{\Gamma}{\Gamma'}{\context}{\context'}
       & \typetransstmnt{s_2}{\Gamma'}{\Gamma''}{\context'}{\context''}
       }{\typetransstmnt{s_1; s_2}{\Gamma}{\Gamma''}{\context}{\context''}} \\[24pt]
\inference[Trans-Asgn]{\typetransexpr{e}{\Gamma}{\tau}{\context}{n}
       }{\typetransstmnt{x:=e}{\Gamma}{(\Gamma, x : \tau)}{\context}{(\context, x : n)}} \\[24pt]
\inference[Trans-If]{\typetransexpr{e}{\Gamma}{\texttt{bool}}{\context}{n} 
       \\ \typetransstmnt{s_1}{\Gamma}{\Gamma'}{\context}{\context_1}
       & \typetransstmnt{s_2}{\Gamma}{\Gamma'}{\context}{\context_2}
       \\ \{x : n_{(x,1)}\}_{x \in \Gamma'} = \context_1
       & \{x : n_{(x,2)}\}_{x \in \Gamma'} = \context_2
       \\ \context' = \{x : \constructor{\phi}(n, n_{(x,1)}, n_{(x,2)})\}_{x \in \Gamma'}
       }{\typetransstmnt{\simpleite{e}{s_1}{s_2}}{\Gamma}{\Gamma'}{\context}{\context'}} \\[24pt]
\inference[Trans-While]{\typetransexpr{e}{\Gamma}{\texttt{bool}}{\context}{n} 
       \\ \context = \{x : v_x\}_{x \in \Gamma} \mbox{ each $v_x$ fresh} 
       & \ell' = \ell + 1
       & \typetransstmnt[\ell']{s}{\Gamma}{\Gamma}{\context}{\context'}
       \\ \{x : n_x\}_{x \in \Gamma} = \context_0
       & \{x : n'_x\}_{x \in \Gamma} = \context'
       \\ \context_\infty = \{x : \constructor{\eval_{\ell'}}(v_x, \constructor{\pass_{\ell'}}(\constructor{\neg}(n)))\}_{x \in \Gamma} \mbox{ with each $v_x$ unified with $\constructor{\theta_{\ell'}}(n_x, n'_x)$}
       }{\typetransstmnt{\simplewhile{e}{s}}{\Gamma}{\Gamma}{\context_0}{\context_\infty}} \\[24pt]
\inference[Trans-Sub]{\typetransstmnt{s}{\Gamma}{\Gamma'}{\context}{\context'}
       \\ \{x : n_x\}_{x \in \Gamma'} = \context'
       & \context'' = \{x : n_x\}_{x \in \Gamma''} \mbox{ (well defined because $\Gamma'' \subseteq \Gamma'$)}
       }{\typetransstmnt{s}{\Gamma}{\Gamma''}{\context}{\context''}} \\[30pt]
\multicolumn{1}{l}{
\fbox{\raisebox{0pt}[8pt][1pt]{$\typetransexpr{e}{\Gamma}{\tau}{\context}{n}$ \;\; (expressions)}} 
} \\[15pt]
\inference[Trans-Var]{n = \context(x) \mbox{ (well defined because $\Gamma(x) = \tau$)}
       }{\typetransexpr{x}{\Gamma}{\tau}{\context}{n}} \\[20pt]
\inference[Trans-Op]{\typetransexpr{e_1}{\Gamma}{\tau_1}{\context}{n_1}
       & \ldots
       & \typetransexpr{e_k}{\Gamma}{\tau_k}{\context}{n_k}
       }{\typetransexpr{op(e_1, \dots, e_k)}{\Gamma}{\tau}{\context}{\constructor{op}(n_1, \dots, n_k)}}
\end{tabular}
}
\caption{Type-directed translation from \SIMPLE programs to \PEGs}
\label{simple-translate}
\end{figure}

\subsection{Type-directed translation} 
\label{sec:type-directed-translation}
In this section we formalize the translation process described by the
pseudo-code implementation with a type-directed translation from
\SIMPLE programs to \PEGs, in the style of~\cite{Leijen:mlftof}.
The type-directed translation in Figure~\ref{simple-translate} is more
complicated than the implementation in
Figure~\ref{simple-implementation}, but it makes it easier to prove
the correctness of the translation. For example, the implementation
uses maps from variables to \PEG nodes, and at various points queries
these maps (for example, line 18 in Figure~\ref{simple-implementation}
queries the $\context$ map for variable $x$). The fact that these map
operations never fail relies on implicit properties which are tedious
to establish in Figure~\ref{simple-implementation}, as they rely on
the fact that the program being translated is well-typed. In the
type-directed translation, these properties are almost immediate since
the translation operates on an actual proof that the program is
well-typed. 

In fact, the rules in Figure~\ref{simple-translate} are really
representations of each case in a constructive total deterministic
function defined inductively on the proof of well-typedness. Thus when
we use the judgment $\typetransexpr{e}{\Gamma}{\tau}{\context}{n}$ as an
assumption, we are simply binding $n$ to the result of this
constructive function applied to the proof of $\Gamma \vdash e : \tau$
and the PEG context $\context$. Likewise, we use the judgment
$\typetransstmnt{s}{\Gamma}{\Gamma'}{\context}{\context'}$ to bind $\context'$ to
the result of the constructive function applied to the proof of
$\Gamma \vdash s : \Gamma'$ and the PEG context $\context$. Here we
explain how this type-directed translation works.

\mypara{Expressions} The translation process for an expression $e$
takes two inputs: (1) a derivation showing the type correctness of
$e$, and (2) a node context $\context$. The translation process produces
one output, which is the node $n$ that $e$ translates to. We formalize
this with a judgment $\typetransexpr{e}{\Gamma}{\tau}{\context}{n}$, which
states that from a derivation of $\typeexpr{\Gamma}{e}{\tau}$, and a
node context $\context$ (stating what node to use for each variable), the
translation produces node $n$ for expression $e$. For example,
consider the Trans-Op rule, which is used for a primitive operation
expression. The output of the translation process is a new \PEG node
with label $\op$, where the argument nodes $n_1 \ldots n_k$ are
determined by translating the argument expressions $e_1 \ldots e_k$.

The Trans-Var rule returns the \PEG node associated with the variable $x$ in $\context$.
The definition $n = \context(x)$ is well defined because we maintain the invariant that \hbox{$\typetransexpr{e}{\Gamma}{\tau}{\context}{n}$} is only used with contexts $\Gamma$ and $\context$ that are defined on precisely the same set of variables.
Thus, because the Type-Var rule requires $\Gamma(x) = \tau$, $\Gamma$ must be defined on $x$ and so we know $\context$ is also defined on $x$.

Note that a concrete implementation of the translation, like the one
in Figure~\ref{simple-implementation}, would explore a derivation of
\hbox{$\typetransexpr{e}{\Gamma}{\tau}{\context}{n}$} bottom-up: the
translation starts at the bottom of the derivation tree and makes
recursive calls to itself, each recursive call corresponding to a step
up in the derivation tree. Also note that there is a close relation
between the rules in Figure~\ref{simple-translate} and those in
Figure~\ref{simple-types}. In particular, the formulas on the left of
the $\vartriangleright$ correspond directly to the typing rules from
Figure~\ref{simple-types}.

\mypara{Statements} The translation for a statement $s$ takes as input
a derivation of the type-correctness of $s$, a node context capturing
the translation that has been performed up to $s$, and returns the
node context to be used in the translation of statements following
$s$. We formalize this with a judgment
$\typetransstmnt{s}{\Gamma}{\Gamma'}{\context}{\context'}$, which states that
from a derivation of $\typestmnt{\Gamma}{s}{\Gamma'}$, and a node
context $\context$ (stating what node to use for each variable in $s$),
the translation produces an updated node context $\context'$ after
statement $s$ (ignore $\ell$ for now). For example, the rule
Trans-Asgn updates the node context to map variable $x$ to the node
$n$ resulting from translating $e$ (which relies on the fact that $e$
is well typed in type context $\Gamma$).

Again, we maintain the invariant that in all the derivations we
explore, the judgment \hbox{$\typetransexpr{s}{\Gamma}{\Gamma'}{\context}{\context'}$} is
only used with contexts $\Gamma$ and $\context$ that are defined on
precisely the same set of variables, and furthermore the resulting
contexts $\Gamma'$ and $\context'$ will always be defined on the same
set of variables (although potentially different from $\Gamma$ and
$\context$).  It is fairly obvious that the rules preserve this
invariant, although Trans-Sub relies on the fact that $\Gamma''$ must
be a subcontext of $\Gamma'$. The Trans-Seq and Trans-Asgn rules are
self explanatatory, so below we discuss the more complicated rules for
control flow.

The rule Trans-If describes how to translate if-then-else statements
in \SIMPLE programs to $\phi$ nodes in \PEGs.  First, it translates
the Boolean guard expression $e$ to a node $n$ which will later be
used as the condition argument for each $\phi$ node.  Then it
translates the statement $s_1$ for the ``true'' branch, producing a
node context $\context_1$ assigning each live variable after $s_1$ to
a \PEG node representing its value after $s_1$.  Similarly, it
translates $s_2$ for the ``false'' branch, producing a node context
$\context_2$ representing the ``false'' values of each variable. Due
to the invariant we maintain, both $\context_1$ and $\context_2$ will
be defined on the same set of variables as $\Gamma'$. For each $x$
defined in $\Gamma'$, we use the name $n_{(x,1)}$ to represent the
``true'' value of $x$ after the branch (taken from $\context_1$) and
$n_{(x,2)}$ to represent the ``false'' value (taken from
$\context_2$). Finally, the rule constructs a node context $\context'$
which assigns each variable $x$ defined in $\Gamma'$ to the node
$\constructor{\phi}(n, n_{(x,1)}, n_{(x,2)})$, indicating that after
the if-then-else statement the variable $x$ has ``value'' $n_{(x,1)}$
if $n$ evaluates to true and $n_{(x,2)}$ otherwise.  Furthermore, this
process maintains the invariant that the type context $\Gamma'$ and
node context $\context'$ are defined on exactly the same set of
variables.

The last rule, Trans-While, describes how to translate while loops in
\SIMPLE programs to combinations of $\theta$, $\eval$, and $\pass$
nodes in \PEGs.  The rule starts by creating a node context $\context$
which assigns to each variable $x$ defined in $\Gamma$ a fresh
temporary variable node $v_x$.  The clause $\ell' = \ell + 1$ is used
to indicate that the body of the loop is being translated at a higher
loop depth.  In general, the $\ell$ subscript in the notation
$\transstmnt{\context}{s}{\context'}$ indicates the loop depth of the
translation.  Thus, the judgment
$\typetransstmnt[\ell']{s}{\Gamma}{\Gamma'}{\context}{\context'}$ translates
the body of the loop $s$ at a higher loop depth to produce the node
context $\context'$.  The nodes in $\context'$ are in terms of the temporary
nodes $v_x$ in $\Gamma$ and essentially represent how variables change
in each iteration of the loop.  Each variable $x$ defined in $\Gamma$
has a corresponding node $n_x$ in the node context $\context_0$ from
before the loop, again due to the invariants we maintain that $\Gamma$
and $\context$ are always defined on the same set of variables.  This
invariant also guarantees that each variable $x$ defined in $\Gamma$ also has a
corresponding node $n'_x$ in the node context $\context'$.  Thus, for each
such variable, $n_x$ provides the base value and $n'_x$ provides the
iterative value, which can now combined using a $\theta$ node.  To
this end, we unify the temporary variable node $v_x$ with the node
$\constructor{\theta_{\ell'}}(n_x, n'_x)$ to produce a recursive
expression which represents the value of $x$ at each iteration of the
loop. Lastly, the rule constructs the final node context $\context_\infty$
by assigning each variable $x$ defined in $\Gamma$ to the node
$\constructor{\eval_{\ell'}}(v_x,
\constructor{\pass_{\ell'}}(\constructor{\neg}(n)))$ (where $v_x$ has been
unified to produce the recursive expression for $x$).  The node
$\constructor{\neg}(n)$ represents the break condition of the loop; thus
$\constructor{\pass_{\ell'}}(\constructor{\neg}(n))$ represents the number of
times the loop iterates.  Note that the same $\pass$ node is used for
each variable, whereas each variable gets its own $\theta$ node. In
this manner, the rule Trans-While translates while loops to \PEGs, and
furthermore preserves the invariant that the type context $\Gamma$ and
node context $\context_\infty$ are defined on exactly the same set of
variables.

\mypara{Programs} Finally, the rule Trans-Prog shows how to use the
above translation technique in order to translate an entire \SIMPLE
program.  It creates a node context with a \PEG parameter node for
each parameter to \texttt{main}. It then translates the body of the
\texttt{main} at loop depth $0$ to produce a node context $\context$.
Since the return $\retvar$ is guaranteed to be in the final context
$\Gamma$, the invariant that $\Gamma$ and $\context$ are always defined on
the same variables ensure that there is a node $n$ corresponding to
$\retvar$ in the final node context $\context$.  This \PEG node $n$
represents the entire \SIMPLE program.

\mypara{Translation vs.\ pseudo-code}

The pseudo-code in Figure~\ref{simple-implementation} follows the
rules from Figure~\ref{simple-translate} very closely. Indeed, the
code can be seen as using the rules from the type-directed translation
to find a derivation of $\vdash p\;\; \vartriangleright \;\;
n$. The search starts at the end of the derivation tree, and moves up
from there. The entry-point of the pseudo-code is $\TranslateProg$,
which corresponds to rule Trans-Prog, the last rule to be used in a
derivation of $\vdash p\;\; \vartriangleright \;\;
n$. $\TranslateProg$ calls $\TranslateStmt$, which corresponds to
finding a derivation for
$\typetransstmnt[\ell]{s}{\Gamma}{\Gamma'}{\context}{\context'}$. Finally,
$\TranslateStmt$ calls $\TranslateExpr$, which corresponds to finding
a derivation for $\typetransexpr{e}{\Gamma}{\tau}{\context}{n}$.  Each
pattern-matching case in the pseudo-code corresponds to a rule from
the type-directed translation.

The one difference between the pseudo-code and the type-directed
translation is that in the judgments of the type-directed translation,
one of the inputs to the translation is a derivation of the type
correctness of the expression/statement/program being translated,
whereas the pseudo-code does not manipulate any derivations. This can
be explained by a simple erasure optimization in the pseudo-code:
because of the structure of the type-checking rules for \SIMPLE (in
particular there is only one rule per statement kind), the
implementation does not need to inspect the entire derivation -- it
only needs to look at the final expression/statement/program in the
type derivation (which is precisely the expression/statement/program
being translated). It is still useful to have the derivation expressed
formally in the type-directed translation, as it makes the proof of
correctness more direct. Furthermore, there are small changes that can
be made to the \SIMPLE language that prevent the erasure optimization
from being performed. For example, if we add subtyping and implicit
coercions, and we want the \PEG translation process to make coercions
explicit, then the translation process would need to look at the type
derivation to see where the subtyping rules are applied.

Because the type-directed translation in Figure~\ref{simple-translate}
is essentially structural induction on the proof that the \SIMPLE
program is well typed, we can guarantee that its implementation in
Figure~\ref{simple-implementation} terminates.  Additionally, because
of the invariants we maintain in the type-directed translation, we can
guarantee that the implementation always successfully produces a
translation. We discuss the correctness guarantees provided by the
translation below.

\subsection{Preservation of Semantics}
\label{sec:preservation-of-semantics}

While \SIMPLE is a standard representation of programs, \PEGs are far
from standard.  Furthermore, the semantics of \PEGs are even less so,
especially since the node representing the returned value is the first
to be ``evaluated''.  Thus, it is natural to ask whether the
translation above preserves the semantics of the specified programs.
We begin by defining the semantics of \SIMPLE programs, and go on to
examine their relationship to the semantics of \PEGs produced by our
algorithm.

Here we define the evaluation functions $\semvalue{\cdot}$, which
implement the operational semantics of \SIMPLE programs. We don't give
full definitions, since these are standard. These functions are
defined in terms of an \emph{evaluation context $\Sigma$}, which is a
map that for each variable $x$ gives its value $\nu = \Sigma(x)$.

\begin{defn}[Semantics of expressions]
  For a \SIMPLE expression $e$ we define $\semvalue{e}$ to be a
  partial function from evaluation contexts to values. This
  represents the standard operational semantics for \SIMPLE
  expressions. For a given $\Sigma$, $\semvalue{e}(\Sigma)$ returns
  the result of evaluating $e$, using the values of variables
  given in $\Sigma$.
\end{defn}
\begin{defn}[Semantics of statements]
  For a \SIMPLE statement $s$ we define $\semvalue{s}$ to be a
  partial function from evaluation contexts to evaluation contexts.
  This represents the standard operational semantics for \SIMPLE
  statements. For a given $\Sigma$, $\semvalue{s}(\Sigma)$ returns the
  evaluation context that results from executing $s$ in context
  $\Sigma$. If $s$ does not terminate when started in $\Sigma$, then
  $\semvalue{s}(\Sigma)$ is not defined.
\end{defn}
\begin{defn}[Semantics of programs]
  For a \SIMPLE program
  \hbox{$\mathbf{main}(x_1\!:\tau_1,\ldots,x_k\!:\tau_k)\{s\}$} and an evaluation
  context $\Sigma$ that maps each $x_i$ to an appropriately typed
  value, we define $\semvalue{\mathbf{main}}(\Sigma) =
  \semvalue{s}(\Sigma)(\mathtt{retvar})$.
\end{defn}

We will use these functions in our discussion below. For the
translation defined in Section~\ref{sec:type-directed-translation} we
have proven the following theorem (recall the substitution notation
$n[x \mapsto \nu]$ from Section~\ref{sec:formalization}).

\begin{theorem}
If \emph{(1)} $\vdash \mathbf{main}(x_1 : \tau_1, \dots, x_k :
\tau_k):\tau~\{s\} \;\; \vartriangleright \;\; n$, \emph{(2)}
$\Sigma$ is an evaluation context mapping each $x_i$ to an
appropriately typed value $\nu_i$, and \emph{(3)}~\hbox{$\hat{n} =
n[x_1\mapsto\nu_1, \dots, x_k \mapsto \nu_k]$}, then
$\semvalue{\mathbf{main}}(\Sigma) = \nu \implies
\semvalue{\hat{n}}(\lambda \ell. 0) = \nu$.
\end{theorem}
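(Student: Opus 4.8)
\textit{Proof sketch / plan.}

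The plan is to prove the theorem by strengthening it to a statement about the translations of expressions and statements and then inducting on the derivation that drives the type-directed translation of Figure~\ref{simple-translate} (which, as the paper observes, is literally a structural induction on the typing proof). The central device is a \emph{consistency} relation tying together a \SIMPLE evaluation context $\Sigma$, a \PEG node context $\context$, a parameter substitution $\sigma = [x_1 \mapsto \nu_1, \dots, x_k \mapsto \nu_k]$, and an iteration index $\i \in \I$: say $\Sigma$ is $(\sigma,\i)$-consistent with $\context$ when the two have the same domain and $\semvalue{\context(x)\sigma}(\i) = \Sigma(x)$ for every variable $x$ (here $\context(x)\sigma$ is the substitution of Section~\ref{sec:formalization} applied to $\context(x)$, and in particular the left side is defined, never $\undef$). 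I would then prove two mutually-referencing invariants. For expressions: if the translation derives $n$ from $e$ and $\context$ (with $\Gamma \vdash e : \tau$), and $\Sigma$ is $(\sigma,\i)$-consistent with $\context$, then $\semvalue{n\sigma}(\i) = \semvalue{e}(\Sigma)$ whenever $\semvalue{e}(\Sigma)$ is defined. For statements: if the translation derives $\context'$ from $s$ and $\context$ at depth $\ell$ (with $\Gamma \vdash s : \Gamma'$), if $\Sigma$ is $(\sigma,\i)$-consistent with $\context$, and if $\semvalue{s}(\Sigma)$ is defined, then $\semvalue{s}(\Sigma)$ is $(\sigma,\i)$-consistent with $\context'$. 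The theorem itself is the \textsc{Trans-Prog} instance: the initial context $\{x_i : \constructor{param}(x_i)\}$ becomes, after applying $\sigma$, consistent with $\Sigma$ at the everywhere-zero index (since $\semvalue{\Lifted{\nu_i}}$ is the constant function $\nu_i$ and $\Sigma(x_i) = \nu_i$), so applying the statement invariant to the body $s$ and reading off the entry for $\retvar$ gives $\semvalue{\hat n}(\lambda \ell. 0) = \semvalue{s}(\Sigma)(\retvar) = \semvalue{\mathbf{main}}(\Sigma) = \nu$.

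Most cases are routine. \textsc{Trans-Var} is immediate from consistency. \textsc{Trans-Op} follows because the label on the new node is $\Lifted{op}$, the $\undef$-lifted and loop-lifted version of the very same \SIMPLE operator, so it agrees pointwise with the operational semantics on defined arguments. \textsc{Trans-Asgn} and \textsc{Trans-Seq} simply thread consistency through. \textsc{Trans-Sub} restricts consistency to a smaller variable set, using $\Gamma'' \subseteq \Gamma'$. \textsc{Trans-If} case-splits on the (defined, Boolean) value of the guard node, which the expression invariant equates with $\semvalue{e}(\Sigma)$, and then invokes the definition of $\semvalue{\phi}$. Two standing obligations accompany the induction: that substitution commutes with the node semantics (so that $\semvalue{n\sigma}$ is unambiguous), and that every \PEG produced by the translation is well-formed in the sense of Definition~\ref{defn:well-formed}, so that $\semvalue{\cdot}$ is defined at all by Theorem~\ref{thm:value-existence}. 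I would discharge well-formedness by a separate easy induction: each $\constructor{\theta_{\ell'}}$ is created fresh at a strictly larger loop depth than the context it closes, the unification step is the only source of cycles (each passing through the second child of a $\theta$), and $\eval_{\ell'}$ and $\pass_{\ell'}$ are always introduced together at the matching depth, which is exactly what conditions (1)--(3) of Definition~\ref{defn:well-formed} demand.

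The real work, and the step I expect to be the main obstacle, is \textsc{Trans-While}. Here I would run a nested induction on the number $m$ of iterations the loop performs in $\Sigma$ --- finite because $\semvalue{\simplewhile{e}{s}}(\Sigma)$ is assumed defined --- along the sequence $\Sigma_0, \Sigma_1, \dots, \Sigma_m$ of evaluation contexts with $\semvalue{e}(\Sigma_j) = \true$ for $j < m$, $\semvalue{e}(\Sigma_m) = \false$, and $\semvalue{s}(\Sigma_j) = \Sigma_{j+1}$ for $j < m$. Writing $T_x = \constructor{\theta_{\ell'}}(n_x, n'_x)$ for the recursive node obtained after unifying the fresh temporary $v_x$ of the rule, the key sublemma is $\semvalue{T_x \sigma}(\i[\ell' \mapsto j]) = \Sigma_j(x)$ for all $0 \le j \le m$, proved by induction on $j$. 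The base case $j = 0$ unfolds $\theta_{\ell'}$ to its first child $n_x$, which comes from before the loop and hence does not mention $\ell'$, so it evaluates at $\i[\ell' \mapsto 0]$ the same as at $\i$, namely to $\Sigma_0(x)$ by the outer consistency hypothesis. The step case observes that, by the inner induction hypothesis, $\Sigma_j$ is $(\sigma, \i[\ell' \mapsto j])$-consistent with the temporary context $\{x : v_x\}$ (after unification $v_x$ stands for $T_x$); so the \emph{statement} invariant applied to the loop body $s$ --- translated at depth $\ell'$ from $\{x : v_x\}$ to the context whose entries are the $n'_x$ --- yields $\semvalue{n'_x \sigma}(\i[\ell' \mapsto j]) = \Sigma_{j+1}(x)$, and one more unfolding of $\theta_{\ell'}$ gives $\semvalue{T_x \sigma}(\i[\ell' \mapsto j+1]) = \Sigma_{j+1}(x)$. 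The same consistency of $\Sigma_j$ with $\{x : v_x\}$, fed to the \emph{expression} invariant, shows the guard node evaluates to $\true$ at indices $0, \dots, m-1$ and to $\false$ at $m$, all without $\undef$; hence $\monotonize_{\ell'}$ does not truncate below $m$ and $\semvalue{\constructor{\pass_{\ell'}}(\constructor{\neg}(n))\sigma}(\i) = m$. Finally $\semvalue{\constructor{\eval_{\ell'}}(v_x, \constructor{\pass_{\ell'}}(\constructor{\neg}(n)))\sigma}(\i)$ unfolds --- using that the $\pass$ value $m$ is defined --- to $\monotonize_{\ell'}(T_x\sigma)(\i[\ell' \mapsto m])$, which equals $\semvalue{T_x\sigma}(\i[\ell' \mapsto m]) = \Sigma_m(x)$ since every earlier value $\Sigma_i(x)$ is defined; this reestablishes consistency of $\semvalue{\simplewhile{e}{s}}(\Sigma) = \Sigma_m$ with $\context_\infty$. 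The delicate points to get right are the bookkeeping of which loop identifiers a node may depend on (so that fixing $\i$ on $\ell'$ is harmless both for pre-loop nodes and for the $\eval$-capped result), the handling of the fixpoint/unification step that closes each $\theta$, and verifying that no $\undef$ arises at iterations $0, \dots, m-1$ so that the $\monotonize_{\ell'}$/$\pass_{\ell'}$/$\eval_{\ell'}$ mechanism returns precisely the termination index.
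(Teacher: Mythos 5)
Your skeleton matches the paper's own proof quite closely: the paper also defines a correspondence relation between the (substituted) node context and the \SIMPLE evaluation context at a loop index (its $\semvalue{\hat{\context}}(\i)=\Sigma$), proves an expression-level and a statement-level preservation lemma by induction on the typing derivation, obtains the theorem as the Trans-Prog instance at the all-zero index, and in the \texttt{while} case reasons exactly through $\monotonize_{\ell'}$, $\pass_{\ell'}$ and $\eval_{\ell'}$ as you sketch (the paper's Lemmas 3 and 4, checked in Coq). However, there is a genuine gap in your statement-level invariant: it omits the loop-invariance side condition that the paper's Lemma 4 carries as hypothesis (2), namely that the incoming context is (syntactically) invariant with respect to every loop depth $\ell'>\ell$, i.e.\ $\sliwrt{\hat{\context}}{\ell'}$. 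Without that hypothesis the invariant you propose is false, not merely harder to prove: take $\context=\{x:\theta_1(5,7)\}$ and an index $\i$ with $\i(1)=1$, so $\Sigma(x)=7$, and translate \texttt{while (x>6) \{x:=x-1\}} at depth $0$; execution yields $x=6$, but the generated $\eval_1(\theta_1(\theta_1(5,7),\cdot),\pass_1(\cdot))$ evaluates the base child at $\i[1\mapsto 0]$, sees $5$, exits immediately and yields $5$. So consistency at $\i$ with the output context fails, and your induction cannot close.

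The spot where this bites inside your own argument is precisely the base case of the inner induction for Trans-While, where you justify $\semvalue{n_x\sigma}(\i[\ell'\mapsto 0])=\semvalue{n_x\sigma}(\i)$ by saying $n_x$ ``comes from before the loop and hence does not mention $\ell'$.'' That claim is false in general: two sequential loops at the same nesting depth are translated with the same loop identifier $\ell'=\ell+1$, so the base nodes of the second loop can contain $\theta_{\ell'}$, $\eval_{\ell'}$ and $\pass_{\ell'}$ nodes coming from the first loop --- they do mention $\ell'$, and are merely \emph{invariant} with respect to it because they are capped by $\eval_{\ell'}/\pass_{\ell'}$. The repair is exactly the machinery the paper devotes its Lemmas 1 and 2 to: prove that the translation of expressions and statements preserves the syntactic invariance predicate $\invariant_{\ell'}$ of Definition~\ref{defn:invariant} for all $\ell'$ strictly deeper than the current depth, add that as a hypothesis to your statement invariant (and re-establish it for the temporary/unified context $\{x:T_x\}$ before recursing into the body), and use the fact that syntactic invariance implies semantic invariance to justify changing the $\ell'$ coordinate of $\i$ for the pre-loop nodes and for the $\eval$-capped results. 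With that strengthening your plan goes through and coincides with the paper's; without it, the key step you flag as ``delicate'' is not merely delicate but unsupported.
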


The above theorem only states that our conversion process is
\emph{nearly} se\-man\-tics-pre\-serv\-ing, since it does not perfectly
preserve non-termination. In particular, our translation from \SIMPLE
to \PEG discards any \PEG nodes which are never used to calculate the
return value. Thus, an infinite \SIMPLE loop whose value is never used
will be removed, changing the termination behavior of the program. In
the broader setting beyond \SIMPLE, the only case where we would
change the termination behavior of the program is if there is an
infinite loop that causes no side effects (aside from non-termination)
and does not contribute to the return value of the function. It is
important to keep in mind that these loops have \emph{no} side-effects
(aside from non-termination), and so they cannot modify the heap or
perform I/O. This basically means that these loops are equivalent to a
\verb-while(true) { }- loop. Other modern compilers perform similar
transformations that remove such IO-less infinite loops which do not
contribute to the result~\cite{terminate-blog}. In fact, the newly
planned C++ standard allows the implementation to remove such IO-less
loops, even if termination cannot be
proven~\cite{CPlusPlus0X}. Nonetheless, at the end of this section, we
give a brief overview of how to encode non-termination in \PEGs.

In this theorem, we use both the function $\semvalue{\cdot}$ defined
above for \SIMPLE programs, as well as the function $\semvalue{\cdot}$
defined in Equation~\ref{eq:semantic-value-defn} for \PEG
nodes. Throughout the rest of this section we will mix our uses of the
various $\semvalue{\cdot}$ functions, and the reader can disambiguate
them based on context. The common intuition is that these functions
all implement program semantics, and thus represent executing the
program fragment they are called upon.

We proved the above theorem in full formal detail using the Coq interactive theorem prover~\cite{CoqArt}.
To conduct the proof in Coq we only had to assume the standard axioms for extensional equality of functions and of coinductive types.
The machine-checkable Coq proof is available at: \url{http://cseweb.ucsd.edu/groups/progsys/peg-coq-proof}.
Here we present only the major invariants used in this proof without showing the details of why these invariants are preserved.

Given a program \texttt{main} with parameters
$x_1,\ldots,x_k$, suppose we are given a set of actual values
$v_1,\ldots,v_k$ that correspond to the values passed in for those
parameters. Then given the \PEG $G$ created during the translation of
\texttt{main}, we can construct a new \PEG $\hat{G}$ that is identical
to $G$ except that every parameter node for $x_i$ is replaced with a
constant node for $v_i$. Thus, for every node $n \in G$, there is a
corresponding node $\hat{n}\in \hat{G}$. Furthermore, this
correspondence is natural: $\constructor{\theta}$ nodes correspond to
$\constructor{\theta}$ nodes and so on (except for the parameter nodes
which have been explicitly replaced). Similarly, every \PEG context
$\context$ for $G$ has a naturally corresponding \PEG context
$\hat{\context}$ in terms of nodes in $\hat{G}$.  We can now phrase
our primary invariants in terms of this node correspondence.

In this proof we will rely on the concept of loop-invariance of \PEG
nodes. Earlier in Section~\ref{sec:formalization}, we defined some
simple rules for determining when a node $n$ is invariant with respect
to a given loop-depth $\ell$, which we denote as $\sliwrt{n}{\ell}$.
These rules are based on the syntax of the \PEG rather than the
semantics, so we say that the rules detect \emph{syntactic
  loop-invariance}, rather than semantic loop-invariance. Syntactic
loop-invariance is a useful property since it implies semantic
loop-invariance, which is in general undecidable.  We can generalize
the notion of $\invariant_\ell$ to a \PEG context as follows.

\begin{defn}
  Given a \PEG context $\context$ and a loop-depth $\ell$, we say that
  $\context$ is syntactically loop-invariant with respect to $\ell$ if for
  each binding $(x : n) \in \context$, $n$ is syntactically loop-invariant
  with respect to $\ell$. We denote this by $\sliwrt{\context}{\ell}$.
\end{defn}

With this definition in mind, we can express the first two lemmas that
will help in our proof of semantics preservation.

\begin{lemma} 
\label{lemma:invphi-invn}
If $\typetransexpr{e}{\Gamma}{\tau}{\context}{n}$, then $\forall \ell,
\sliwrt{\hat{\context}}{\ell} \implies \sliwrt{\hat{n}}{\ell}$
\end{lemma}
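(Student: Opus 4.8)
The plan is to prove Lemma~\ref{lemma:invphi-invn} by structural induction on the derivation of the translation judgment $\typetransexpr{e}{\Gamma}{\tau}{\context}{n}$, exploiting the fact that the expression fragment of the type-directed translation in Figure~\ref{simple-translate} has only the two rules Trans-Var and Trans-Op (so the induction is effectively on the structure of $e$). Throughout I rely on the naturality of the node correspondence $n \mapsto \hat n$ between $G$ and $\hat G$ described just before the lemma: the correspondence leaves the label of every non-parameter node unchanged and commutes with node construction, so $\widehat{\constructor{op}(n_1,\dots,n_k)} = \constructor{op}(\hat n_1,\dots,\hat n_k)$ and $\widehat{\context}(x) = \widehat{\context(x)}$. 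I fix an arbitrary loop depth $\ell$ and assume $\sliwrt{\hat\context}{\ell}$, and then show $\sliwrt{\hat n}{\ell}$, i.e. $\invariant_\ell(\hat n)$.

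For the base case Trans-Var, we have $e = x$ and $n = \context(x)$, hence $\hat n = \widehat{\context}(x)$. By the definition of $\sliwrt{\hat\context}{\ell}$ preceding the lemma, every binding of $\hat\context$ is syntactically loop-invariant with respect to $\ell$, so in particular $\sliwrt{\hat n}{\ell}$ holds. For the inductive case Trans-Op, we have $e = op(e_1,\dots,e_k)$ with $op$ a \SIMPLE primitive operation, $n = \constructor{op}(n_1,\dots,n_k)$, and subderivations $\typetransexpr{e_i}{\Gamma}{\tau_i}{\context}{n_i}$ sharing the same $\context$. By naturality $\hat n = \constructor{op}(\hat n_1,\dots,\hat n_k)$, and the induction hypothesis applied to each subderivation (with the same $\ell$ and the same assumption $\sliwrt{\hat\context}{\ell}$) gives $\sliwrt{\hat n_i}{\ell}$ for every $i$. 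Now observe that the label of $\hat n$ is $op \in \DomainOperators$, which is never one of the primitive labels $\theta_\ell$, $\eval_\ell$, or $\pass_\ell$; hence rules~(1) and~(2) of Definition~\ref{defn:invariant} cannot apply to $\hat n$, and the only rule that could force $\hat n$ out of $\invariant_\ell$ is rule~(3), which fires only when some child of $\hat n$ fails to be $\invariant_\ell$. Since all children $\hat n_i$ are $\invariant_\ell$ by the induction hypothesis, rule~(3) is not triggered, so adding $\hat n$ to the predicate $\invariant_\ell$ still satisfies all three closure conditions; since $\invariant_\ell$ is by definition the \emph{largest} predicate satisfying those conditions, $\hat n \in \invariant_\ell$. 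The nullary subcase $k = 0$ (constants) is the same argument with no children, consistent with the remark after Definition~\ref{defn:invariant} that childless nodes are always $\invariant_\ell$.

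I do not expect a serious obstacle: once we note that only Trans-Var and Trans-Op can conclude an expression judgment, each case is immediate. The one point requiring a little care is the direction of the ``largest predicate'' definition of $\invariant_\ell$ — to establish $\invariant_\ell(\hat n)$ one shows that $\hat n$ is \emph{not forced out} of the predicate by the closure rules, rather than exhibiting a derivation of invariance — together with the small bookkeeping observation that every node produced by Trans-Op carries a label in $\DomainOperators$, so the special cases~(1) and~(2) of Definition~\ref{defn:invariant} can never intervene and rule~(3) governs the argument.
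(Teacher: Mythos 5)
Your proposal is correct and follows essentially the same route as the paper, which proves this lemma by induction on the derivation of $\Gamma \vdash e : \tau$ (equivalently, on the translation derivation, since the expression fragment has only Trans-Var and Trans-Op); your write-up just spells out the two cases and the coinductive reading of Definition~\ref{defn:invariant} that the paper leaves implicit (and discharges in its Coq development).
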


\begin{proof}
Proved using induction on the proof of $\Gamma \vdash e : \tau$.
\end{proof}

\begin{lemma} 
\label{lemma:invphi-invphiprime}
For all loop-depths $\ell$,  if $\typetransstmnt{s}{\Gamma}{\Gamma'}{\context}{\context'}$, then
\[\forall \ell'>\ell, \sliwrt{\hat{\context}}{\ell'} \implies \forall \ell'>\ell, \sliwrt{\hat{\context}'}{\ell'}\]
\end{lemma}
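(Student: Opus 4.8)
The plan is to prove the lemma by induction on the derivation of $\typetransstmnt{s}{\Gamma}{\Gamma'}{\context}{\context'}$, which, as noted in Section~\ref{sec:type-directed-translation}, is just structural induction on the typing proof $\Gamma \vdash s : \Gamma'$, with one case per rule of Figure~\ref{simple-translate}. Throughout, fix a depth $\ell'' > \ell$, assume $\sliwrt{\hat{\context}}{\ell''}$, and show that every node bound in $\hat{\context}'$ is $\invariant_{\ell''}$. Trans-Seq composes the two induction hypotheses directly, since both sub-statements are translated at the same depth $\ell$; Trans-Sub is immediate because its output context is a restriction of the one produced by the sub-derivation. For Trans-Asgn the only new binding is the node $\hat{n}$ translating the assigned expression, and Lemma~\ref{lemma:invphi-invn} applied to $\sliwrt{\hat{\context}}{\ell''}$ gives $\sliwrt{\hat{n}}{\ell''}$. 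For Trans-If the new bindings are nodes $\constructor{\phi}(\hat{n}, \hat{n}_{(x,1)}, \hat{n}_{(x,2)})$ whose guard $\hat{n}$ is handled by Lemma~\ref{lemma:invphi-invn} and whose branch children are handled by the induction hypothesis (the two branches are translated at depth $\ell$); since a $\phi$ node is neither $\theta_{\ell''}$, $\eval_{\ell''}$, nor $\pass_{\ell''}$, rule~(3) of Definition~\ref{defn:invariant} then makes it $\invariant_{\ell''}$.

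The substantive case is Trans-While. Here the body is translated at the strictly larger depth $\ell' = \ell+1$, and each binding of the output context has the form $\constructor{\eval_{\ell'}}(v_x, \constructor{\pass_{\ell'}}(\constructor{\neg}(n)))$ with each temporary $v_x$ unified to $\constructor{\theta_{\ell'}}(n_x, n'_x)$, where $n_x$ is a binding of $\hat{\context}_0$ (hence $\invariant_{\ell''}$ by hypothesis) and $n$, $n'_x$ come from translating the guard and the body in the temporary context $\hat{\context} = \{x : v_x\}$. I would split on whether $\ell'' = \ell'$. If $\ell'' = \ell' = \ell+1$, the binding is an $\eval_{\ell''}$ node, so by rule~(2) of Definition~\ref{defn:invariant} it is $\invariant_{\ell''}$ as soon as its second child is, and that child is a $\pass_{\ell''}$ node, which is always $\invariant_{\ell''}$ by the remark after Definition~\ref{defn:invariant}; this subcase needs neither the induction hypothesis nor Lemma~\ref{lemma:invphi-invn}, and it is exactly why the lemma quantifies over $\ell' > \ell$ rather than over all depths. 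If $\ell'' > \ell+1$, then none of $\eval_{\ell'}, \pass_{\ell'}, \theta_{\ell'}, \neg$ is the operator forbidden at depth $\ell''$, so rule~(3) successively reduces ``the binding is $\invariant_{\ell''}$'' to ``$n_x$, $n'_x$ and $n$ are $\invariant_{\ell''}$''. The guard node $n$ is $\invariant_{\ell''}$ by Lemma~\ref{lemma:invphi-invn}, and each $n'_x$ by the induction hypothesis for the body (legitimate because the body is translated at depth $\ell' = \ell+1$ and we are in the subcase $\ell'' > \ell+1$), in both cases once $\sliwrt{\hat{\context}}{\ell''}$ is known. But $\sliwrt{\hat{\context}}{\ell''}$ says precisely that each $\constructor{\theta_{\ell'}}(n_x, n'_x)$ is $\invariant_{\ell''}$, which by rule~(3) unfolds back to ``$n'_x$ is $\invariant_{\ell''}$'', so a naive inductive argument here is circular.

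I expect breaking this circularity to be the main obstacle. The key observation is that $\invariant_{\ell''}$ is, by Definition~\ref{defn:invariant}, a greatest fixpoint (the largest predicate closed under its three rules), so its membership proofs may be built coinductively. Concretely, I would prove ``$\sliwrt{\hat{\context}}{\ell''}$'' and ``each $n'_x$ is $\invariant_{\ell''}$'' by a single guarded coinduction: to place $\constructor{\theta_{\ell'}}(n_x, n'_x)$ in $\invariant_{\ell''}$ it suffices, by rule~(3) (applicable since $\ell' \neq \ell''$), to place its children there; the first child $n_x$ is already there, and for the second child $n'_x$ we apply the body's instance of the present lemma (available as a strict sub-derivation in the outer induction) to the coinductive hypothesis $\sliwrt{\hat{\context}}{\ell''}$, an appeal that is guarded because it is consumed only after descending through the $\constructor{\theta_{\ell'}}$ constructor. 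Verifying the guardedness condition, and that every node created by the while-translation fits the pattern above, is the crux; this is exactly the kind of step for which the authors' Coq development assumes the axioms for coinductive types.
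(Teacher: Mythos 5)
Your proposal follows the paper's own proof at the level of detail the paper records: induction on the typing derivation $\Gamma \vdash s : \Gamma'$, invoking Lemma~\ref{lemma:invphi-invn} for the expression premises (guards and assigned expressions), with the non-while cases discharged exactly as you describe. The additional detail you give for Trans-While --- splitting on $\ell''=\ell+1$ versus $\ell''>\ell+1$ and breaking the apparent circularity through the recursive $\theta_{\ell+1}$ nodes by coinduction on the greatest-fixpoint Definition~\ref{defn:invariant} --- is not spelled out in the paper, but it is consistent with that definition and with the Coq development's stated reliance on coinductive types, so your route and the paper's coincide.
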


\begin{proof}
Proved using induction on the proof of $\Gamma \vdash s : \Gamma'$,
with Lemma~\ref{lemma:invphi-invn}.
\end{proof}

Using the above lemmas, and the fact that $\invariant(\cdot)$ implies
semantic loop-invariance, we can proceed to the critical invariant.
First we must introduce the notion of the semantics of a \PEG
context. Given a \PEG context $\context$, there is a unique evaluation
context that is induced by $\context$ for a given loop vector
$\mathbf{i}$. Namely, it is the evaluation context that maps every
variable $x$ to the value $\semvalue{\context(x)}(\mathbf{i})$. This
provides a useful relationship between the semantics of \PEGs and the
semantics of \SIMPLE programs.

\begin{defn}
  Given \PEG context $\context$, we define $\semvalue{\context}$ to be a
  partial function from loop vectors to evaluation contexts defined by
  \[ \forall \mathbf{i}~.~ \semvalue{\context}(\mathbf{i}) = \{(x:v)~|~v=\semvalue{\context(x)}\} \]
\end{defn}

\begin{lemma} 
\label{lemma:evale-evaln}
If $\typetransexpr{e}{\Gamma}{\tau}{\context}{n}$ and $\corresponds{\hat{\context}}{\Sigma}{\mathbf{i}}$, then
\[ \forall \nu~.~ \semvalue{e}(\Sigma)=\nu ~\implies~ \hat{n}(\mathbf{i})=\nu \]
\end{lemma}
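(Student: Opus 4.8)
The plan is to prove the lemma by structural induction on the derivation of $\Gamma \vdash e : \tau$. Because the expression-typing rules of Figure~\ref{simple-types} have exactly one rule per syntactic form, this induction coincides with an induction on the derivation of $\typetransexpr{e}{\Gamma}{\tau}{\context}{n}$, which (as noted in Section~\ref{sec:type-directed-translation}) is simply the value of the constructive translation function on that typing derivation. Two cases arise: Trans-Var and Trans-Op. Throughout I will use the invariant from Section~\ref{sec:type-directed-translation} that $\Gamma$, $\context$, $\hat{\context}$, and $\Sigma$ are all defined on exactly the same set of variables, and I will unfold the hypothesis $\corresponds{\hat{\context}}{\Sigma}{\mathbf{i}}$ to its meaning: for every variable $x$ in that common domain, $\semvalue{\hat{\context}(x)}(\mathbf{i}) = \Sigma(x)$. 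Since \SIMPLE expressions contain no looping constructs, $\semvalue{e}(\Sigma)$ is defined as soon as the free variables of $e$ lie in the domain of $\Sigma$, which well-typedness guarantees; so in practice I only ever need to handle the situation where the antecedent $\semvalue{e}(\Sigma) = \nu$ actually holds.

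For the case Trans-Var, we have $e = x$ and $n = \context(x)$, hence $\hat{n} = \hat{\context}(x)$. The operational semantics of \SIMPLE gives $\semvalue{x}(\Sigma) = \Sigma(x)$, while the unfolded correspondence hypothesis gives $\hat{n}(\mathbf{i}) = \semvalue{\hat{\context}(x)}(\mathbf{i}) = \Sigma(x)$, so the two values agree. (The lookups $\context(x)$ and $\Sigma(x)$ are well defined precisely because $\Gamma(x) = \tau$ forces $x$ into the common variable domain.)

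For the case Trans-Op, we have $e = \op(e_1,\dots,e_k)$ with $\typetransexpr{e_i}{\Gamma}{\tau_i}{\context}{n_i}$ for each $i$ and $n = \constructor{\op}(n_1,\dots,n_k)$. First observe that the parameter-substitution operation $\hat{(\cdot)}$ commutes with this node: since $\constructor{\op}$ is a domain-operator label rather than a parameter label, substituting the actual values into $n$ merely substitutes recursively into each child, so $\hat{n} = \constructor{\op}(\hat{n}_1,\dots,\hat{n}_k)$ (this is also an instance of the ``natural correspondence'' between $G$ and $\hat{G}$). Now assume $\semvalue{e}(\Sigma) = \nu$. By \SIMPLE's operator semantics this means each $\semvalue{e_i}(\Sigma)$ is some value $\nu_i$ and $\op(\nu_1,\dots,\nu_k) = \nu$. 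Applying the induction hypothesis to each $e_i$ (the hypotheses $\typetransexpr{e_i}{\Gamma}{\tau_i}{\context}{n_i}$ and $\corresponds{\hat{\context}}{\Sigma}{\mathbf{i}}$ being available) yields $\hat{n}_i(\mathbf{i}) = \nu_i$ for every $i$. Finally, by the node-semantics Equation~\ref{eq:semantic-value-defn} we have $\hat{n}(\mathbf{i}) = \Lifted{\op}(\semvalue{\hat{n}_1},\dots,\semvalue{\hat{n}_k})(\mathbf{i})$, and since every argument $\semvalue{\hat{n}_i}(\mathbf{i}) = \nu_i$ is a proper value (not $\undef$), the $\undef$-lifting in $\Lifted{\op}$ is never triggered and the loop-lifted application at $\mathbf{i}$ returns $\op(\nu_1,\dots,\nu_k) = \nu$, as required.

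This lemma is essentially a warm-up for the harder statement-level lemma that comes next, and neither case contains real difficulty. The only points needing care are (i) confirming that $\hat{(\cdot)}$ commutes with the translation structure, so that $\hat{n}$ really is the operator node over $\hat{n}_1,\dots,\hat{n}_k$ --- a routine induction on the definition of substitution that can also be folded into the already-established ``$G$ versus $\hat{G}$'' correspondence --- and (ii) lining up the definition of $\Lifted{\op}$ with \SIMPLE's operator semantics, specifically checking that the $\undef$-branch of $\Lifted{\op}$ cannot fire here because all subexpression values exist. Everything else is routine bookkeeping about which maps are defined on which variables.
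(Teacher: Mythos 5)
Your proposal is correct and follows essentially the same route as the paper, which proves this lemma by induction on the derivation of $\Gamma \vdash e : \tau$ with just the Trans-Var and Trans-Op cases. The only cosmetic difference is that the paper's one-line proof also cites Lemma~\ref{lemma:invphi-invn}, whereas your argument, which stays at the single fixed iteration index $\mathbf{i}$ and notes that the $\undef$-branch of the lifted operator cannot fire, does not appear to need any invariance fact at the expression level.
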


\begin{proof}
Proved using induction on the proof of $\Gamma \vdash e : \tau$, with
Lemma~\ref{lemma:invphi-invn}.
\end{proof}

\begin{lemma}
\label{lemma:evals-evalphi}
For any loop-depth $\ell$, if \emph{(1)}
$\typetransstmnt{s}{\Gamma}{\Gamma'}{\context}{\context'}$, and \emph{(2)} for each
$\ell'>\ell, \sliwrt{\hat{\context}}{\ell'}$ holds, and \emph{(3)}
$\corresponds{\hat{\context}}{\Sigma}{\mathbf{i}}$, then
\[ \forall \Sigma'~.~ \semvalue{s}(\Sigma)=\Sigma' ~\implies~ \corresponds{\hat{\context}'}{\Sigma'}{\mathbf{i}} \]
\end{lemma}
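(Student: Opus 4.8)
The plan is to prove Lemma~\ref{lemma:evals-evalphi} by induction on the derivation of $\typetransstmnt{s}{\Gamma}{\Gamma'}{\context}{\context'}$, i.e.\ by structural induction on the proof of $\Gamma \vdash s : \Gamma'$, mirroring the shape of the translation itself; this is the statement-level counterpart of Lemma~\ref{lemma:evale-evaln}. The two tools it needs are Lemma~\ref{lemma:evale-evaln} (to discharge the expression subgoals that occur in Trans-Asgn, Trans-If, and Trans-While) and Lemma~\ref{lemma:invphi-invphiprime} (to keep hypothesis~(2), the syntactic loop-invariance of the incoming context, available across recursive calls). Since $\semvalue{s}$ is partial, the implication is vacuous whenever $\semvalue{s}(\Sigma)$ is undefined, and in the terminating case proving $\corresponds{\hat{\context}'}{\Sigma'}{\mathbf{i}}$ simultaneously records that every $\semvalue{\hat{\context}'(x)}(\mathbf{i})$ is defined. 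The non-loop cases are routine. Trans-Seq: from $\semvalue{s_1;s_2}(\Sigma) = \Sigma'$ we get an intermediate $\Sigma_1 = \semvalue{s_1}(\Sigma)$; the inner IH on $s_1$ gives $\corresponds{\hat{\context}'}{\Sigma_1}{\mathbf{i}}$, Lemma~\ref{lemma:invphi-invphiprime} transports hypothesis~(2) from $\hat{\context}$ to $\hat{\context}'$, and the inner IH on $s_2$ closes it. Trans-Asgn: Lemma~\ref{lemma:evale-evaln} gives $\hat{n}(\mathbf{i}) = \semvalue{e}(\Sigma)$, exactly the binding appended by the operational semantics. Trans-If: Lemma~\ref{lemma:evale-evaln} shows $\hat{n}(\mathbf{i})$ is the (defined) Boolean $\semvalue{e}(\Sigma)$, and a case split together with the inner IH on $s_1$ or $s_2$ (each inheriting hypothesis~(2) unchanged, since the branches are translated at the same depth and context) produces exactly the values selected by the $\phi$-nodes' semantics. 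Trans-Sub: immediate from the inner IH on the subderivation, restricting both the node context and the evaluation context to $\Gamma'' \subseteq \Gamma'$.

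The substantive case is Trans-While, $s = \simplewhile{e}{s_0}$ with $\Gamma' = \Gamma$. Using that rule's names, $\context_0$ is the incoming context, and after the unification each temporary becomes the recursive node $\widehat{\context}(x) = \constructor{\theta_{\ell+1}}(\widehat{n_x}, \widehat{n'_x})$, where $n_x = \context_0(x)$ and $n'_x = \context'(x)$ is the body-translation output; write $\hat{B}$ for the $\bot$-lifted break-condition node built from $\constructor{\neg}(n)$ after the unification. Assume $\semvalue{\simplewhile{e}{s_0}}(\Sigma) = \Sigma'$; the operational semantics yields some $K \ge 0$ and a chain $\Sigma = \Sigma^{(0)}, \ldots, \Sigma^{(K)} = \Sigma'$ with $\Sigma^{(j+1)} = \semvalue{s_0}(\Sigma^{(j)})$, $\semvalue{e}(\Sigma^{(j)}) = \true$ for $j < K$, and $\semvalue{e}(\Sigma^{(K)}) = \false$. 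I would then establish, by an inner induction on $j = 0, \ldots, K$, the key claim $\semvalue{\widehat{\context}(x)}(\mathbf{i}[\ell+1 \mapsto j]) = \Sigma^{(j)}(x)$ for all $x$, i.e.\ $\corresponds{\widehat{\context}}{\Sigma^{(j)}}{\mathbf{i}[\ell+1\mapsto j]}$. The base case $j = 0$ reduces, via the $\theta$ semantics, to $\semvalue{\widehat{n_x}}(\mathbf{i}[\ell+1\mapsto 0]) = \semvalue{\widehat{n_x}}(\mathbf{i}) = \Sigma(x)$, using hypothesis~(2) at $\ell+1 > \ell$: $\widehat{n_x} = \widehat{\context_0}(x)$ is $\invariant_{\ell+1}$ and hence semantically invariant on loop $\ell+1$. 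The step case evaluates $\widehat{\context}(x)$ at $\mathbf{i}[\ell+1\mapsto j+1]$ into its loop child $\widehat{n'_x}$ at $\mathbf{i}[\ell+1\mapsto j]$; invoking the outer IH on $s_0$ at depth $\ell+1$, with incoming context $\context$ (now $\theta$-closed), loop vector $\mathbf{i}[\ell+1\mapsto j]$, and the correspondence hypothesis supplied by the inner IH at $j$, gives $\corresponds{\widehat{\context'}}{\Sigma^{(j+1)}}{\mathbf{i}[\ell+1\mapsto j]}$, which is exactly $\semvalue{\widehat{n'_x}}(\mathbf{i}[\ell+1\mapsto j]) = \Sigma^{(j+1)}(x)$. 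Once the claim holds, Lemma~\ref{lemma:evale-evaln} gives $\semvalue{\hat{B}}(\mathbf{i}[\ell+1\mapsto j]) = \neg\,\semvalue{e}(\Sigma^{(j)})$, which is $\false$ for $j < K$ and $\true$ for $j = K$; since all earlier values are defined, $\monotonize_{\ell+1}$ is inert, so $\semvalue{\constructor{\pass_{\ell+1}}(\hat{B})}(\mathbf{i}) = K$, and then $\semvalue{\constructor{\eval_{\ell+1}}}$ (again with $\monotonize_{\ell+1}$ inert because the claim makes $\semvalue{\widehat{\context}(x)}$ defined at all indices below $K$) yields $\semvalue{\widehat{\context_\infty}(x)}(\mathbf{i}) = \semvalue{\widehat{\context}(x)}(\mathbf{i}[\ell+1\mapsto K]) = \Sigma^{(K)}(x) = \Sigma'(x)$, i.e.\ $\corresponds{\widehat{\context_\infty}}{\Sigma'}{\mathbf{i}}$.

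The genuine obstacle sits inside the step of that inner induction: to invoke the outer IH on the body $s_0$ I must supply \emph{its} hypothesis~(2), namely that the $\theta$-closed incoming context $\widehat{\context}$ is syntactically loop-invariant at every depth $\ell'' > \ell+1$. Each $\constructor{\theta_{\ell+1}}(\widehat{n_x}, \widehat{n'_x})$ has $\widehat{n_x}$ invariant at such $\ell''$ (from the outer hypothesis~(2)), but the second child $\widehat{n'_x}$ comes out of the body translation and refers back to the $\theta$-nodes themselves, so this invariance cannot be obtained by ordinary structural recursion. I would discharge it by a coinductive argument exploiting that $\invariant_{\ell''}$ is a greatest fixed point (Definition~\ref{defn:invariant}): take as coinductive hypothesis that all the $\constructor{\theta_{\ell+1}}(\widehat{n_x}, \widehat{n'_x})$ are $\invariant_{\ell''}$, apply Lemma~\ref{lemma:invphi-invphiprime} to the body judgment $\typetransstmnt[\ell+1]{s_0}{\Gamma}{\Gamma}{\context}{\context'}$ to conclude each $\widehat{n'_x}$ is $\invariant_{\ell''}$, and observe that rule~(3) of Definition~\ref{defn:invariant} is then met at each $\theta$-node (its label is $\theta_{\ell+1}$, not $\pass_{\ell''}$, and both children are invariant), which closes the coinductive obligation. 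This interleaving of a coinductive invariance argument within the iteration induction, together with the definedness bookkeeping that keeps $\monotonize$ inert, is the crux; the remaining details simply track the translation rules and reuse the machinery of Lemma~\ref{lemma:evale-evaln}.
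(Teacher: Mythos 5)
Your proposal follows essentially the same route as the paper's proof: induction on the derivation of $\Gamma \vdash s : \Gamma'$, discharging expression obligations with Lemma~\ref{lemma:evale-evaln}, transporting hypothesis~(2) with Lemma~\ref{lemma:invphi-invphiprime}, and handling Trans-While via the fact that syntactic invariance of the incoming context implies semantic invariance, hence correspondence at all loop vectors differing from $\mathbf{i}$ only at depth $\ell+1$. The additional machinery you spell out---the inner induction on the iteration count, the $\monotonize$ bookkeeping for $\eval$/$\pass$, and the coinductive argument (on the greatest-fixpoint definition of $\invariant_{\ell''}$) that the $\theta$-closed context satisfies hypothesis~(2) for the recursive call---is detail the paper leaves to its Coq development rather than a genuinely different approach.
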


\begin{proof}
Proved using induction on the proof of $\Gamma \vdash s : \Gamma'$,
with Lemmas~\ref{lemma:invphi-invphiprime} and
\ref{lemma:evale-evaln}. For the \textbf{while} case, the proof relies
on the fact that syntactic invariance of $\hat{\context}$ implies semantic
invariance of $\hat{\context}$, which implies that $\hat{\context}$
corresponds to $\Sigma$ at all loop vectors $\mathbf{i}'$ which only
differ from $\mathbf{i}$ at loop-depth $\ell + 1$.
\end{proof}

Our semantics-preservation theorem is a direct corollary of the above
lemma, and so we have shown that the evaluation of a \PEG is
equivalent to the evaluation of its corresponding \SIMPLE program,
modulo termination.

\mypara{Preserving Non-Termination}

There is a simple change to \PEGs that would allow them to preserve
non-termination, even for loops that don't contribute to the result.
In particular, we can use the concept of an effect token. For our task
of preserving non-termination, the effect token will encode the
non-termination effect, although one can use a similar strategy for
other effects (and in fact we use a similar strategy for encoding heap
modifications and exceptions using a $\sigma$ heap summary node in
Section~\ref{sec:implementation}). Any effectful operation must take
an effect token. If the operation might also change the state of the
effect, then it must produce an effect token as output. In \SIMPLE,
the $\div$ operation could modify our non-termination effect token, if
we choose to encode division-by-zero as non-termination (since \SIMPLE
does not contain exceptions). In we added functions to \SIMPLE, then
function calls would also consume and produce a non-termination effect
token, since the function call could possibly not terminate.

For every loop, there is one $\pass$ node (although there may be many
$\eval$ nodes), and evaluation of a $\pass$ node fails to terminate if
the condition is never true. As a result, since a $\pass$ node may
fail to terminate, it therefore must take and produce a
non-termination effect token. In particular, we would modify the
$\pass$ node to take two inputs: a node representing the break
condition of the loop and a node representing how the state of the
effect changes inside the loop. The $\pass$ node would also be
modified to have an additional output, which is the state of the
effect at the point when the loop terminates. Then, the translation
process is modified to thread this effect token through all the
effectful operations (which includes $\pass$ nodes), and finally
produce a node representing the effect of the entire function. This
final effect node is added as an output of the function, along with
the node representing the return value.

Whereas before an infinite loop would be elided if it does not
contribute to the final return value of the program, now the $\pass$
node of the loop contributes to the result because the effect token is
threaded through it and returned. This causes the $\pass$ node to
be evaluated, which causes the loop's break condition to be evaluated,
which will lead to non-termination since the condition is never
true.

\begin{figure}
\begin{center}
\includegraphics[width=\textwidth]{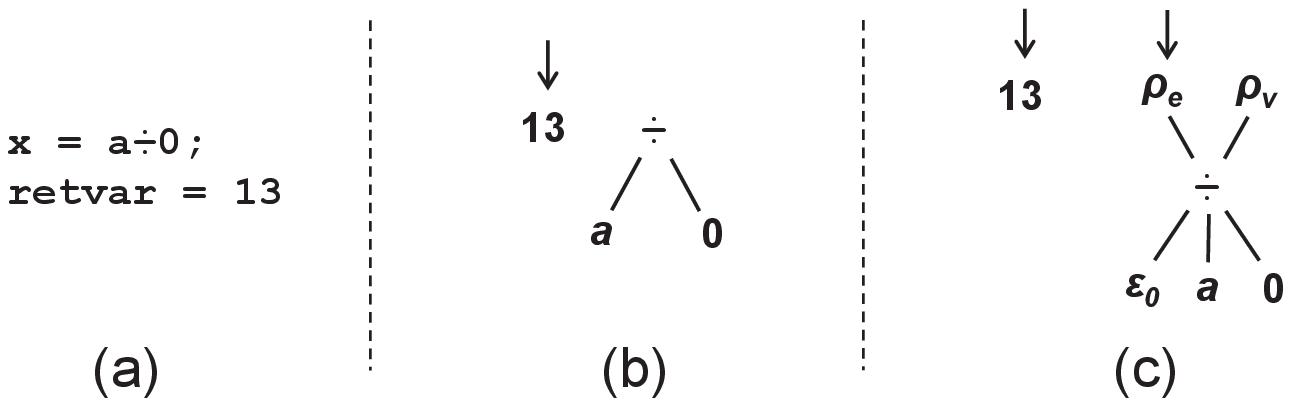}
\end{center}
\caption{Representation of a division by zero: (a) the original source
  code, (b) the \PEG produced without effect tokens, (c) the \PEG
  produced with effect tokens. The arrows indicate the return values.}
\label{fig:divide-by-zero}
\end{figure}

We present an example in Figure~\ref{fig:divide-by-zero} to
demonstrate the use of effect tokens. The \SIMPLE code in part (a)
shows an attempt to divide by zero, followed by a return value of
$13$. Let's assume that $\div$ does not terminate when dividing by
0. A similar encoding works if $\div$ throws an exception instead of
running forever -- we describe our encoding of exceptions in
Section~\ref{sec:implementation}, but in short the example would look
essentially the same, except that we would use a $\sigma$ heap summary
node instead of an effect token. Part (b) shows the corresponding \PEG
without effect tokens. The arrow indicates the return value, which is
$13$. Even though this \PEG has the nodes for the division by zero,
they are not reachable from the return value, and hence the \PEG would
be optimized to $13$, which would remove the divide-by-zero (thus
changing the termination behavior of the code).

Using a non-termination effect token can fix this problem, by
producing the \PEG in part (c). The division operator now returns a
tuple of $(\mathit{effect},\mathit{value})$ and the components are
fetched using $\rho_e$ and $\rho_v$ respectively. As previously, we
have $13$ as a return value, but we now have an additional return
value: the effect token produced by the \PEG. Since the division is
now reachable from the return values, it is not removed anymore, even
if the value of the division (the $\rho_v$ node) is never used.

\section{Reverting a \PEG to Imperative Code}
\label{sec:peg2cfg}

In this section we present the complement to the previous section: a
procedure for converting a \PEG back to a \SIMPLE program. Whereas the
translation from \SIMPLE programs to \PEGs was fairly simple, the
translation back (which we call reversion) is much more
complicated. Since the order of \SIMPLE execution is specified
explicitly, \SIMPLE programs have more structure than \PEGs, and so it
is not surprising that the translation from \PEGs back to \SIMPLE is
more complex than the translation in the forward direction. Because of
the complexity involved in reversion, we start by presenting a
simplified version of the process in
Sections~\ref{sec:well-typed-pegs}
through~\ref{sec:reversion-seq}. This simple version is correct but
produces \SIMPLE programs that are inefficient because they contain a
lot of duplicated code. We then present in Sections~\ref{sec:loop-fusion}
through~\ref{sec:loop-invariant-code-motion} several optimizations on
top of the simple process to improve the quality of the generated code
by removing the code duplication. These optimizations include branch
fusion, loop fusion, hoisting code that is common to both sides of a
branch, and hoisting code out of loops. In the setting of \SIMPLE,
these optimizations are optional -- they improve the performance of
the generated code, but are not required for correctness. However, if
we add side-effecting operations like heap reads/writes (as we do in
Section~\ref{sec:implementation}), these optimizations are not
optional anymore: they are needed to make sure that we don't
incorrectly duplicate side-effecting operations.
In our technical report~\cite{peg2cfg},] we present more advanced techniques for reverting \PEGs to CFGs rather than \SIMPLE programs, taking advantage of the flexibility of CFGs in order to produce even more efficient translations of \PEGs. This is particularly important when reverting \PEGs translated from programs in a language with more advanced control flow such as \texttt{break}s and \texttt{continue}s.

\subsection{CFG-like PEGs}
\label{sec:well-typed-pegs}

Before we can proceed, we need to define the precondition of our
reversion process. In particular, our reversion process assumes that
the \PEGs we are processing are \emph{CFG-like}, as formalized by the
following definition.

\begin{defn}[CFG-like \PEG context]
We say that a \PEG context $\context$ is CFG-like if $\Gamma \vdash \context
: \Gamma'$ using the rules in Figure~\ref{peg-types}.
\end{defn}

\begin{figure}[t]
{
\begin{tabular}{c}
\multicolumn{1}{l}{
\fbox{\raisebox{0pt}[8pt][1pt]{$\Gamma \vdash \context : \Gamma'$}} 
} \\[15pt]
\inference[Type-PEG-Context]{\forall (x : \tau) \in \Gamma'. \;\; \context(x) = n
  \Rightarrow \Gamma \vdash n : \tau}{\Gamma \vdash \context : \Gamma'}\\[20pt]

\multicolumn{1}{l}{
\fbox{\raisebox{0pt}[8pt][1pt]{$\Gamma \vdash n : \tau$}} 
} \\[15pt]
\inference[Type-PEG]{\Gamma, 0, \varnothing \vdash n : \tau
       }{\Gamma \vdash n : \tau} \\[20pt]
\multicolumn{1}{l}{
\fbox{\raisebox{0pt}[8pt][1pt]{$\Gamma, \ell, \Theta \vdash n : \tau$}} 
} \\[15pt]
\inference[Type-Param]{\Gamma(x) = \tau
       }{\Gamma, \ell, \Theta \vdash \overline{param}(x) : \tau} \\[20pt]
\inference[Type-Op]{op : (\tau_1, \dots, \tau_n) \to \tau
       & \Gamma, \ell, \Theta \vdash n_1 : \tau_1 \;\; \dots \;\; \Gamma, \ell, \Theta \vdash n_n : \tau_n
       }{\Gamma, \ell, \Theta \vdash \overline{op}(n_1, \dots, n_n) : \tau} \\[20pt]
\inference[Type-Phi]{\Gamma, \ell, \Theta \vdash c : \texttt{bool}
       & \Gamma, \ell, \Theta \vdash t : \tau
       & \Gamma, \ell, \Theta \vdash f : \tau
       }{\Gamma, \ell, \Theta \vdash \overline{\phi}(c, t, f) : \tau} \\[20pt]
\inference[Type-Theta]{\ell' = \ell - 1
       \\ \Gamma, \ell', \Theta \vdash b : \tau
       & \Gamma, \ell, (\Theta, (\ell \vdash \overline{\theta_\ell}(b, n) : \tau)) \vdash n : \tau
       }{\Gamma, \ell, \Theta \vdash \overline{\theta_\ell}(b, n) : \tau} \\[20pt]
\inference[Type-Eval-Pass]{\ell = \ell' + 1
       & \forall \ell_1, n, \tau'.[(\ell_1 \vdash n : \tau') \in \Theta \Rightarrow \ell_1 < \ell']
       \\ \Gamma, \ell', \Theta \vdash v : \tau
       & \Gamma, \ell', \Theta \vdash c : \texttt{bool}
       }{\Gamma, \ell, \Theta \vdash \overline{\eval_{\ell'}}(v, \overline{\pass_{\ell'}}(c)) : \tau} \\[20pt]
\inference[Type-Reduce]{\ell \geq \ell'
       & \Theta \supseteq \Theta'
       & \Gamma, \ell', \Theta' \vdash n : \tau
       }{\Gamma, \ell, \Theta \vdash n : \tau} \\[20pt]
\inference[Type-Assume]{(\ell \vdash n : \tau) \in \Theta
       }{\Gamma, \ell, \Theta \vdash n : \tau}
\end{tabular}
}
\caption{Rules for defining CFG-like PEGs}
\label{peg-types}
\end{figure}

The rules in Figure~\ref{peg-types} impose various restrictions on the
structure of the \PEG which makes our reversion process simpler. For
example, these rules guarantee that the second child of an $\eval$
node is a $\pass$ node, and that by removing the second outgoing edge
of each $\theta$ node, the \PEG becomes acyclic. If a \PEG context is
CFG-like, then it is well-formed (Definition~\ref{defn:well-formed}
from Section~\ref{sec:formalization}). Furthermore, all \PEGs produced
by our \SIMPLE-to-\PEG translation process from
Section~\ref{sec:cfg2peg} are CFG-like.  However, not all
well-formed \PEGs are CFG-like, and in fact it is useful for equality
saturation to consider \PEGs that are not CFG-like as intermediate
steps. To guarantee that the reversion process will work during
optimization, the Pseudo-boolean formulation described in
Section~\ref{sec:heuristic} ensures that the \PEG selected for
reversion is CFG-like.

In Figure~\ref{peg-types}, $\Gamma$ is a type context assigning
parameter variables to types.  $\ell$ is the largest loop-depth with
respect to which the \PEG node $n$ is allowed to be loop-variant;
initializing $\ell$ to $0$ requires $n$ to be loop-invariant with
respect to all loops.  $\Theta$ is an assumption context used to
type-check recursive expressions. Each assumption in $\Theta$ has the
form $\ell \vdash n:\tau$, where $n$ is a $\theta_\ell$ node;
$\ell \vdash n:\tau$ states that $n$ has type $\tau$ at loop depth
$\ell$. Assumptions in $\Theta$ are introduced in the Type-Theta rule,
and used in the Type-Assume rule. The assumptions in $\Theta$ prevent
``unsolvable'' recursive PEGs such as $x = 1 + x$ or ``ambiguous''
recursive PEGs such as $x = 0 * x$.  The requirement in Type-Eval-Pass
regarding $\Theta$ prevents situations such as $x =
\overline{\theta_2}(\eval_1(\eval_2(x, \dots), \dots), \dots)$, in
which essentially the initializer for the nested loop is the final
result of the outer loop. 

\begin{figure}
\includegraphics[width=\textwidth]{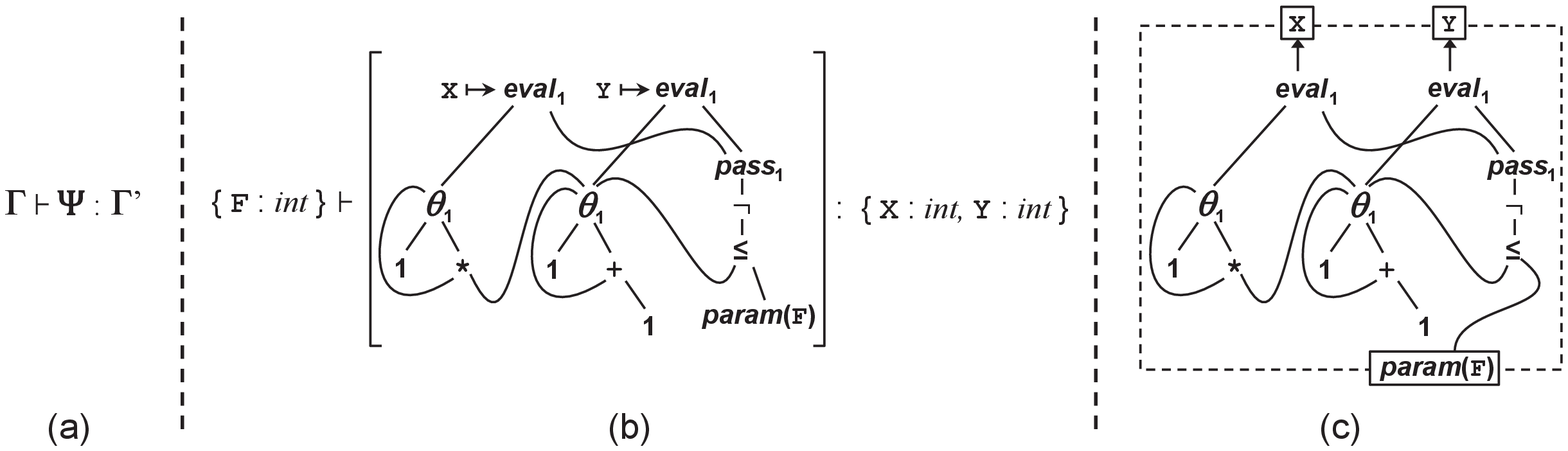}
\caption{Visual representation of the judgment for CFG-like \PEG
  contexts. (a) shows the syntactic form of the judgment; (b) shows an
  example of the syntactic form; and (c) shows the same example in
  visual form.}
\label{notation-peg}
\end{figure}

Although $\Gamma \vdash \context : \Gamma'$ is the syntactic form which we
will use in the body of the text, our diagrams will use the visual
representation of $\Gamma \vdash \context : \Gamma'$ shown in
Figure~\ref{notation-peg}. Part (a) of the figure shows the syntactic
form of the judgment (used in the text); (b) shows an example of the
syntactic form; and finally (c) shows the same example in the visual
form used in our diagrams.

\subsection{Overview}
\label{sec:reversion-overview}

We can draw a parallel between CFG-like PEG contexts and well-typed
statements: both can be seen as taking inputs $\Gamma$ and producing
outputs $\Gamma'$. With this parallel in mind, our basic strategy for
reverting \PEGs to \SIMPLE programs is to recursively translate
CFG-like \PEG contexts $\Gamma \vdash \context : \Gamma'$ to well-typed
\SIMPLE statements $\Gamma \vdash s : \Gamma'$. Therefore, the 
precondition for our reversion algorithm is that the \PEG context we
revert must be CFG-like according to the rules in
Figure~\ref{peg-types}.

For the reversion process, we make two small changes to the type
checking rules for \SIMPLE programs. First, we want to allow the
reversion process to introduce temporary variables without having to
add them to $\Gamma'$ (in $\Gamma \vdash s : \Gamma'$). To this end,
we allow intermediate variables to be dropped from $\Gamma'$, so that
$\Gamma \vdash s : \Gamma'$ and $\Gamma'' \subseteq \Gamma'$ implies
$\Gamma \vdash s : \Gamma''$.

Second, to more clearly highlight which parts of the generated \SIMPLE
code modify which variables, we introduce a notion $\Gamma_0; \Gamma
\vdash s : \Gamma'$ of \SIMPLE statements $s$ which use but do not
modify variables in $\Gamma_0$ (where $\Gamma$ and $\Gamma_0$ are
disjoint). We call $\Gamma_0$ the \emph{immutable context}. The rules
in Figure~\ref{simple-types} can be updated appropriately to disallow
variables from $\Gamma_0$ to be modified. Similarly, we add $\Gamma_0$
to the notion of CFG-like \PEG contexts: $\Gamma_0; \Gamma \vdash
\context : \Gamma'$. Since \PEG contexts cannot modify variables anyway,
the semantics of bindings in $\Gamma_0$ is exactly the same as
bindings in $\Gamma$ (and so we do not need to update the rules from
Figure~\ref{peg-types}). Still, we keep an immutable context
$\Gamma_0$ around for \PEG contexts because $\Gamma_0$ from the \PEG
context gets reflected into the generated \SIMPLE statements where it
has a meaning. Thus, our reversion algorithm will recursively
translate CFG-like \PEG contexts $\Gamma_0;\Gamma \vdash \context :
\Gamma'$ to well-typed \SIMPLE statements $\Gamma_0;\Gamma \vdash s :
\Gamma'$.

\newcommand{\initwtpeg}[1]{\overline{#1}}

Throughout the rest of section~\ref{sec:peg2cfg}, we assume that there
is a \PEG context
$\initwtpeg{\Gamma_0};\initwtpeg{\Gamma}\vdash\initwtpeg{\context}:\initwtpeg{\Gamma'}$
that we are currently reverting. Furthermore, we define $\Gamma_0$ to
be:
\begin{equation}
\Gamma_0 = \initwtpeg{\Gamma_0} \cup \initwtpeg{\Gamma}
\label{gamma0}
\end{equation}
As will become clear in Section~\ref{sec:reversion-loops}, the
$\Gamma_0$ from Equation~\eqref{gamma0} will primarily be used as the
immutable context in recursive calls to the reversion algorithm. The
above definition of $\Gamma_0$ states that when making a recursive
invocation to the reversion process, the immutable context for the
recursive invocation is the entire context from the current
invocation.

\begin{figure}
\includegraphics[width=\textwidth]{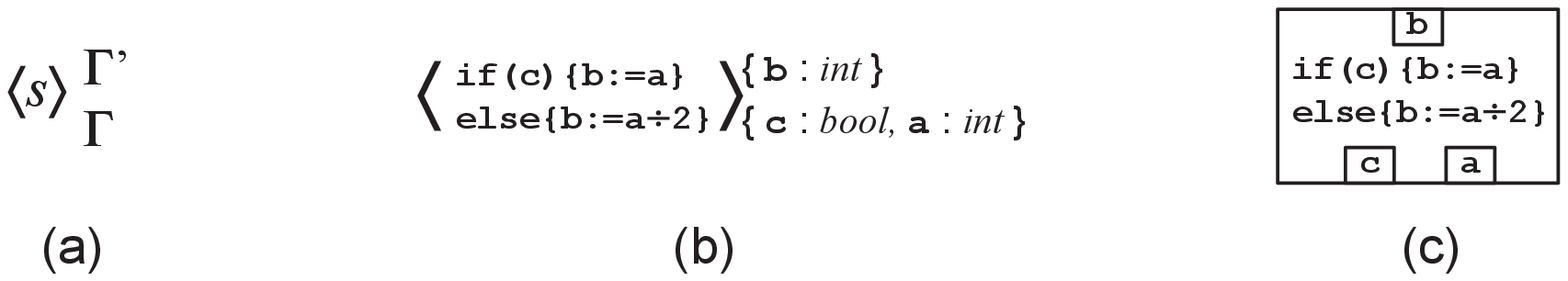}
\caption{Diagrammatic representation of a statement node}
\label{notation-statement}
\end{figure}

\mypara{Statement nodes} The major challenge in reverting a CFG-like
\PEG context lies in handling the primitive operators for encoding
control flow: $\phi$ for branches and $\eval$, $\pass$ and $\theta$
for loops. To handle such primitive nodes, our general approach is to
repeatedly replace a subset of the \PEG nodes with a new kind of \PEG
node called a \emph{statement node}. A statement node is a \PEG node
$\langle s \rangle_\Gamma^{\Gamma'}$ where $s$ is a \SIMPLE statement
satisfying $\Gamma_0; \Gamma \vdash s : \Gamma'$ (recall that $\Gamma_0$
comes from Equation~\eqref{gamma0}). The node has many inputs, one for
each variable in the domain $\Gamma$, and unlike any other node we've
seen so far, it also has many outputs, one for each variable of
$\Gamma'$. A statement node can be perceived as a primitive operator
which, given an appropriately typed list of input values, executes the
statement $s$ with those values in order to produce an appropriately
typed list of output values. Although $\langle s
\rangle_\Gamma^{\Gamma'}$ is the syntactic form which we will use in
the body of the text, our diagrams will use the visual representation
of $\langle s \rangle_\Gamma^{\Gamma'}$ shown in
Figure~\ref{notation-statement}. Part (a) of the figure shows the
syntactic form of a statement node (used in the text); (b) shows an
example of the syntactic form; and finally (c) shows the same example
in the visual form used in our diagrams. Note that the visual
representation in part (c) uses the same visual convention that we
have used throughout for all \PEG nodes: the inputs flow into the
bottom side of the node, and the outputs flow out from the top side of
the node.

The general approach we take is that $\eval$ nodes will be replaced
with while-loop statement nodes (which are statement nodes $\langle s
\rangle_\Gamma^{\Gamma'}$ where $s$ is a while statement) and $\phi$
nodes will be replaced with if-then-else statement nodes (which are
statement nodes $\langle s \rangle_\Gamma^{\Gamma'}$ where $s$ is an
if-then-else statement). To this end, our most simplistic reversion
algorithm, which we present first, converts \PEG contexts to
statements in three steps:
\begin{enumerate}
\item We replace all $\eval$, $\pass$, and $\theta$ nodes with
  while-loop statement nodes. This results in an acyclic \PEG.
\item We replace all $\phi$ nodes with if-then-else statement nodes.
  This results in a PEG with only statement nodes and domain operators
  such as $+$ and $*$ (that is to say, there are no more $\eval$,
  $\pass$, $\theta$ or $\phi$ nodes).
\item We sequence the statement nodes and domain operators into
  successive assignment statements. For a statement node $\langle s
  \rangle_\Gamma^{\Gamma'}$, we simply inline the statement $s$ into
  the generated code.
\end{enumerate}
We present the above three steps in
Sections~\ref{sec:reversion-loops}, \ref{sec:reversion-branches}
and~\ref{sec:reversion-seq}, respectively. Finally, since the process
described above is simplistic and results in large amounts of code
duplication, we present in sections~\ref{sec:loop-fusion}
through~\ref{sec:loop-invariant-code-motion} several optimizations
that improve the quality of the generated \SIMPLE code.

\subsection{Translating Loops}
\label{sec:reversion-loops}

In our first pass, we repeatedly convert each loop-invariant $\eval$
node, along with the appropriate $\pass$ and $\theta$ nodes, into a
while-loop statement node.  The nested loop-variant $\eval$ nodes will
be taken care of when we recursively revert the ``body'' of the
loop-invariant $\eval$ nodes to statements. For each loop-invariant
$\eval_\ell$ node, we apply the process described below.

First, we identify the set of $\theta_\ell$ nodes reachable from the
current $\eval_\ell$ node or its $\pass_\ell$ node without passing
through other loop-invariant nodes (in particular, without passing
through other $\eval_\ell$ nodes). Let us call this set of nodes
$S$. As an illustrative example, consider the left-most \PEG context
in Figure~\ref{reversion-eval-example}, which computes the factorial
of $10$. When processing the single $\eval_1$ node in this \PEG, the
set $S$ will contain both $\theta_1$ nodes. The intuition is that each
$\theta$ node in $S$ will be a loop variable in the \SIMPLE code we
generate. Thus, our next step is to assign a fresh variable $x$
for each $\theta_\ell$ node in $S$; let $b_x$ refer to the first child
of the $\theta_\ell$ node (i.e. the base case), and $i_x$ refer to the
second child (i.e. the iterative case); also, given a node $n \in S$,
we use $\var(n)$ for the fresh variable we just created for $n$. In
the example from Figure~\ref{reversion-eval-example}, we created two
fresh variables $\texttt{x}$ and $\texttt{y}$ for the two $\theta_1$
nodes in $S$. After assigning fresh variables to all nodes in $S$, we
then create a type context $\Gamma$ as follows: for each $n \in S$,
$\Gamma$ maps $\var(n)$ to the type of node $n$ in the \PEG, as given
by the type-checking rules in Figure~\ref{peg-types}. For example, in
Figure~\ref{reversion-eval-example} the resulting type context
$\Gamma$ is $\{\texttt{x}: \texttt{int}, \texttt{y}: \texttt{int}\}$.

\begin{figure}
\includegraphics[width=\textwidth]{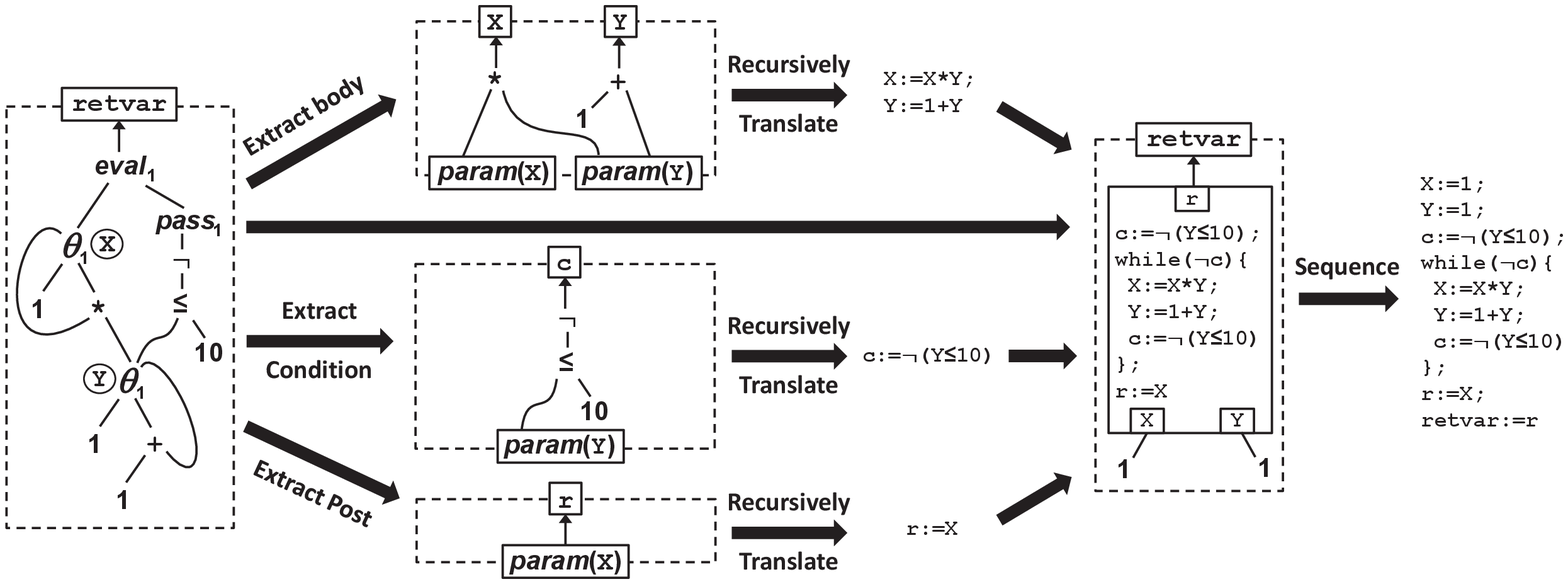}
\caption{Example of converting $\eval$ nodes to while-loop statement nodes}
\label{reversion-eval-example}
\end{figure}

Second, we construct a new \PEG context $\context_i$ that represents the
body of the loop. This \PEG context will state in \PEG terms how the
loop variables are changed in one iteration of the loop. For each
variable $x$ in the domain of $\Gamma$, we add an entry to $\context_i$
mapping $x$ to a copy of $i_x$ (recall that $i_x$ is the second child
of the $\theta$ node which was assigned variable $x$). The copy of
$i_x$ is a fully recursive copy, in which descendants have also been
copied, but with one important modification: while performing the
copy, when we reach a node $n \in S$, we don't copy $n$; instead we
use a parameter node referring to $\var(n)$. This has the effect of
creating a copy of $i_x$ with any occurrence of $n \in S$ replaced by
a parameter node referring to $\var(n)$, which in turn has the effect
of expressing the next value of variable $x$ in terms of the current
values of all loop variables. From the way it is constructed, $\context_i$
will satisfy $\Gamma_0;\Gamma \vdash \context_i : \Gamma$, essentially
specifying, in terms of \PEGs, how the loop variables in $\Gamma$ are
changed as the loop iterates (recall that $\Gamma_0$ comes from
Equation~\eqref{gamma0}). Next, we recursively revert $\context_i$
satisfying $\Gamma_0; \Gamma \vdash \context_i : \Gamma$ to a \SIMPLE
statement $s_i$ satisfying $\Gamma_0; \Gamma \vdash s_i : \Gamma$.
The top-center \PEG context in Figure~\ref{reversion-eval-example}
shows $\context_i$ for our running example. In this case $\context_i$ states
that the body of the loop modifies the loop variables as follows: the
new value of $\texttt{x}$ is $\texttt{x*y}$, and the new value of
$\texttt{y}$ is $\texttt{1+y}$. Figure~\ref{reversion-eval-example}
also shows the \SIMPLE statement resulting from the recursive
invocation of the reversion process.

Third, we take the second child of the $\eval$ node that we are
processing. From the way the \PEG type rules are setup in
Figure~\ref{peg-types}, this second child must be the $\pass$ node of
the $\eval$. Next, we take the first child of this $\pass$ node, and
make a copy of this first child with any occurrence of $n \in S$
replaced by a parameter node referring to $\var(n)$. Let $c$ be the
\PEG node produced by this operation, and let $\context_c$ be the
singleton \PEG context $\{x_c : c\}$, where $x_c$ is fresh. $\context_c$
represents the computation of the break condition of the loop in terms
of the loop variables. From the way it is constructed, $\context_c$ will
satisfy $\Gamma_0;\Gamma \vdash \context_c :\{x_c : \texttt{bool}\}$. We
then recursively revert $\context_c$ satisfying $\Gamma_0;\Gamma \vdash
\context_c :\{x_c : \texttt{bool}\}$ to a \SIMPLE statement $s_c$
satisfying $\Gamma_0; \Gamma \vdash s_c : \{x_c : \texttt{bool}\}$.
$s_c$ simply assigns the break condition of the loop to the variable
$x_c$.  The middle row of Figure~\ref{reversion-eval-example} shows
the PEG context for the break condition and the corresponding \SIMPLE
statement evaluating the break condition.

Fourth, we take the first child of the $\eval$ node and make a copy of
this first child with any occurrence of $n \in S$ replaced with a
parameter node referring to $\var(n)$. Let $r$ be the \PEG node
produced by this operation, and let $\context_r$ be the singleton \PEG
context $\{x_r : r\}$, where $x_r$ is fresh. $\context_r$ represents the
value desired after the loop in terms of the loop variables. From the
way it is constructed, $\context_r$ will satisfy $\Gamma_0;\Gamma \vdash
\context_r : \{x_r : \tau\}$, where $\tau$ is the type of the first child
of the $\eval$ node in the original \PEG. We then recursively revert
$\context_r$ satisfying $\Gamma_0; \Gamma \vdash \context_r : \{x_r : \tau\}$
to a \SIMPLE statement $s_r$ satisfying $\Gamma_0; \Gamma \vdash s_r :
\{x_r : \tau\}$.  $s_r$ simply assigns the value desired after the
loop into variable $x_r$. The bottom row of
Figure~\ref{reversion-eval-example} shows the PEG context for the
value desired after the loop and the corresponding \SIMPLE statement
evaluating the desired value.  Often, but not always, the first child
of the $\eval_\ell$ node will be a $\theta_\ell$ node, in which case
the statement will simply copy the variable as in this example.

Finally, we replace the $\eval_\ell$ node being processed with the
while-loop statement node $\langle s_c; \simplewhile{\neg x_c}{s_i;
  s_c}; s_r \rangle_\Gamma^{\{x_r : \tau\}}$.
Figure~\ref{reversion-eval-example} shows the while-loop statement
node resulting from translating the $\eval$ node in the original \PEG
context. Note that the while-loop statement node has one input for
each loop variable, namely one input for each variable in the domain
of $\Gamma$. For each such variable $x$, we connect $b_x$ to the
corresponding $x$ input of the while-loop statement node (recall that
$b_x$ is the first child of the $\theta$ node which was assigned
variable $x$). Figure~\ref{reversion-eval-example} shows how in our
running example, this amounts to connecting $1$ to both inputs of the
while-loop statement node. In general, our way of connecting the
inputs of the while-loop statement node makes sure that each loop
variable $x$ is initialized with the its corresponding base value
$b_x$. After this input initialization, $s_c$ assigns the status of
the break condition to $x_c$.  While the break condition fails, the
statement $s_i$ updates the values of the loop variables, then $s_c$
assigns the new status of the break condition to $x_c$.  Once the
break condition passes, $s_r$ computes the desired value in terms of
the final values of the loop variables and assigns it to $x_r$.  Note
that it would be more ``faithful'' to place $s_r$ inside the loop,
doing the final calculations in each iteration, but we place it after
the loop as an optimization since $s_r$ does not affect the loop
variables. The step labeled ``Sequence'' in
Figure~\ref{reversion-eval-example} shows the result of sequencing the
\PEG context that contains the while-loop statement node. This
sequencing process will be covered in detail in
Section~\ref{sec:reversion-seq}.

Note that in the computation of the break condition in
Figure~\ref{reversion-eval-example}, there is a double negation, in that we
have \texttt{c := $\neg \ldots$ ;} and \texttt{while($\neg$c)}. Our
more advanced reversion algorithm, described in the accompanying
technical report~\cite{peg2cfg}, takes advantage of more advanced control
structures present in CFGs but not in \SIMPLE, and does not introduce these
double negations.

\subsection{Translating Branches}
\label{sec:reversion-branches}

\begin{figure}
\includegraphics[width=\textwidth]{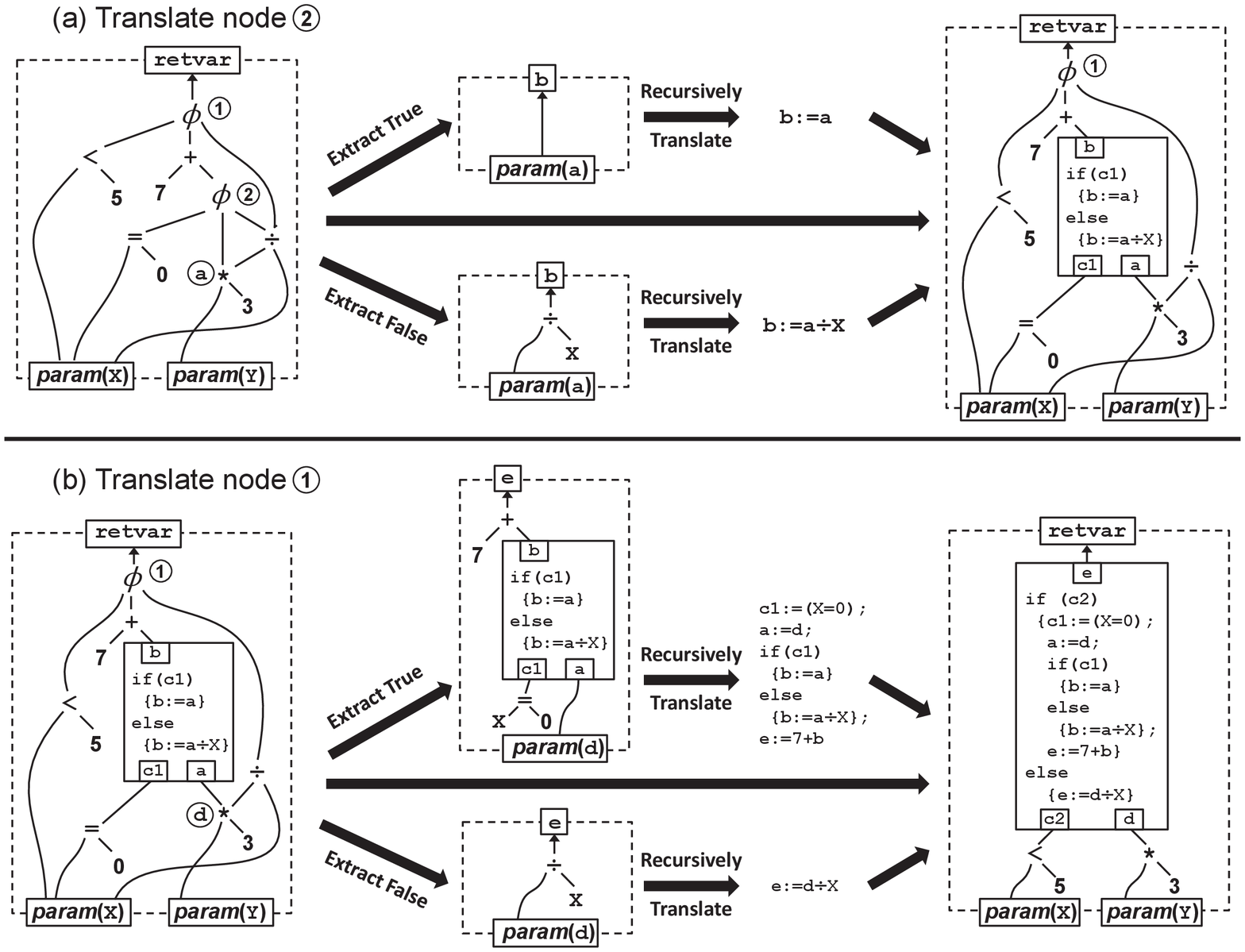}
\caption{Example of converting $\phi$ nodes to if-then-else statement nodes}
\label{reversion-phi-example}
\end{figure}

After our first pass, we have replaced $\eval$, $\pass$ and $\theta$
nodes with while-loop statement nodes. Thus, we are left with an
acyclic $\PEG$ context that contains $\phi$ nodes, while-loop
statement nodes, and domain operators like $+$ and $*$. In our second
pass, we repeatedly translate each $\phi$ node into an if-then-else
statement node.  In order to convert a $\phi$ node to an if-then-else
statement node, we must first determine the set of nodes which will
always need to be evaluated regardless of whether the guard condition
is true or false.  This set can be hard to determine when there is
another $\phi$ node nested inside the $\phi$ node. To see why this
would be the case, consider the example in
Figure~\ref{reversion-phi-example}, which we will use as a running
example to demonstrate branch translation. Looking at part (a), one
might think at first glance that the \texttt{$\div$} node in the first
diagram is always evaluated by the $\phi$ node labeled \textcircled{1}
since it is used in the \PEGs for both the second and third children.
However, upon further examination one realizes that actually the
\texttt{$\div$} node is evaluated in the second child only when
\texttt{x$\neq$0} due to the $\phi$ node labeled \textcircled{2}.  To
avoid these complications, it is simplest if we first convert $\phi$
nodes that do not have any other $\phi$ nodes as descendants. After
this replacement, there will be more $\phi$ nodes that do not have
$\phi$ descendants, so we can repeat this until no $\phi$ nodes are
remaining. In the example from Figure~\ref{reversion-phi-example}, we
would convert node \textcircled{2} first, resulting in (b). After this
conversion, node \textcircled{1} no longer has any $\phi$ descendants
and so it can be converted next. Thus, we replace $\phi$ nodes in a
bottom-up order. For each $\phi$ node, we use the following process.

First, we determine the set $S$ of nodes that are descendants of both
the second and third child of the current $\phi$ node (i.e. the true
and false branches). These are the nodes that will get evaluated
regardless of which way the $\phi$ goes. We assign a fresh variable to
each node in this set, and as in the case of loops we use $\var(n)$ to
denote the variable we've assigned to $n$. In
Figure~\ref{reversion-phi-example}(a), the \texttt{*} node is a
descendant of both the second and third child of node \textcircled{2},
so we assign it the fresh variable \texttt{a}.  Note that the
\texttt{3} node should also be assigned a variable, but we do not show
this in the figure since the variable is never used.  Next, we take
the second child of the $\phi$ node and make a copy of this second
child in which any occurrence of $n \in S$ has been replaced with a
parameter node referring to $\var(n)$. Let $t$ be the \PEG node
produced by this operation. Then we do the same for the third child of
the $\phi$ node to produce another \PEG node $f$. The $t$ and $f$
nodes represent the true and false computations in terms of the \PEG
nodes that get evaluated regardless of the direction the $\phi$ goes.
In the example from Figure~\ref{reversion-phi-example}(a), $t$ is
$\overline{param}(\texttt{a})$ and $f$ is
$\overline{\div}(\overline{param}(\texttt{a}),
\overline{param}(\texttt{x}))$. Examining $t$ and $f$, we produce a
context $\Gamma$ of the newly created fresh variables used by either
$t$ or $f$.  In the example, $\Gamma$ would be simply $\{\texttt{a}:
\texttt{int}\}$.  The domain of $\Gamma$ does not contain \texttt{x}
since \texttt{x} is not a new variable (i.e. \texttt{x} is in the
domain of $\Gamma_0$, where $\Gamma_0$ comes from
Equation~\eqref{gamma0}). Thus, $t$ and $f$ are PEGs representing the
true and false cases in terms of variables $\Gamma$ representing
values that would be calculated regardless.

Second, we invoke the reversion process recursively to translate $t$
and $f$ to statements. In particular, we create two singleton contexts
$\context_t = \{x_\phi : t\}$ and $\context_f = \{x_\phi : f\}$ where $x_\phi$
is a fresh variable, making sure to use the same fresh variable in the
two contexts. From the way it is constructed, $\context_t$ satisfies
$\Gamma_0; \Gamma \vdash \context_t : \{x_\phi : \tau\}$, where $\tau$ is
the type of the $\phi$ node. Thus, we recursively revert $\context_t$
satisfying $\Gamma_0; \Gamma \vdash \context_t : \{x_\phi : \tau\}$ to a
\SIMPLE statement $s_t$ satisfying $\Gamma_0; \Gamma \vdash s_t :
\{x_\phi : \tau\}$. Similarly, we revert $\context_f$ satisfying
$\Gamma_0; \Gamma \vdash \context_f : \{x_\phi : \tau\}$ to a statement
$s_f$ satisfying $\Gamma_0; \Gamma \vdash s_f : \{x_\phi : \tau\}$.
The steps labeled ``Extract True'' and ``Extract False'' in
Figure~\ref{reversion-phi-example} show the process of producing
$\context_t$ and $\context_f$ in our running example, where the fresh variable
$x_\phi$ is \texttt{b} in part (a) and \texttt{e} in part (b). Note
that $\context_t$ and $\context_f$ may themselves contain statement nodes, as
in Figure~\ref{reversion-phi-example}(b), but this poses no problems
for our recursive algorithm. Finally, there is an important notational
convention to note in Figure~\ref{reversion-phi-example}(a). Recall
that $\context_f$ satisfies $\Gamma_0; \Gamma \vdash \context_f : \{x_\phi :
\tau\}$, and that in Figure~\ref{reversion-phi-example}(a) $\Gamma_0 =
\{ \texttt{x}:\texttt{int}\}$ and $\Gamma =
\{\texttt{a}:\texttt{int}\}$. In the graphical representation
of $\Gamma_0; \Gamma \vdash \context_f : \{x_\phi : \tau\}$ in
Figure~\ref{reversion-phi-example}(a), we display variable \texttt{a}
as a real boxed input (since it is part of $\Gamma$), whereas because
\texttt{x} is in $\Gamma_0$, we display \texttt{x} without a box, and
using the shorthand of omitting the $\param$ (even though in reality
it is there).

Finally, we replace the $\phi$ node we are processing with the
if-then-else statement node $\langle \simpleite{x_c}{s_t}{s_f}
\rangle_{(\Gamma, x_c : \texttt{bool})}^{\{x_\phi : \tau\}}$ (where
$x_c$ is fresh). This statement node has one input for each entry in
$\Gamma$, and it has one additional input $x_c$ for the guard
value. We connect the $x_c$ input to the first child of the $\phi$
node we are currently processing. For each variable $x$ in the domain
of $\Gamma$, we connect the $x$ input of the statement node to the
``always-evaluated'' node $n$ for which $\var(n) =
x$. Figure~\ref{reversion-phi-example} shows the newly created
if-then-else statement nodes and how they are connected when
processing $\phi$ node \textcircled{1} and \textcircled{2}. In
general, our way of connecting the inputs of the if-then-else
statement node makes sure that each always-evaluated node is assigned
to the appropriate variable, and the guard condition is assigned to
$x_c$. After this initialization, the statement checks $x_c$, the
guard condition, to determine whether to take the true or false
branch.  In either case, the chosen statement computes the desired
value of the branch and assigns it to $x_\phi$.

\subsection{Sequencing Operations and Statements}
\label{sec:reversion-seq}

When we reach our final pass, we have already eliminated all primitive
operators ($\eval$, $\pass$, $\theta$, and $\phi$) and replaced them
with statement nodes, resulting in an acyclic \PEG context containing
only statement nodes and domain operators like $+$ and $*$. At this
point, we need to sequence these statement nodes and operator
nodes. We start off by initializing a statement variable $S$ as the
empty statement.  We will process each \PEG node one by one,
postpending lines to $S$ and then replacing the \PEG node with a
parameter node. It is simplest to process the \PEG nodes from the
bottom up, processing a node once all of its inputs are parameter
nodes. Figure~\ref{reversion-seq-example} shows every stage of
converting a \PEG context to a statement. At each step, the current
statement $S$ is shown below the \PEG context.

\begin{figure}
\includegraphics[width=\textwidth]{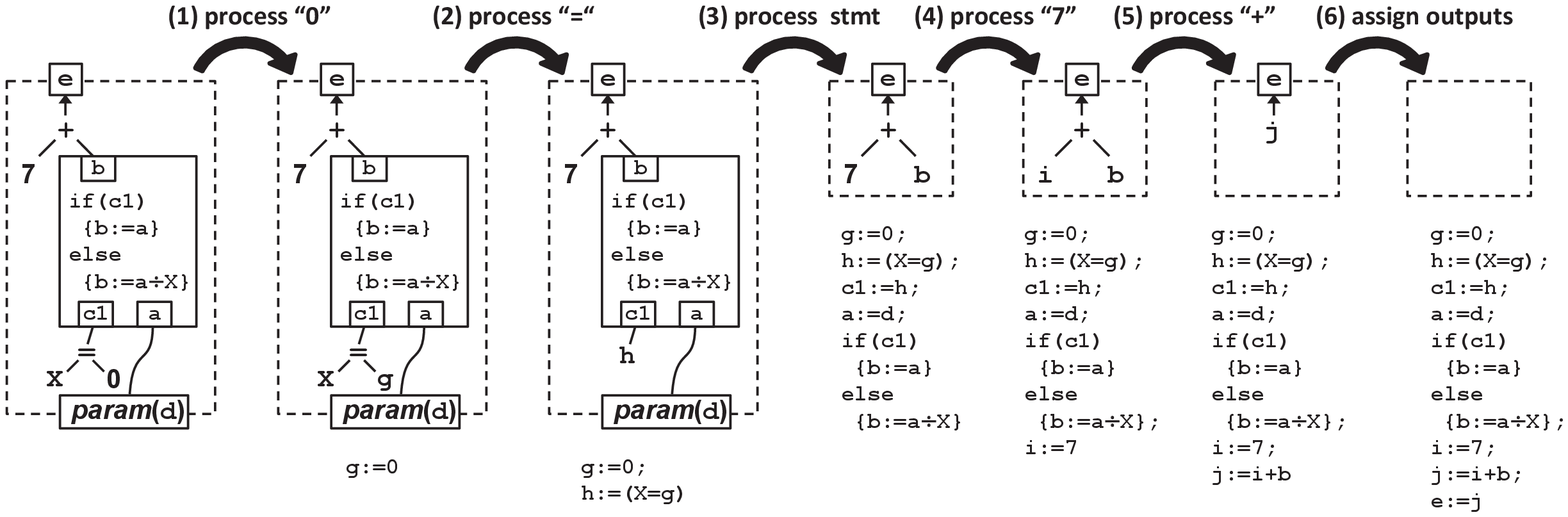}
\caption{Example of sequencing a PEG context (with statement nodes) into a statement}
\label{reversion-seq-example}
\end{figure}

If the node being processed is a domain operator, we do the following.
Because we only process nodes where all of its inputs are parameter
nodes, the domain operator node we are processing must be of the
following form: $\overline{op}(\overline{param}(x_1), \dots,
\overline{param}(x_k))$.  We first designate a fresh variable $x$.
Then, we postpend the line $x := op(x_1, \dots, x_k)$ to $S$.
Finally, we replace the current node with the node
$\overline{param}(x)$. This process is applied in the first, second,
fourth, and fifth steps of Figure~\ref{reversion-seq-example}.  Note
that in the first and fourth steps, the constants $0$ and $7$ are a
null-ary domain operators.

If on the other hand the node being processed is a statement node, we
do the following. This node must be of the following form: $\langle s
\rangle_\Gamma^{\Gamma'}$. For each input variable $x$ in the domain
of $\Gamma$, we find the $\overline{param}(x_0)$ that is connected to
input $x$, and postpend the line $x := x_0$ to $S$. In this way, we
are initializing all the variables in the domain of $\Gamma$. Next, we
postpend $s$ to $S$. Finally, for each output variable $x'$ in the
domain of $\Gamma'$, we replace any links in the $\PEG$ to the $x'$
output of the statement node with a link to $\overline{param}(x')$.
This process is applied in the third step of
Figure~\ref{reversion-seq-example}.

Finally, after processing all domain operators and statement nodes, we
will have each variable $x$ in the domain of the \PEG context being
mapped to a parameter node $\overline{param}(x')$.  So, for each such
variable, we postpend the line $x' := x$ to $S$. All these assignments
should intuitively run in parallel. This causes problems if
there is a naming conflict, for example $x$ gets $y$ and $y$ gets
$x$. In such cases, we simply introduce intermediate fresh copies of
all the variables being read, and then we perform all the assignments
by reading from the fresh copies. In the case of $x$ and $y$, we would
create copies $x'$ and $y'$ of $x$ and $y$, and then assign $x'$ to
$y$, and $y'$ to $x$. This process is applied in the sixth and last
step of Figure~\ref{reversion-seq-example} (without any naming
conflicts). The value of $S$ is the final result of the reversion,
although in practice we apply copy propagation to this statement since
the sequencing process produces a lot of intermediate variables.

\subsection{Loop Fusion}
\label{sec:loop-fusion}

\begin{figure}
\includegraphics[width=\textwidth]{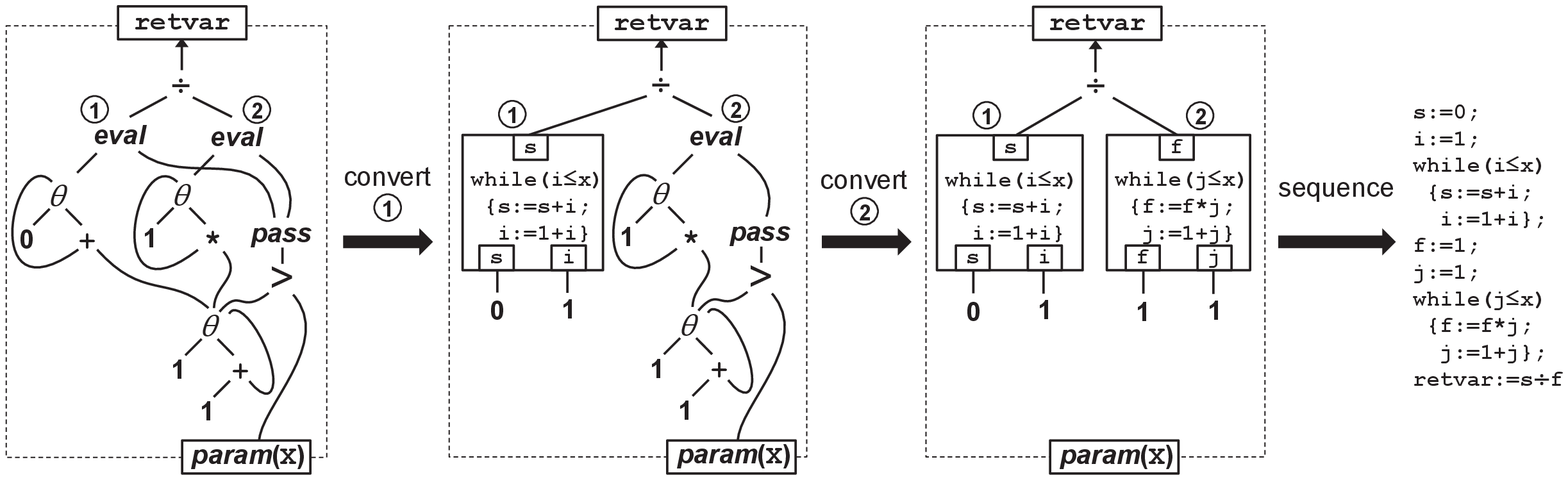}
\caption{Reversion of a PEG without applying loop fusion}
\label{reversion-loop-fusion-bad}
\end{figure}

Although the process described above will successfully revert a PEG to
a \SIMPLE program, it does so by duplicating a lot of code.  Consider
the reversion process in Figure~\ref{reversion-loop-fusion-bad}.  The
original \SIMPLE program for this \PEG was as follows:

\vspace{6pt}
\begin{tabular}{l}
        \qquad\qquad\verb-s:=0; f:=1; i:=1;-\\
        \qquad\qquad\verb-while(i<=x) { s:=s+i; f:=f*i; i:=1+i };-\\
        \qquad\qquad\texttt{retvar:=s$\div$f}
\end{tabular}
\vspace{6pt}

The conversion of the above code to \PEG results in two $\eval$ nodes,
one for each variable that is used after the loop.  The reversion process
described so far converts each $\eval$ node separately, resulting in
two separate loops in the final \SIMPLE program.  Here we present a
simple optimization that prevents this code duplication by fusing
loops during reversion.  In fact, this added loop-fusion step can even
fuse loops that were distinct in the original program. Thus, loop
fusion can be performed simply by converting to a \PEG and immediately
reverting back to a \SIMPLE program, without even having to do any
intermediate transformations on the \PEG.

\begin{figure}
\includegraphics[width=\textwidth]{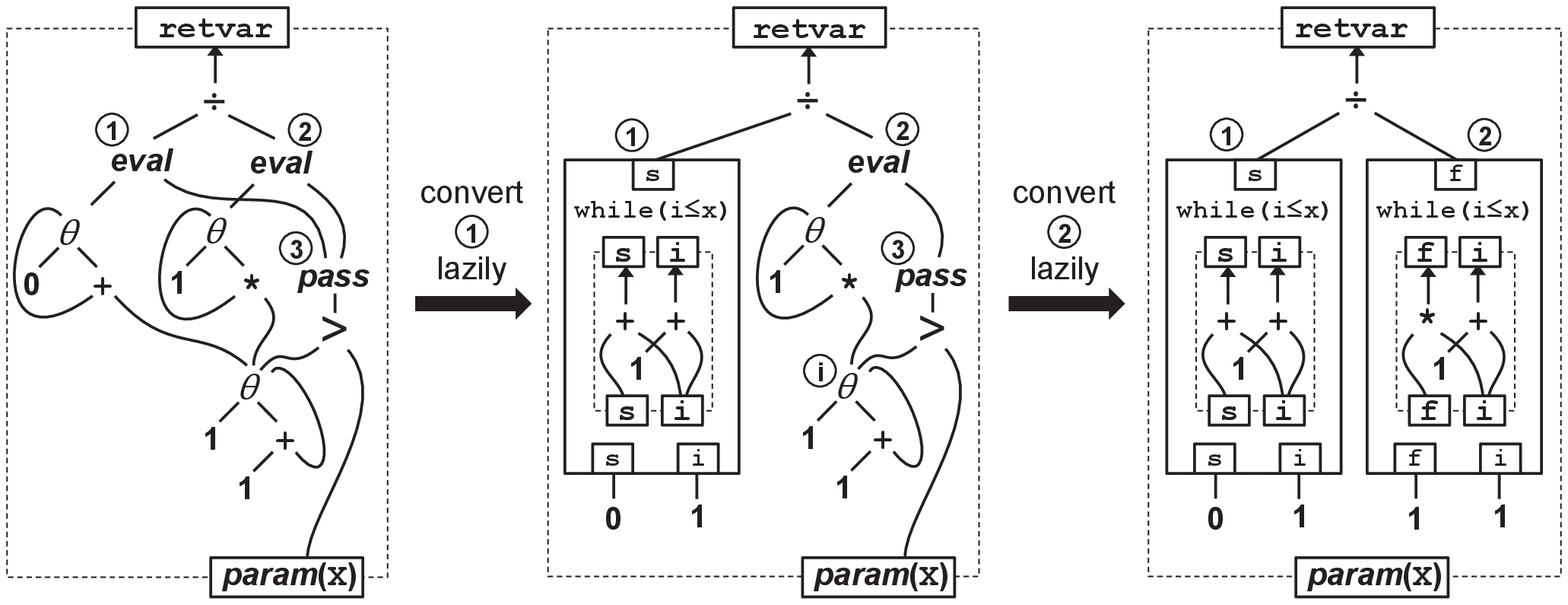}
\caption{Conversion of $\eval$ nodes to revised loop nodes}
\label{reversion-loop-fusion-convert}
\end{figure}

We update our reversion process to perform loop-fusion by making
several changes. First, we modify the process for converting $\eval$
nodes to while-loop statement nodes in three ways; the revised
conversion process is shown in
Figure~\ref{reversion-loop-fusion-convert} using the same PEG as
before.  The first modification is that we tag each converted $\theta$
node with the fresh variable we designate for it.  For example, the
conversion process for the first $\eval$ node in
Figure~\ref{reversion-loop-fusion-convert} generates a fresh variable
\texttt{i} for one of the $\theta$ nodes, and so we tag this $\theta$
node with \texttt{i}.  If we ever convert a $\theta$ node that has
already been tagged from an earlier $\eval$ conversion, we reuse that
variable.  For example, when converting the second $\eval$ node in
Figure~\ref{reversion-loop-fusion-convert}, we reuse the variable
\texttt{i} unlike in Figure~\ref{reversion-loop-fusion-bad} where we
introduced a fresh variable \texttt{j}.  This way all the $\eval$
nodes are using the same naming convention.
The second modification is that when processing an $\eval$ node, we do
not immediately revert the \PEG context for the loop body into a
statement, but rather we remember it for later.  This is why the
bodies of the while-loop statement nodes in
Figure~\ref{reversion-loop-fusion-convert} are still \PEG contexts
rather than statements.  Thus, we have to introduce a new kind of
node, which we call a \emph{loop node}, which is like a while-loop
statement node, except that it stores the body (and only the body) of
the loop as a PEG context rather than a statement -- the remaining
parts of the loop are still converted to statements (in particular,
the condition and the post-loop computation are still converted to
statements, as was previously shown in
Figure~\ref{reversion-eval-example}). As an example, nodes
\textcircled{1} and \textcircled{2} in the right most part of
Figure~\ref{reversion-loop-fusion-convert} are loop
nodes. Furthermore, because we are leaving \PEGs inside the loop nodes
to be converted for later, we use the term ``convert lazily'' in
Figure~\ref{reversion-loop-fusion-convert}.
The third modification is that the newly introduced loop nodes store
an additional piece of information when compared to while-loop
statement nodes. In particular, when we replace an $\eval$ node with a
loop node, the new loop node will store a link back to the $\pass$
node of the $\eval$ node being replaced. We store these additional
links so that we can later identify fusable loops: we will consider
two loop nodes fusable only if they share the same $\pass$ node. We do
not show these additional links explicitly in
Figure~\ref{reversion-loop-fusion-convert}, but all the loop nodes in
that Figure implicitly store a link back to the same $\pass$ node,
namely node \textcircled{3}.

Second, after converting the $\phi$ nodes but before sequencing, we
search for loop nodes which can be fused.  Two loop nodes can be fused
if they share the same $\pass$ node and neither one is a descendant of
the other. For example, the two loop nodes in
Figure~\ref{reversion-loop-fusion-convert} can be fused.  If one loop
node is a descendant of the other, then the result of finishing the
descendant loop is required as input to the other loop, and so they
cannot be executed simultaneously.  To fuse the loops, we simply union
their body \PEG contexts, as well as their inputs and their outputs.
The step labeled ``fuse \textcircled{1} \& \textcircled{2}'' in
Figure~\ref{reversion-loop-fusion-good} demonstrates this process on
the result of Figure~\ref{reversion-loop-fusion-convert}.  This
technique produces correct results because we used the same naming
convention across $\eval$ nodes and we used fresh variables for all
$\theta$ nodes, so no two distinct $\theta$ nodes are assigned the
same variable. We repeat this process until there are no fusable loop
nodes.

\begin{figure}
\includegraphics[width=\textwidth]{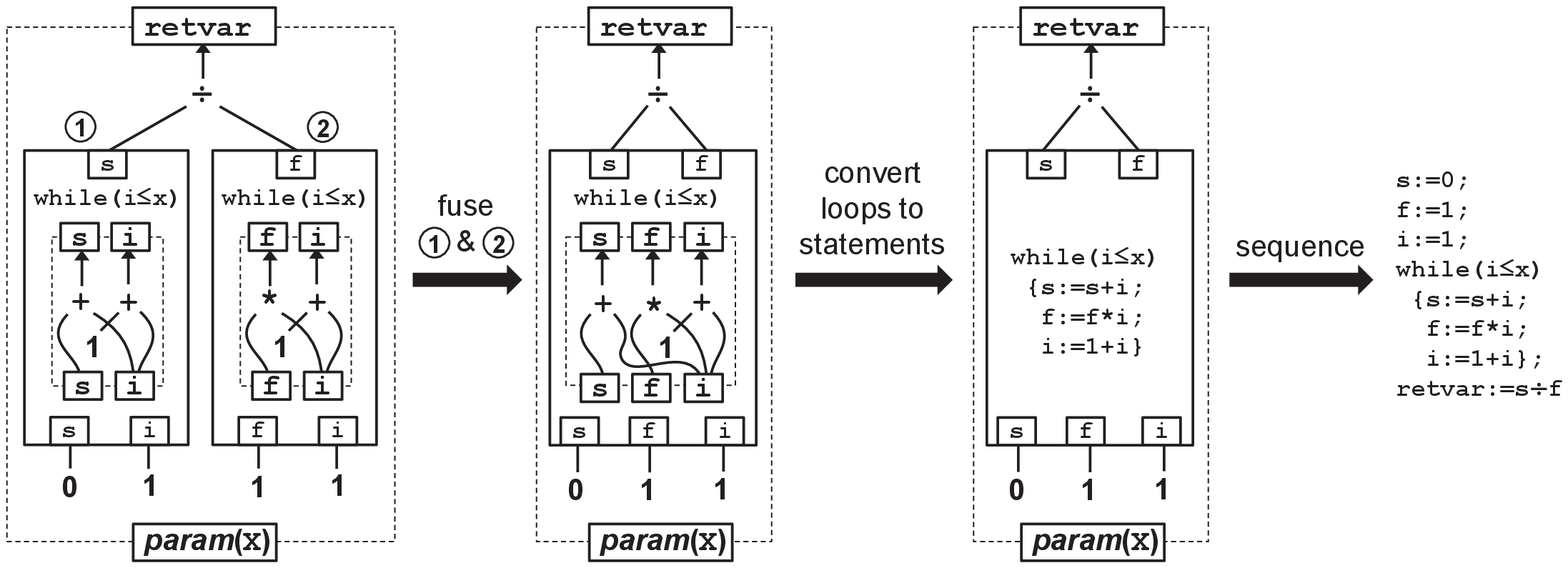}
\caption{Fusion of loop nodes}
\label{reversion-loop-fusion-good}
\end{figure}

Finally, the sequencing process is changed to first convert all loop
nodes to while-loop statement nodes, which involves recursively
translating the body \PEG context inside the loop node to a
statement. This additional step is labeled ``convert loops to
statements'' in Figure~\ref{reversion-loop-fusion-good}. The final
\SIMPLE program has only one while loop which simultaneously
calculates both of the desired results of the loop, as one would
expect.

To summarize, the process described so far is to (1) translate $\eval$
nodes into loop nodes, (2) translate $\phi$ nodes into if-then-else
statements, (3) perform fusion of loop nodes and (4) sequencing step.
It is important to perform the fusion of loop nodes \emph{after}
converting $\phi$ nodes, rather than after converting $\eval$
nodes. Consider for example two loops with the same break condition,
neither of which depend on the other, but where one is always executed
and the other is only executed when some branch guard is true (that is
to say, its result is used only on one side of a $\phi$ node). If we
perform fusion of loop nodes before converting $\phi$ nodes, then
these two loop nodes appear to be fusable, but fusing them would cause
both loops to always be evaluated, which is not semantics
preserving. We avoid this problem by processing all $\phi$ nodes
first, after which point we know that all the remaining nodes in the
\PEG context must be executed (although some of these nodes may be
branch nodes). In the example just mentioned with two loops, the loop
which is under the guard (and thus under a $\phi$ node) will be
extracted and recursively processed when the $\phi$ node is
transformed into a branch node. In this recursive reversion, only one
loop will be present, the one under the guard, and so no loop fusion
is performed.  After the $\phi$ node is processed, there will be only
one remaining loop node, the one which is executed
unconditionally. Again, since there is only one loop node, no loop
fusion is performed, and so the semantics is preserved.

\subsection{Branch Fusion}
\label{branch-fusion}

\begin{figure}
\includegraphics[width=\textwidth]{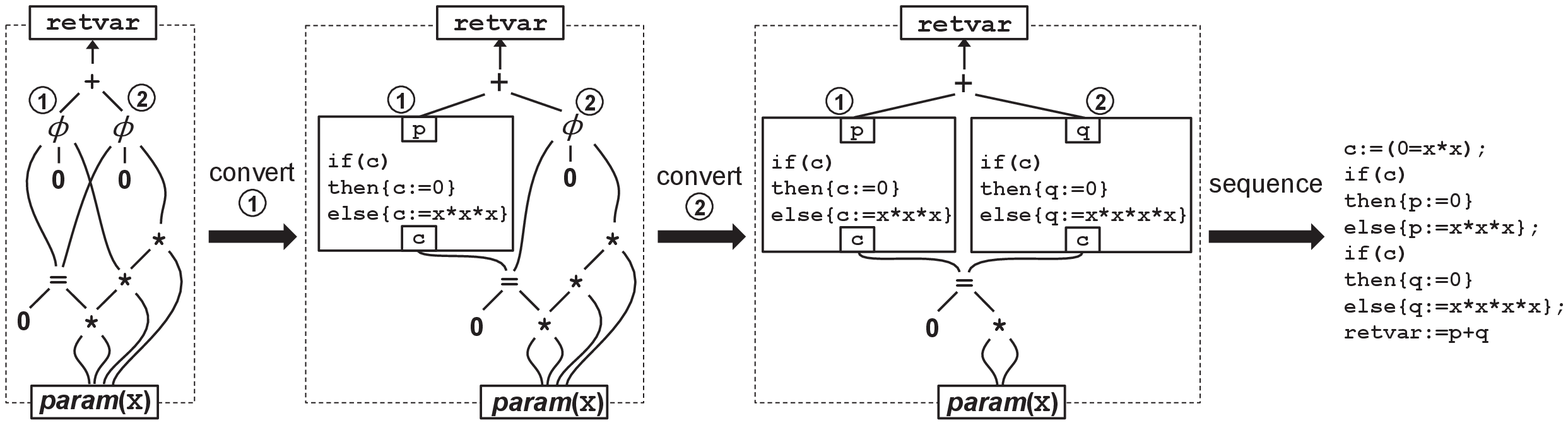}
\caption{Reversion of a PEG without applying branch fusion}
\label{reversion-branch-fusion-bad}
\end{figure}

In the same way that our previously described reversion process
duplicated loops, so does it duplicate branches, as demonstrated in
Figure~\ref{reversion-branch-fusion-bad}. Similarly to loop fusion,
our reversion process can be updated to perform branch fusion.

First, we modify the processing of $\phi$ nodes to make the reversion
of recursive \PEG contexts lazy: rather than immediately processing
the extracted true and false \PEG contexts, as was done in
Figure~\ref{reversion-phi-example}, we instead create a new kind of
node called a \emph{branch node} and store the true and false \PEG
contexts in that node. A branch node is like an if-then-else statement
node, except that instead of having \SIMPLE code for the true and
false sides of the statement, the branch node contains \PEG contexts
to be processed later. As with if-then-else statement nodes, a branch
node has a guard input which is the first child of the $\phi$ node
being replaced (that is to say, the value of the branch
condition). For example, Figure~\ref{reversion-branch-fusion-convert}
shows this lazy conversion of $\phi$ nodes on the same example as
Figure~\ref{reversion-branch-fusion-bad}. The nodes labeled
\textcircled{1} and \textcircled{2} in the right most part of
Figure~\ref{reversion-branch-fusion-convert} are branch nodes.

\begin{figure}
\includegraphics[width=\textwidth]{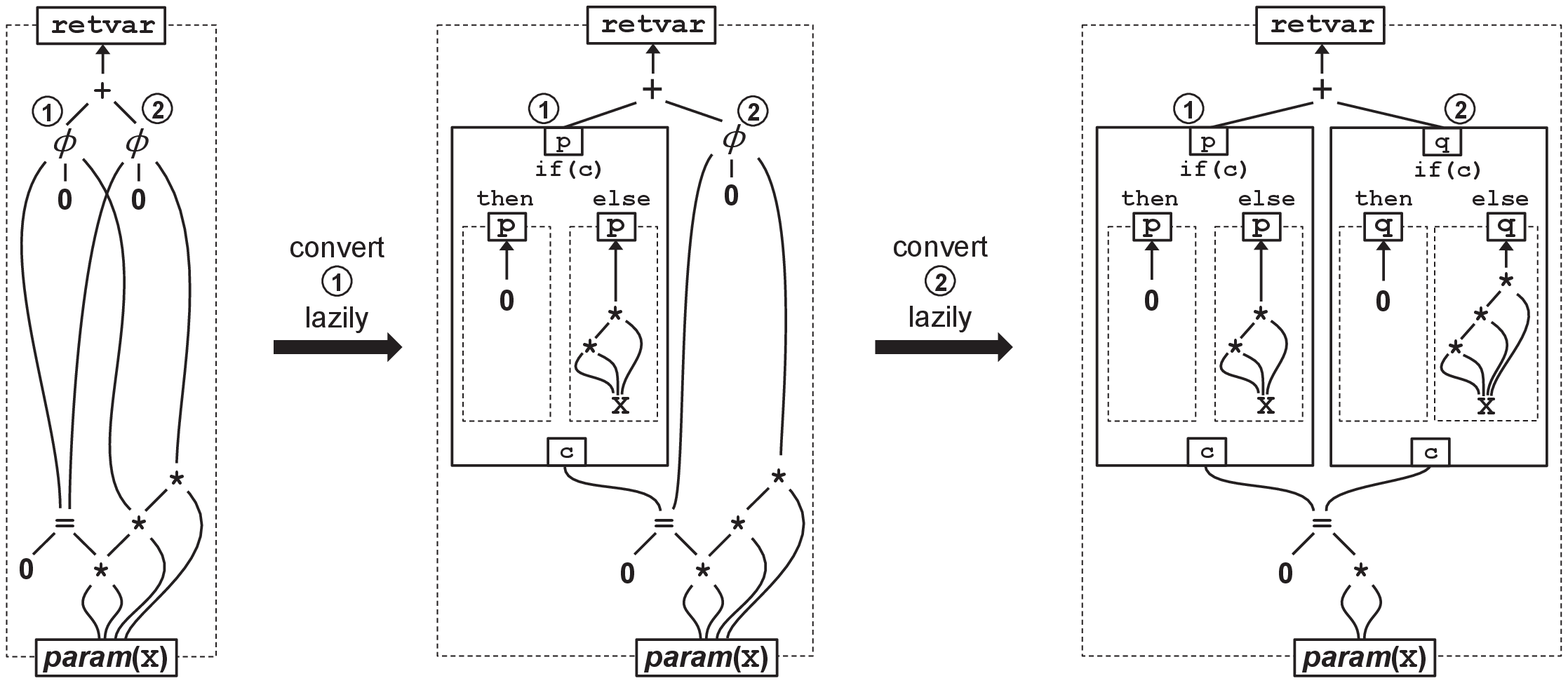}
\caption{Conversion of $\phi$ nodes to revised branch nodes}
\label{reversion-branch-fusion-convert}
\end{figure}

Second, after all $\phi$ nodes have been converted to branch nodes, we
search for branch nodes that can be fused. If two branch nodes share
the same guard condition input, and neither one is a descendant of the
other, then they can be fused.  Their true \PEG contexts, false \PEG
contexts, inputs, and outputs are all unioned together respectively.
This process is much like the one for loop fusion, and is demonstrated
in Figure~\ref{reversion-branch-fusion-good} in the step labeled
``fuse \textcircled{1} \& \textcircled{2}''. Notice that when we union
two \PEGs, if there is a node in each of the two \PEGs representing
the exact same expression, the resulting union will only contain one
copy of this node. This leads to an occurrence of common
sub-expression elimination in
Figure~\ref{reversion-branch-fusion-good}: when the false and
true \PEGs are combined during fusion, the resulting \PEG only has
one \verb-x*x*x-, which allows the computation for \verb-q- in the
final generated code to be \verb-c*x-, rather than \verb-x*x*x*x-.

\begin{figure}
\includegraphics[width=\textwidth]{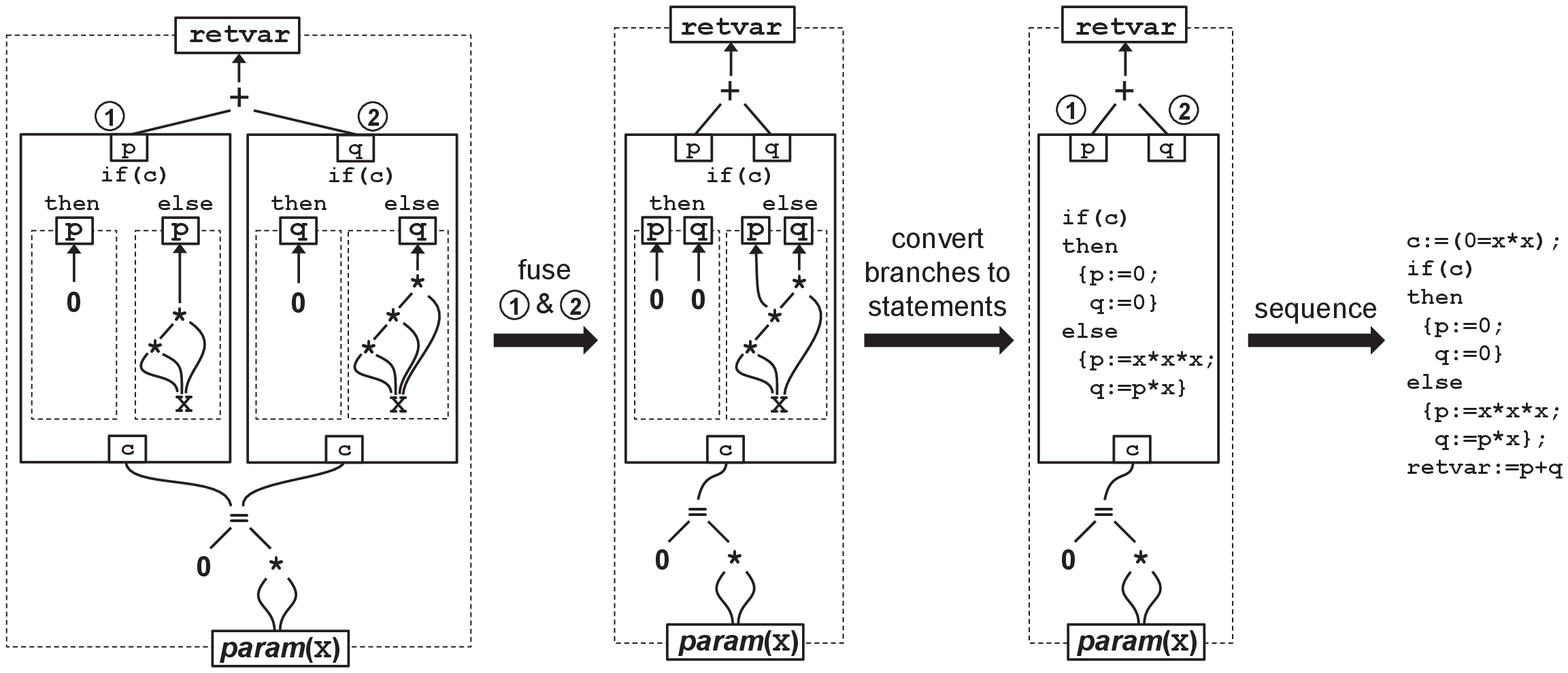}
\caption{Fusion of branch nodes}
\label{reversion-branch-fusion-good}
\end{figure}

Finally, the sequencing process is changed to first convert all
branch nodes into statement nodes, which involves recursively
translating the true and false \PEG contexts inside the branch node to
convert them to statements. This additional step is labeled ``convert
branches to statements'' in
Figure~\ref{reversion-branch-fusion-good}. The final \SIMPLE program
has only one if-then-else which simultaneously calculates both of the
desired results of the branches, as one would expect.

To summarize, the process described so far is to (1) translate $\eval$
nodes into loop nodes, (2) translate $\phi$ nodes into branch nodes,
(3) perform fusion of loop nodes (4) perform fusion of branch nodes
(5 sequencing step. As with loop fusion, it is important to perform fusion
of branch nodes after each and every $\phi$ node has been converted to
branch nodes. Otherwise, one may end up fusing two branch nodes where
one branch node is used under some $\phi$ and the other is used
unconditionally. 

\begin{figure}
\includegraphics[width=\textwidth]{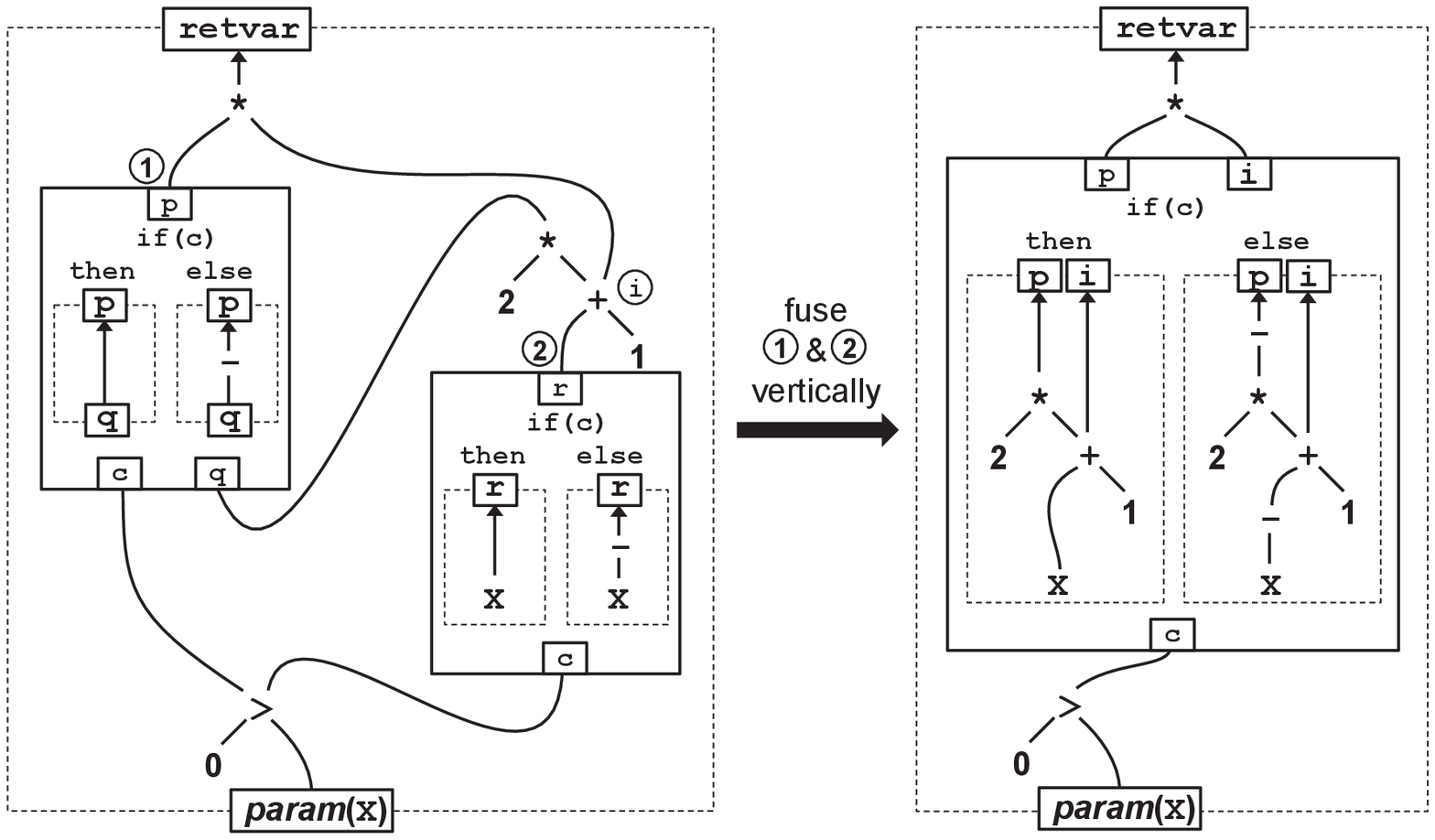}
\caption{Vertical fusion of branch nodes}
\label{reversion-loop-fusion-vertical}
\end{figure}

If two branch nodes share the same guard condition input, but one is a
descendant of the other, we can even elect to fuse them vertically, as
shown in Figure~\ref{reversion-loop-fusion-vertical}. In particular,
we sequence the true \PEG context of one branch node with the true
\PEG context of the other, and do the same with the false \PEG
contexts. Note how, because the node labeled \textcircled{\texttt{i}}
is used elsewhere than just as an input to branch node
\textcircled{1}, we added it as an output of the fused branch node.

\subsection{Hoisting Redundancies from Branches}
\label{sec:hosting-from-branches}

Looking back at the branch fusion example from
Figures~\ref{reversion-branch-fusion-convert}
and~\ref{reversion-branch-fusion-good}, there is still one
inefficiency in the generated code. In particular, \verb-x*x- is
computed in the false side of the branch, even though \verb-x*x- has
already been computed before the branch.

In our original description for converting $\phi$ nodes in
Section~\ref{sec:reversion-branches}, we tried to avoid this kind of
redundant computation by looking at the set of nodes that are
reachable from both the true and false children (second and third
children) of a $\phi$ node. This set was meant to capture the nodes
that, for a given $\phi$, are need to be computed regardless of which
side the $\phi$ node goes -- we say that such nodes execute
unconditionally with respect to the given $\phi$ node. These nodes
were kept outside of the branch node (or the if-then-else statement
node if using such nodes). As an example, the node labeled
\textcircled{\texttt{a}} in Figure~\ref{reversion-phi-example} was
identified as belonging to this set when translating $\phi$ node
\textcircled{2}, and this is why the generated if-then-else statement
node does not contain node \textcircled{\texttt{a}}, instead taking it
as an input (in addition to the \texttt{c1} input which is the branch
condition).

It is important to determine as completely as possible the set of
nodes that execute unconditionally with respect to a
$\phi$. Otherwise, code that intuitively one would think of executing
unconditionally outside of the branch (either before the branch or
after it) would get duplicated in one or both sides of branch. This is
precisely what happened in
Figure~\ref{reversion-branch-fusion-convert}: our approach of
computing the nodes that execute unconditionally (by looking at nodes
reachable from the true and false children) returned the empty set,
even though \verb-x*x- actually executes unconditionally. This is what
lead to \verb-x*x- being duplicated, rather than being kept outside of
the branch (in the way that \texttt{a} was in
Figure~\ref{reversion-phi-example}). A more precise analysis would be
to say that a node executes unconditionally with respect to a $\phi$
node if it is reachable from the true and false children of the $\phi$
(second and third children), \emph{or} from the branch condition
(first child). This would identify \verb-x*x- as being executed
unconditionally in Figure~\ref{reversion-phi-example}. However, even
this more precise analysis has limitations. Suppose for example that
some node is used only on the true side of the $\phi$ node, and never
used by the condition, so that the more precise analysis would
not identify this node as always executing. However, this node could
be used unconditionally higher up in the \PEG, or alternatively it
could be the case that the condition of the $\phi$ node is actually
equivalent to true. In fact, this last possibility points to the fact
that computing exactly what nodes execute unconditionally with respect
to a $\phi$ node is undecidable (since it reduces to deciding if a
branch is taken in a Turing-complete computation). However, even
though the problem is undecidable, more precision leads to less code
duplication in branches.

\mypara{$\MustEval$ Analysis} 
To modularize the part of the system that deals with identifying nodes
that must be evaluated unconditionally, we define a $\MustEval$
analysis. This analysis returns a set of nodes that are known to
evaluate unconditionally in the current \PEG context. An implementation
has a lot of flexibility in how to define the $\MustEval$
analysis. More precision in this analysis leads to less code
duplication in branches.

\begin{figure}
\includegraphics[width=\textwidth]{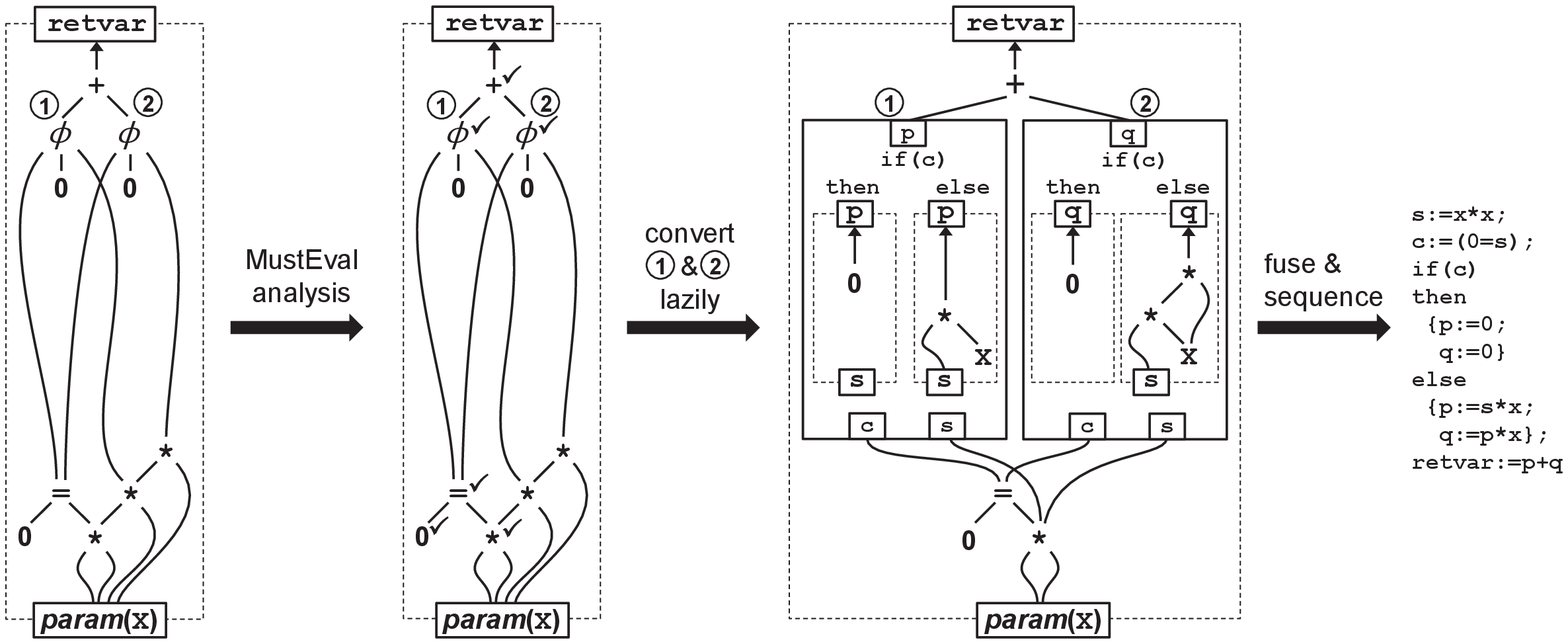}
\caption{Same example as in
  Figure~\ref{reversion-branch-fusion-convert}, but this time
  hoisting redundancies from branches}
\label{reversion-branch-fusion-precise}
\end{figure}

Figure~\ref{reversion-branch-fusion-precise} shows the example from
Figure~\ref{reversion-branch-fusion-convert} again, but this time using
a refined process that uses the $\MustEval$ analysis. The nodes
which our $\MustEval$ analysis identifies as always evaluated have
been marked, including \texttt{x*x}.
After running the $\MustEval$ analysis, we convert all $\phi$ nodes
which are marked as always evaluated. All remaining $\phi$ nodes will
be pulled into the resulting branch nodes and so handled by recursive
calls. Note that this is a change from
Section~\ref{sec:reversion-branches}, where $\phi$ nodes were
processed starting from the lower ones (such as node \textcircled{2}
in Figure~\ref{reversion-phi-example}) to the higher ones (such as
node \textcircled{1} in Figure~\ref{reversion-phi-example}). Our
updated process, when running on the example from
Figure~\ref{reversion-phi-example}, would process node \textcircled{1}
first, and in doing so would place node \textcircled{2} in the true
\PEG context of a branch node. Thus, node \textcircled{2} would get
processed later in a recursive reversion.

Going back to Figure~\ref{reversion-branch-fusion-precise}, both
$\phi$ nodes are marked as always evaluated, and so we process both of
them. To have names for any values that are always computed, we assign
a fresh variable to each node that is marked by the $\MustEval$
analysis, reusing the same fresh variables for each $\phi$ node we
convert.  For example, in
Figure~\ref{reversion-branch-fusion-precise} we use the fresh
variable \textcircled{s} for the node \texttt{x*x}. We then produce a
$t$ and $f$ node for each $\phi$ node as before, replacing nodes which
will always be evaluated with parameter nodes of appropriate
variables.  Figure~\ref{reversion-branch-fusion-convert} shows that
$t$ for the first $\phi$ node is simply \texttt{0} whereas $f$ is
\texttt{s*x}, using \texttt{s} in place of \texttt{x*x}.
After all the $\phi$ nodes have been converted, we perform branch
fusion and sequencing as
before. Figure~\ref{reversion-branch-fusion-convert} shows this
updated process. The resulting code now performs the \verb-x*x-
computation before the branch.

One subtlety is that the $\MustEval$ analysis must satisfy some
minimal precision requirements. To see why this is needed, recall that
after the $\MustEval$ analysis is run, we now only process the $\phi$
nodes that are marked as always evaluated, leaving the remaining
$\phi$ nodes to recursive invocations. Thus, if $\MustEval$ doesn't
mark any nodes as being always evaluated, then we would not process
any $\phi$ nodes, which is a problem before after the
$\phi$-processing stage, we require there to be no more $\phi$ nodes
in the \PEG. As a result, we require the $\MustEval$ analysis to be
\emph{minimally precise}, as formalized in the following definition.

\begin{defn} We say that a $\MustEval$ analysis is minimally precise
if for any \PEG context $\context$, \hbox{$S = \MustEval(\context)$} implies the following properties:
$$
\begin{array}{rll}
(x,n) \in \context & \Rightarrow & n \in S \\
\constructor{\op}(n_1 \ldots n_k) \in S & \Rightarrow & n_1 \in S \wedge \ldots \wedge n_k \in S \\
\constructor{\phi}(c,a,b) \in S  & \Rightarrow & c \in S \\
\constructor{\langle s \rangle}(n_1 \ldots n_k) \in S & \Rightarrow & n_1 \in S \wedge \ldots \wedge n_k \in S
\end{array}
$$
\end{defn}
In essence the above simply states that $\MustEval$ must at least
return those nodes which can trivially be identified as always
evaluated. To see why this is sufficient to guarantee that we make
progress on $\phi$ nodes, consider the worst case, which is when
$\MustEval$ returns nothing more than the above trivially identified
nodes. Suppose we have a $\phi$ node that is not identified as always
evaluated. This node will be left to recursive invocations of
reversion, and at some point in the recursive invocation chain, as we
translate more and more $\phi$ nodes into branch nodes, our original
$\phi$ node will become a top-level node that is always evaluated (in
the \PEG context being processed in the recursive invocation). At that
point we will process it into a branch node.

\subsection{Loop-Invariant Code Motion}
\label{sec:loop-invariant-code-motion}

\begin{figure}
\includegraphics[width=\textwidth]{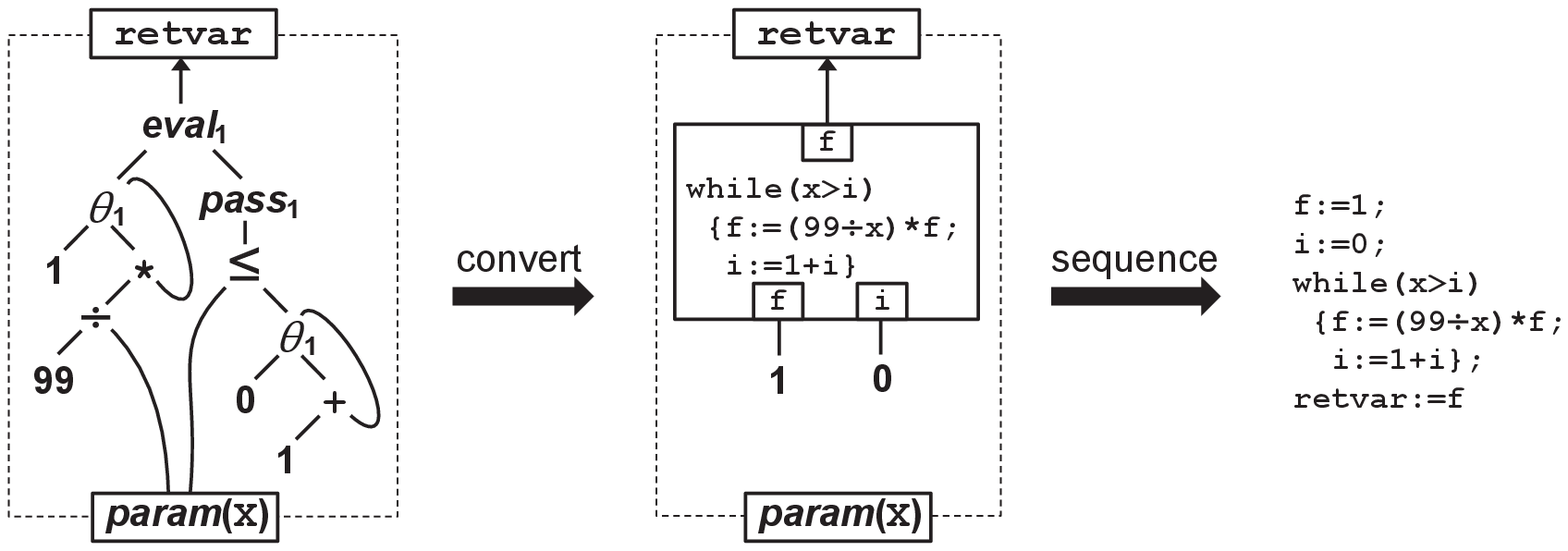}
\caption{Reversion of a PEG without applying loop-invariant code motion}
\label{reversion-loop-invariant-bad}
\end{figure}

The process described so far for converting $\eval$ nodes in
Section~\ref{sec:reversion-loops} duplicates code unnecessarily. In
particular, every node used by the loop body is copied into the loop
body, including loop-invariant nodes.
Figure~\ref{reversion-loop-invariant-bad} shows how placing the
loop-invariant operation \texttt{99$\div$x} into the loop body results
in the operation being evaluated in every iteration of the loop. In
the same way that in Section~\ref{branch-fusion} we updated our
processing of $\phi$ nodes to hoist computations that are common to
both the true and false sides, we can also update our processing of
$\eval$ nodes to hoist computations that are common across all loop
iterations, namely loop-invariant computations.

Recall that in Section~\ref{sec:build-in-axioms} we defined a
predicate $\invariant_\ell(n)$ which is true if the value of $n$ does
not vary in loop $\ell$, meaning that $n$ is invariant with respect to
loop $\ell$. The general approach will therefore be as follows: when
converting $\eval_\ell$, we simply keep any node $n$ that is invariant
with respect to $\ell$ outside of the loop body. Unfortunately, there
are some subtleties with making this approach work correctly. For
example, consider the \PEG from
Figure~\ref{reversion-loop-invariant-bad}. In this \PEG the $\div$
node is invariant with respect to loop 1 (\texttt{99$\div$x} produces
the same value no matter what iteration the execution is at). However,
if we were to evaluate the loop-invariant operation \texttt{99$\div$x}
before the loop, we would change the semantics of the program. In
particular, if \texttt{x} were $0$, the $\div$ operation would fail,
whereas the original program would simply terminate and return $1$
(because the original program only evaluates \texttt{99$\div$x} if
\texttt{x} is strictly greater than $0$). Thus, by pulling the
loop-invariant operation out of the loop, we have changed the
semantics of the program.

Even traditional formulations of loop-invariant code motion must deal
with this problem. The standard solution is to make sure that pulling
loop-invariant code outside of a loop does not cause it to execute in
cases where it would not have originally. In our \PEG setting, there
is a simple but very conservative way to guarantee this: when
processing an $\eval_\ell$ node, if we find a node $n$ that is
invariant with respect to $\ell$, we pull $n$ outside of the loop only
if there are no $\theta$ or $\phi$ nodes between the $\eval_\ell$ node
and $n$. The intuition behind disallowing $\phi$ and $\theta$ is that
both of these nodes can bypass evaluation of some of their children:
the $\phi$ chooses between its second and third child, bypassing the
other, and the $\theta$ node can bypass its second child if the loop
performs no iterations. This requirement is more restrictive than it
needs to be, since a $\phi$ node always evaluates its first child, and
so we could even allow $\phi$ nodes, as long as the loop invariant
node was used in the first child, not the second or third. In general,
it is possible to modularize the decision as to whether some code
executes more frequently than another in an \emph{evaluation-condition
  analysis}, or $\EvalCond$ for short. An $\EvalCond$ analysis would
compute for every node in the \PEG context an abstract evaluation
condition capturing under which cases the \PEG node is
evaluated. $\EvalCond$ is a generalization of the $\MustEval$ analysis
from Section~\ref{sec:hosting-from-branches}, and as with $\MustEval$,
an implementation has a lot of flexibility in defining $\EvalCond$. In
the more general setting of using an $\EvalCond$ analysis, we would
only pull a loop-invariant node if its evaluation condition is implied
by the evaluation condition of the $eval$ node being processed.

\begin{figure}
\includegraphics[width=\textwidth]{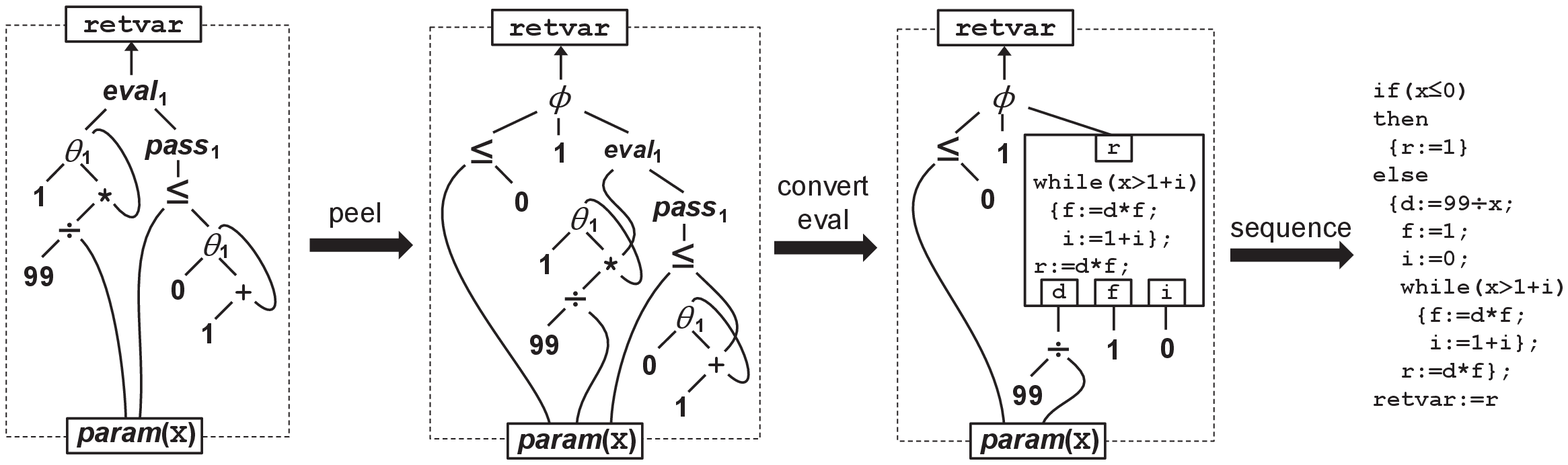}
\caption{Reversion of a PEG after peeling the loop once}
\label{reversion-loop-invariant-good}
\end{figure}

Since we now prevent loop-invariant code from being hoisted if it
would execute more often after being hoisted, we correctly avoid
pulling \texttt{99$\div$x} out of the loop. However, as is well known
in the compiler literature~\cite{AppelBook}, even in such cases it is
still possible to pull the loop-invariant code out of the loop by
performing loop peeling first. For this reason, we perform loop
peeling in the reversion process in cases where we find a loop
invariant-node that (1) cannot directly be pulled out because doing so
would make the node evaluate more often after hosting and (2) is
always evaluated provided the loop iterates a few times. Loop peeling
in the reversion process works very similarly to loop peeling as
performed in the engine (see Section~\ref{sec:loop-peeling}), except
that instead of using equality analyses, it is performed destructively
on the \PEG representation. Using the same starting example as before,
Figure~\ref{reversion-loop-invariant-good} shows the result of this
peeling process (step labeled ``peel''). After peeling, the $\phi$
node checks the entry condition of the original loop and evaluates the
peeled loop if this condition fails. Notice that the $\eval$ and
$\leq$ nodes in the new \PEG loop refer to the second child of the
$\theta$ nodes rather than $\theta$ nodes themselves, effectively
using the value of the loop variables after one iteration. An easy way
to read such nodes is to simply follow the edges of the \PEG; for
example, the ``$+$'' node can be read as ``$1+\theta_1(\ldots)$''.

In general, we repeat the peeling process until the desired
loop-invariant nodes used by the body of the loop are also used before
the body of the loop. In our example from
Figure~\ref{reversion-loop-invariant-good}, only one run of peeling is
needed. Notice that, after peeling, the $\div$ node is still
loop-invariant, but now there are no $\phi$ or $\theta$ nodes between
the $\eval$ node and the $\div$ node. Thus, although
\texttt{99$\div$x} is not always evaluated (such as when
\texttt{x$\leq$0} is true), it is always evaluated whenever the
$\eval$ node is evaluated, so it is safe to keep it out of the loop
body. As a result, when we convert the $\eval$ nodes to loop nodes, we
no longer need to keep the $\div$ node in the body of the loop, as
shown in
Figure~\ref{reversion-loop-invariant-good}. Figure~\ref{reversion-loop-invariant-good}
also shows the final generated \SIMPLE program for the peeled loop.
Note that the final code still has some code duplication: \texttt{1+i}
is evaluated multiple times in the same iteration, and \texttt{d*f} is
evaluated both when the while-loop guard succeeds and the guard
fails. These redundancies are difficult to remove without using more
advanced control structures that are not present in \SIMPLE. Our
implementation can take advantage of more advanced control structures
to remove these remaining redundancies. We do not show the details
here -- instead we refer the interested reader to our
technical report~\cite{peg2cfg}.

We should also note that, since the $\EvalCond$ analysis can handle
loop operators and subsumes the $\MustEval$ analysis, it is possible
to convert $\phi$ nodes before converting $\eval$ nodes, although both
still need to happen after the loop peeling stage. This rearrangement
enables more advanced redundancy elimination optimizations.

\section{The Peggy Instantiation}
\label{sec:implementation}

In this section we discuss details of our concrete implementation of
equality saturation as the core of an optimizer for Java bytecode
programs. We call our system \Peggy, named after our \PEG intermediate
representation. As opposed to the previous discussion of the \SIMPLE
language, \Peggy operates on the entire Java bytecode instruction set,
complete with side effects, method calls, heaps, and exceptions.
Recall from Figure~\ref{fig:optimize} that an instantiation of our
approach consists of three components: (1) an IR where equality
reasoning is effective, along with the translation functions
$\CfgToIr$ and $\IrToCfg$, (2) a saturation engine $\Saturate$, and
(3) a global profitability heuristic $\SelectBest$. We now describe
how each of these three components work in \Peggy.

\subsection{Intermediate Representation}

\Peggy uses the \PEG and \EPEG representations which, as explained in
Section~\ref{sec:semantics}, are well suited for our approach. Because
\Peggy is a Java bytecode optimizer, an additional challenge is to encode
Java-specific concepts like the heap and exceptions in \PEGs.

\mypara{Heap.} We model the heap using heap summaries which
we call $\sigma$ nodes. Any operation that can read and/or write some
object state may have to take and/or return additional $\sigma$
values.  Because Java stack variables cannot be modified except by
direct assignments, operations on stack variables are precise in our
\PEGs and do not involve $\sigma$ nodes.  None of these decisions of
how to represent the heap are built into the \PEG representation.  As
with any heap summarization strategy, one can have different levels of
abstraction, and we have simply chosen one where all objects are put
into a single summarization object $\sigma$.

\begin{figure}
\begin{center}
\includegraphics[width=3.5in]{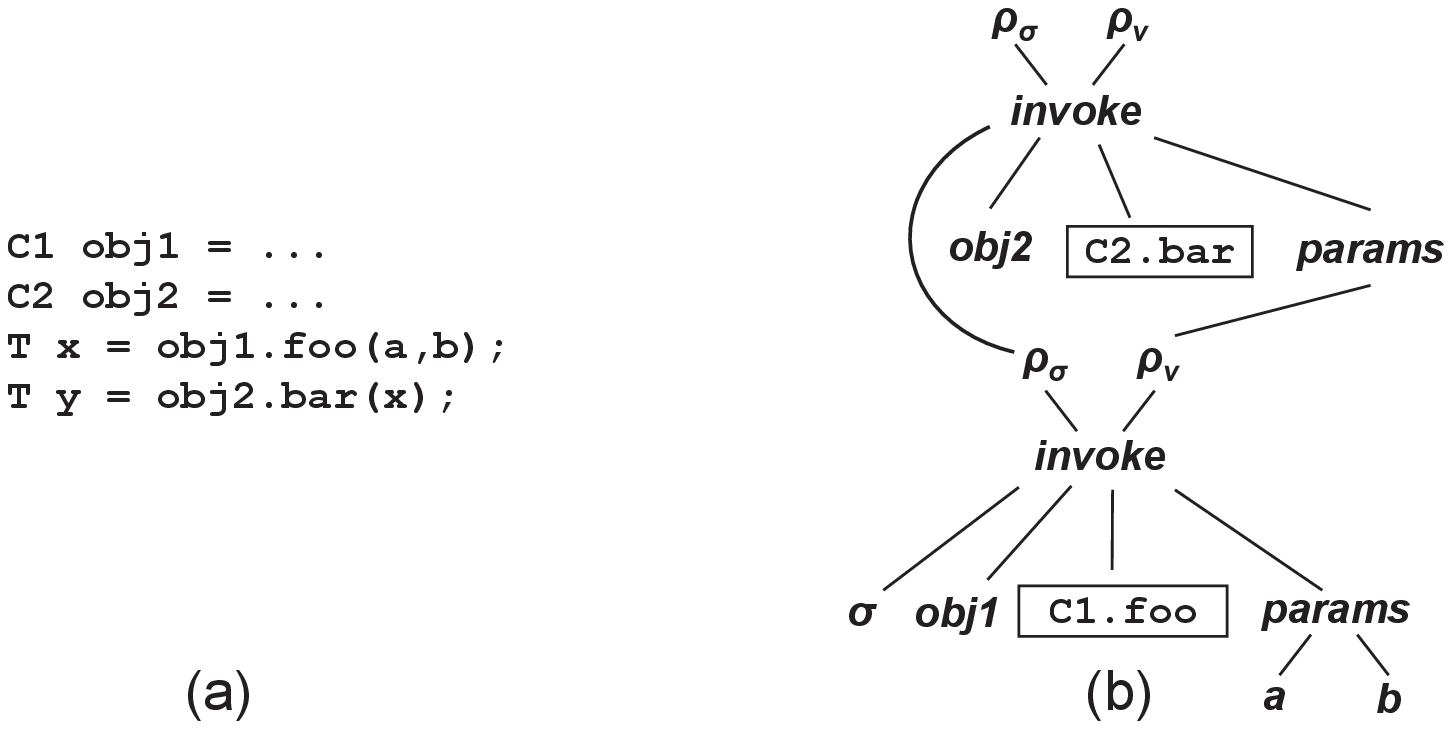}
\end{center}
\caption{Representation of Java method calls in a \PEG; (a) the
  original Java source code, (b) the corresponding \PEG.}
\label{fig:method-call}
\end{figure}

\mypara{Method calls.}  Figure~\ref{fig:method-call} shows an example
of how we encode two sequential method calls in a \PEG. Each
non-static method call operator has four parameters: the input
$\sigma$ heap summary, the receiver object of the method call, a
method identifier, and a list of actual parameters. A static method
call simply elides the receiver object. Logically, our {\tt invoke}
nodes return a tuple $(\sigma,v)$, where $\sigma$ is the resulting
heap summary and the $v$ is the return value of the method. The
operator $\rho_\sigma$ is used to project out the $\sigma$ value from
this tuple, and $\rho_v$ is used to project out the return value. From
this figure we can see that the call to {\tt bar} uses the output
$\sigma$ value from {\tt foo} as its input $\sigma$ value. This is how
we encode the control dependency between the two {\tt invoke}
operators; by representing it as a data dependency on the heap. Many
other Java operators have side effects that can be modeled as
modifications to our $\sigma$ heap summary, including array accesses,
field accesses, object creation, and synchronization. They similarly
take a $\sigma$ value as input and/or return one as output, which
encodes the data dependencies between the different heap operations.

\mypara{Exceptions.} In order to maintain the program state at points
where exceptions are thrown, we bundle the exception state into our
abstraction of the heap, namely the $\sigma$ summary nodes. As a
result, operations like division which may throw an exception, but do
not otherwise modify the heap, now take and return a $\sigma$ node (in
addition to their regular parameters and return values). 

The control flow of exceptions in a \PEG is computed from a CFG where
exceptional control flow has been made explicit. To build such an
exception-aware CFG, we introduce a new boolean function called
$\mathit{isException}$, which returns true if the current state of the
program is in an exceptional state. After every statement in the CFG
that could possibly cause an exception, we insert a conditional branch
on $\mathit{isException}$, which jumps to the exception handler if one
exists in the current method, or to the end of the CFG otherwise. Once
exceptions have been made explicit in the CFG, we simply use our
conversion algorithm to create a \PEG from the exception-aware
CFG. The $\mathit{isException}$ tester in the CFG gets translated into
a new \PEG operator $\mathit{isException}$, which reads the output
$\sigma$ node of an operation, and returns a boolean indicating
whether an exception occurred. The normal translation from CFG to \PEG
introduces the appropriate $\phi$ nodes to produce the correct values
in the exception vs. non-exception cases.

By bundling the exceptional state inside the heap summary node
$\sigma$, and using an explicit tester function
$\mathit{isException}$, the above encoding makes exceptions
explicit. Since our optimization process preserves heap behavior, it
therefore forces the observable state at the point where an exception
is thrown to be preserved.

\mypara{Reverting Uses of the Heap.}

One issue that affects our Peggy instantiation is that the $\sigma$
nodes representing the heap cannot be duplicated or copied when
running the program. More technically, heap values must be used
linearly in the CFG. Linear values complicate the reversion presented
in Section~\ref{sec:peg2cfg}, which assumes any value can be
duplicated freely. In order to adapt the algorithm from
Section~\ref{sec:peg2cfg} to handle linear values, one has to
``linearize'' heap uses, which consists of finding a valid order for
executing operations so that the heap does not need to be
duplicated. Since \PEGs come from actual programs which use the heap
linearly, we decided to linearize the uses of the heap within each
branch nodes and loop nodes generated in the reversion algorithm. This
approach, unfortunately, is not complete. There are cases (which we
can detect while running the conversion algorithm) where partitioning
into branch nodes and loop nodes, and \emph{then} trying to solve
linearity constraints, leads to unsolvable constraints, even though
the constraints are solvable when taking a global view of the
\PEG. Experimentally, this incompleteness occurs in less than 3\% of
the Java methods we compiled (in which case we simply do not optimize
the method). We briefly present the challenges behind solving this
problem and some potential solutions.

\begin{figure}[t]
\begin{center}
\includegraphics[width=5.0in]{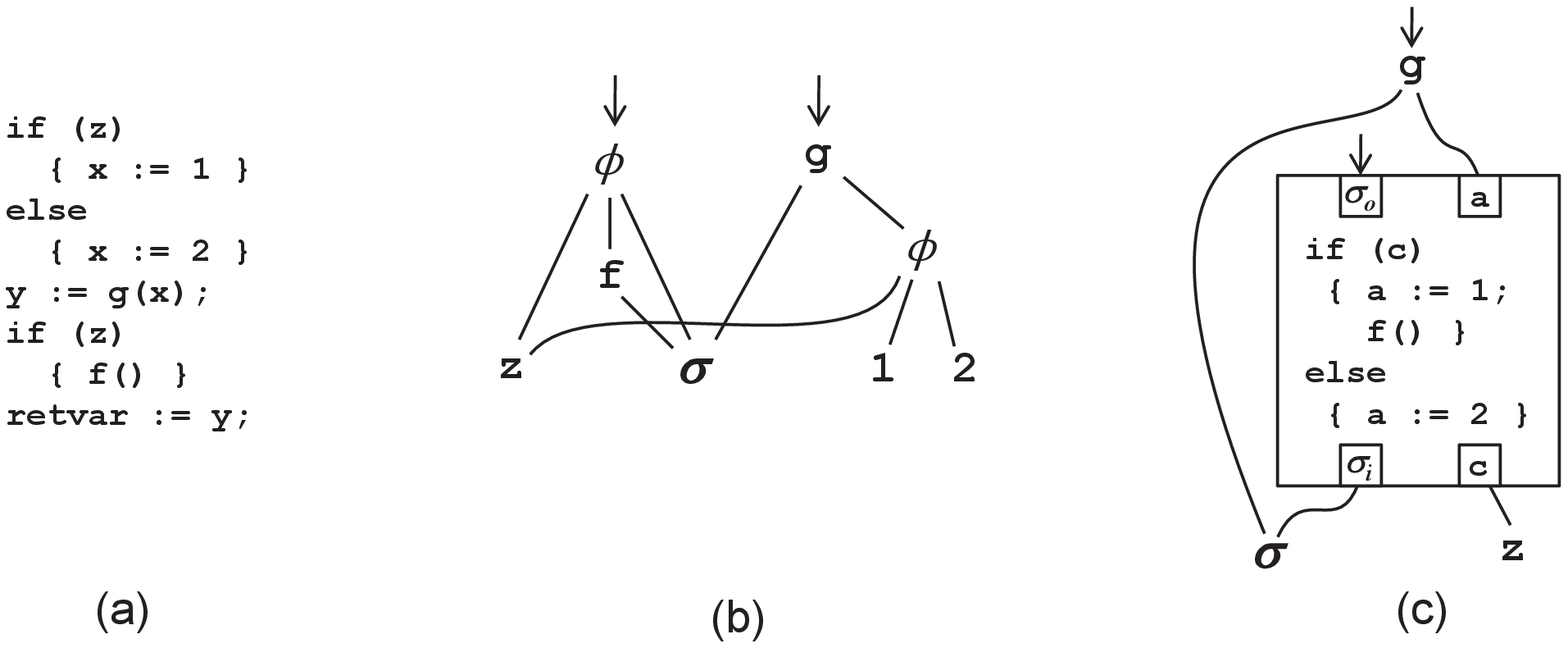}
\end{center}
\caption{Example demonstrating challenges of linearizing heap values.}
\label{fig:linear}
\end{figure}

Consider the \SIMPLE style code shown in Figure~\ref{fig:linear}(a),
and its \PEG representation in Figure~\ref{fig:linear}(b). Suppose
that \verb-g- is a side-effect free operation of one parameter, which
reads the heap and returns a value (but does not make any heap
modifications). Thus, in the \PEG representation, \verb-g- takes two
inputs: a heap value $\sigma$ (which is the original heap at the
entry), and its regular input parameter; \verb-g- also returns a
single value -- the computed return value of \verb-g- -- but it does
\emph{not} return a new heap value since it does not modify the
heap. We also assume that \verb-f- is an operation that reads and
writes to the heap, but returns \verb-void-. For example, \verb-f-
could just increment a global on the heap. Thus, in the \PEG
representation, \verb-f- takes an input heap parameter, and produces
an output heap (but does not produce a real return value since it
returns \verb-void-). We see from the code in part (a) that the return
value is produced by \verb-g-. However, since \verb-f- may have
produced a new heap, we must also encode the new heap as a return
value. Thus, the code in part (a) returns two values: its regular
return value, and a new heap, as shown with the two arrows in part
(b).

Looking at the \PEG in Figure~\ref{fig:linear}(b), we already see
something that may be cause for concern: the $\sigma$ node, which
represents a linear value in imperative code, is used in three
different places. However, there is absolutely nothing wrong with
doing this from the perspective of \PEGs, since \PEGs treat all values
(including heap values) functionally. The only problem is that to
convert this \PEG back to imperative code, we must find a valid
ordering of instructions so that we can run all the instructions in
the \PEG without having to duplicate the heap. The ordering in this
case is obvious: since \verb-f- modifies the heap, and \verb-g- does
not, run \verb-g- first, followed by \verb-f-.

The problem with our current approach is that our reversion first
creates and fuses branch nodes, and then it tries to linearize heap
values in each branch nodes. For example, Figure~\ref{fig:linear}(c)
shows the \PEG after creating a branch node for each of the two $\phi$
nodes in part (a), and fusing these two branch nodes
together. Unfortunately, once we have decided to place \verb-f- inside
of the branch block, the linearization constraints are not solvable
anymore: \verb-f- can no longer be executed after \verb-g- since
\verb-g- relies on the result of the branch block which executes
\verb-f-. This example shows the source of incompleteness in our
current reversion algorithm for linear values, and points to the fact
that one needs to solve the linearization constraints while taking a
global view of the \PEG.

However, devising a complete linearization algorithm for heap nodes
that takes a global view of the \PEG is non-trivial. Heap nodes can be
used by many functions, some of which may commute with others. Other
functions may occur in different control paths, such as one on the
true path of a branch and the other on the false path. In a \PEG,
identifying which situations these functions fall in is not a simple
task, especially considering that \PEGs can represent complex loops
and branches. This gets more challenging when using the saturation
engine. The saturation engine can determine that a write is accessing
a different array-index or field than a read, and therefore the write
commutes with the read. This information isn't available to the
reversion algorithm, though, so it cannot determine that the write can
be safely moved after the read.

\mypara{Equivalences.}

The \EPEG data structure contains a large number of \PEG{s}, and
stores the equivalences between their nodes. The number of
equivalences discovered during saturation can be exponential in the
number of axioms applied. So far in this paper, we have depicted
equivalences as dotted lines between \EPEG nodes. In reality, storing
a distinct object for each equivalence discovered would require a
large amount of memory. Instead, we represent equivalences between
nodes by partitioning the nodes into equivalence classes, then simply
storing the members of each class. Before saturation, every node is in
its own equivalence class. Saturation proceeds by merging pairs of
equivalence classes $A$ and $B$ together whenever an equality is
discovered between a node in $A$ and a node in $B$. The merging is
performed using Tarjan's union-find
algorithm~\cite{tarjan_union_find}. This approach makes the memory
overhead of an \EPEG proportional to the number of nodes it contains,
and requires no additional memory to represent the equivalences
between nodes. 

\subsection{Saturation Engine}
\label{sec:saturation-engine}

The saturation engine's purpose is to repeatedly dispatch equality
analyses. In our implementation an equality analysis is a pair
$(\Trigger,\Search)$ where $\Trigger$ is a trigger, which is an \EPEG
pattern with free variables, and $\Search$ is a callback function that
should be run when the pattern $\Trigger$ is found in the \EPEG. While
running, the engine continuously monitors the \EPEG for the presence
of the pattern $\Trigger$, and when it is discovered, the engine
constructs a \emph{matching substitution}, which is a map from each
node in the pattern to the corresponding \EPEG node. At this point,
the engine invokes $\Search$ with this matching substitution as a
parameter, and $\Search$ returns a set of equalities that the engine
adds to the \EPEG.  In this way, an equality analysis will be invoked
only when events of interest to it are discovered. Furthermore, the
analysis does not need to search the entire \EPEG because it is
provided with the matching substitution.

\begin{figure}
\begin{algorithmic}[1]
\declarefunction{\Saturate(\peg : \PEGType, A: 
\mbox{set of analyses}) : \EPEGType}
  \STATE {\bf let } $\epeg = \CreateInitialEPEG(\peg)$
  \WHILE{$\exists (\Trigger,\Search) \in A, \subst \in \SubstType  \ . \ \subst = \Match(\Trigger, \epeg )$}
    \STATE $\epeg := \AddEdges(\epeg, \Search(\subst, \epeg))$
  \ENDWHILE
  \STATE {\bf return } $\epeg$
\thickstraightline
\finishfunction
\end{algorithmic}
\caption{\Peggy's Saturation Engine.
  We use $\SubstType$ to denote the set of all
  substitutions from pattern nodes to \EPEG nodes.}
\label{fig:saturate}
\end{figure}

Figure~\ref{fig:saturate} shows the pseudo-code for \Peggy's
saturation engine. On line 2, the call to $\CreateInitialEPEG$ 
takes the input \PEG and generates an \EPEG that initially contains
no equality information.  The $\Match$ function invoked in the loop
condition performs pattern matching: if an analysis trigger occurs
inside an \EPEG, then $\Match$ returns the matching substitution. Once
a match occurs, the saturation engine uses $\AddEdges$ to add the
equalities computed by the analysis into the \EPEG.

A remaining concern in Figure~\ref{fig:saturate} is how to efficiently
implement the existential check on line 3.  The main challenge in
applying axioms lies in the fact that one axiom application may
trigger others. A naive implementation would repeatedly check all
axioms once an equality has been added, which leads to a lot of
redundant work since many of the axioms will not be triggered by the
new equality. Our original attempt at an implementation used this
approach, and it was unusably slow. To make our engine efficient, we
use well-known techniques from the AI community. In particular, our
problem of applying axioms is very similar to that of applying rules
to infer facts in rule-based systems, expert systems, or planning
systems. These systems make use of an efficient pattern matching
algorithm called the Rete
algorithm~\cite{ExpertSystemsBook}. Intuitively, the Rete algorithm
stores the state of every partially completed match as a finite state
machine. When new information is added to the system, rather than
reapplying every pattern to every object, it simply steps the state of
the relevant machines. When a machine reaches its accept state, the
corresponding pattern has made a complete match. We have adapted this
pattern matching algorithm to the \EPEG domain. The patterns of our
Rete network are the preconditions of our axioms. These generally look
for the existence of particular sub-\PEG{s} within the \EPEG, but can
also check for other properties such as loop-invariance. When a
pattern is complete it triggers the response part of the axiom, which
can build new nodes and establish new equalities within the \EPEG. The
creation of new nodes and equalities can cause other state machines to
progress, and hence earlier axioms applications may enable later ones.

In general, equality saturation may not terminate. Termination is also
a concern in traditional compilers where, for example, inlining
recursive functions can lead to unbounded expansion. By using triggers
to control when equality edges are added (a technique also used in
automated theorem provers), we can often avoid infinite expansion. The
trigger for an equality axiom typically looks for the left-hand-side
of the equality, and then makes it equal to the right-hand-side. On
occasion, though, we use more restrictive triggers to avoid expansions
that are likely to be useless. For example, the trigger for the axiom
equating a constant with a loop expression used to add edge D in
Figure~\ref{fig:inter-loop-strength-reduction} also checks that there
is an appropriate ``pass'' expression. In this way, it does not add a
loop to the \EPEG if there wasn't some kind of loop to begin
with. Using our current axioms and triggers, our engine completely
saturates \completionrate\% of the methods in our benchmarks.

In the remaining cases, we impose a limit on the number of expressions
that the engine fully processes (where fully processing an expression
includes adding all the equalities that the expression triggers). To
prevent the search from running astray and exploring a single
infinitely deep branch of the search space, we currently use a
breadth-first order for processing new nodes in the \EPEG. This
traversal strategy, however, can be customized in the implementation
of the Rete algorithm to better control the searching strategy in
those cases where an exhaustive search would not terminate.

\subsection{Global Profitability Heuristic}
\label{sec:heuristic}

\Peggy's $\SelectBest$ function uses a Pseudo-Boolean solver called
Pueblo~\cite{pueblo} to select which nodes from an \EPEG to include in
the optimized program. A Pseudo-Boolean problem is an integer linear
programming problem where all the variables have been restricted to 0
or 1. By using these 0-1 variables to represent whether or not nodes
have been selected, we can encode the constraints that must hold for
the selected nodes to be a CFG-like \PEG. In particular, for each node
or equivalence class $x$, we define a pseudo-boolean variable that
takes on the value 1 (true) if we choose to evaluate $x$, and 0
(false) otherwise. The constraints then enforce that the resulting
\PEG is CFG-like. The nodes assigned 1 in the solution that Pueblo
returns are selected to form the \PEG that $\SelectBest$ returns.

Recall that an \EPEG is a quadruple $\langle \Node, \Label, \Param, E
\rangle$, where $\langle \Node, \Label, \Param \rangle$ is a \PEG and
$E$ is a set of equalities which induces a set of equivalence classes
$\Node/E$. Also recall that for $n \in \Node$, $params(n)$ is the list
equivalence classes that are parameters to $n$. We use $q \in
params(n)$ to denote that equivalence class $q$ is in the list. 
For each node $n \in \Node$, we define a boolean variable $B_n$ that
takes on the value true if we choose to evaluate node $n$, and false
otherwise. For equivalence class $q \in (\Node/E)$, we define a
boolean variable $B_q$ that takes on the value true if we choose to
evaluate some node in the equivalence class, and false otherwise.  We
use $r$ to denote the equivalence class of the return value.

\Peggy generates the boolean constraints for a given \EPEG $\langle
\Node, \Label, \Param, E \rangle$ using the following
$\mathsf{Constraints}$ function (to simplify exposition, we describe
the constraints here as boolean constraints, but these can easily be
converted into the standard ILP constraints that Pueblo expects):
$$
\begin{array}{lll}
\mathsf{Constraints}( \langle \Node,\Label, \Param, E\rangle) 
 & \equiv & B_r \wedge
  \bigwedge_{n \in N} \mathsf{F}(n) \wedge
  \bigwedge_{q \in (N/E)} \mathsf{G}(q)\\[2pt]
\mathsf{F}(n) & \equiv & B_n \Rightarrow \bigwedge_{q\in params(n)}
  B_q \\[2pt]
\mathsf{G}(q) & \equiv & B_q \Rightarrow
  \bigvee_{n\in q} B_n
\end{array}
$$ Intuitively, these constraints state that: (1) we must compute the
return value of the function (2) for each node that is selected, we
must select all of its parameters (3) for each equivalence class
that is selected, we must compute at least one of its nodes.

Once the constraints are computed, \Peggy sends the following
minimization problem to Pueblo: 
$$\qquad\mathbf{min}\ \sum_{n \in N} B_n \cdot C_n\ \mathbf{s. t. }\
\mathsf{Constraints}( \langle \Node,\Label, \Param, E\rangle)$$ where
$C_n$ is the constant cost of evaluating $n$ according to our cost
model.  The nodes which are set to true in the solution that Pueblo
returns are selected to form a \PEG.

The cost model that we use assigns a constant cost $C_n$ to each node
$n$. In particular, $C_n = \bcost(n) \cdot k^{\maxvariance(n)}$, where
$\bcost(n)$ accounts for how expensive $n$ is by itself, and
$k^{\maxvariance(n)}$ accounts for how often $n$ is executed.  $k$ is
simply a constant, which we have chosen as 20. We use
$\maxvariance(n)$ to denote the loop nesting depth of $n$, computed as
follows (recalling definition~\ref{defn:invariant} of
$\invariant_\ell$ from Section~\ref{sec:build-in-axioms}):
$\maxvariance(n) = \max \{ \ell \mid \neg \invariant_\ell(n)\}$. Using
this cost model, Peggy asks Pueblo to minimize the objective function
subject to the constraints described above. Hence, the \PEG that
Pueblo returns is CFG-like and has minimal cost, according to our cost
model.

The above cost model is very simple, taking into account only the cost
of operators and how deeply nested they are in loops. Despite being
crude, and despite the fact that \PEGs pass through a reversion
process that performs branch fusion, loop fusion and loop-invariant
code motion, our cost model is a good predictor of \emph{relative}
performance. A smaller cost usually means that, after reversion, the
code will use cheaper operators, or will have certain operators moved
outside of loops, which leads to more efficient code. One of the main
contributors to the accuracy of our cost model is that
$\maxvariance(n)$ is defined in terms of $\invariant_\ell$, and
$\invariant_\ell$ is what the reversion process uses for pulling code
outside of loops (see
Section~\ref{sec:loop-invariant-code-motion}). As a result, the cost
model can accurately predict at what loop depth the reversion
algorithm will place a certain node, which makes the cost model
relatively accurate, even in the face of reversion.

There is an additional subtlety in the above encoding. Unless we are
careful, the Pseudo-Boolean solver can return a \PEG that contains
cycles in which none of the nodes are $\theta$ nodes. Such \PEGs are
not CFG-like. For example, consider the expression $x + 0$. After
axiom application, this expression (namely, the $+$ node) will become
equivalent to the $x$ node. Since $+$ and $x$ are in the same
equivalence class, the above encoding allows the Pseudo-Boolean solver
to select $+$ with $+$ and $0$ as its parameters. To forbid such
invalid \PEGs, we explicitly encode that all cycles must have a
$\theta$ node in them. In particular, for each pair of nodes $i$ and
$j$, we define a boolean variable $B_{i\leadsto j}$ that represents
whether or not $i$ reaches $j$ without going through any $\theta$
nodes in the selected solution. We then state rules for how these
variables are constrained. In particular, if a non-$\theta$ node $i$
is selected ($B_i$) then $i$ reaches its immediate children (for each
child $j$ of $i$, $B_{i\leadsto j}$). Also, if $i$ reaches a
non-$\theta$ node $j$ in the current solution ($B_{i\leadsto j}$), and
$j$ is selected ($B_j$), then $i$ reaches $j$'s immediate children
(for each child $k$ of $j$, $B_{i\leadsto k}$). Finally, we add the
constraint that for each non-$\theta$ node $n$, $B_{n\leadsto n}$ must
be false.

It is worth noting that the particular choice of Pseudo-Boolean solver
is independent of the correctness of this encoding. We have chosen to
use Pueblo because we have found that it runs efficiently on the types
of problems that Peggy generates, but it is a completely pluggable
component of the overall system. This modularity is beneficial because
it makes it easy to take advantage of advances in the field of
Pseudo-Boolean solvers. In fact, we have tested two other solvers
within our framework: Minisat~\cite{minisat} and
SAT4J~\cite{sat4j}. We have found that occasionally Minisat performs
better than Pueblo, and that SAT4J uniformly performs worse than the
other two. These kinds of comparisons are very simple to do with our
framework since we can easily swap one solver for another.

\subsection{Eval and Pass}
\label{evalpass}

One might wonder why we have $\eval$ and $\pass$ as separate operators
rather than combining them into a single operator, say $\mu$.  At this
point we can reflect upon this design decision and argue why we
maintain the separation. One simple reason why we maintain this
separation is that there are useful operators other than $\pass$ that
can act as the second child to an $\eval$. The loop peeling example
from Section~\ref{sec:loop-peeling} gives three such examples, namely
$S$, $Z$, and $\phi$.  It is also convenient to have each loop
represented by a single node, namely the $\pass$ node.  This does not
happen when using $\mu$ nodes, since there would be many $\mu$ nodes
for each loop.  These $\mu$ nodes would all share the same break
condition, but we illustrate below why that does not suffice.

Suppose that during equality saturation, some expensive analysis
decides the engine should explore peeling a loop.  Using $\eval$ and
$\pass$, this expensive analysis could initiate the peeling process by
simply replacing the $\pass$ node of that loop with an appropriate
$\phi$ node. Afterward, simple axioms would apply to each $\eval$
node independently in order to propagate the peeling process.  Using
$\mu$ nodes on the other hand, the expensive analysis would have to
explicitly replace every $\mu$ node with its peeled version.  Thus,
using $\eval$ and $\pass$ allows the advanced analysis to initiate
peeling only once, whereas using $\mu$ nodes requires the advanced
analysis to process each $\mu$ node separately.

Next we consider our global profitability heuristic in this situation
after loop peeling has been performed. Now for every $\eval$ or $\mu$
node there are two versions: the peeled version and the original
version.  Ideally we would select either only peeled versions or only
original versions.  If we mix them up, this forces us to have two
different versions of the loop in the final result.  In our
Pseudo-Boolean heuristic with $\eval$ and $\pass$ nodes, we encourage
the use of only one loop by making $\pass$ nodes expensive; thus the
solver would favor PEGs with only one $\pass$ node (i.e. one loop)
over two $\pass$ nodes.  However, there is no way to encourage this
behavior using $\mu$ nodes as there is no single node which represents
a loop. The heuristic would select the peeled versions for those
$\mu$ nodes where peeling was beneficial and the original versions for
those $\mu$ nodes where peeling was detrimental, in fact encouraging
an undesirable mix of peeled and original versions.

For similar reasons, the separation of $\eval$ and $\pass$ nodes is
beneficial to the process of reverting PEGs to CFGs.  A loop is peeled
by rewriting the $\pass$ node, and then all $\eval$ nodes using that
$\pass$ node are automatically peeled simultaneously.  Thus, when an
optimization such as loop-invariant code motion determines that a loop
needs to be peeled, the optimization needs to make only one change and
the automatic rewriting mechanisms will take care of the rest.

To summarize, the separation of $\eval$ and $\pass$ nodes makes it
easy to ensure that any restructuring of a loop is applied
consistently: the change is just made to the $\pass$ node and the rest
follows suit. This allows restructuring analyses to apply once and be
done with. The separation also enables us to encourage the global
profitability heuristic to select \PEGs with fewer loops.

\section{Evaluation}
\label{sec:eval}

In this section we use our \Peggy implementation to validate three
hypotheses about our approach for structuring optimizers: our approach
is practical both in terms of space and time
(Section~\ref{sec:overhead}), it is effective at discovering both
simple and intricate optimization opportunities
(Section~\ref{sec:opts}), and it is effective at performing
translation validation (Section~\ref{sec:tv}).

\subsection{Time and space overhead}
\label{sec:overhead}

To evaluate the running time of the various \Peggy components, we
compiled SpecJVM, which comprises \compilemethods methods. For
1\% of these methods, Pueblo exceeded a one minute timeout we
imposed on it, in which case we just ran the conversion to \PEG and
back. We imposed this timeout because in some rare cases, Pueblo 
runs too long to be practical.

\clearlineno

\newcommand{\tabindent}{}
\newcommand{\TableSection}[1]{
\hline\hline \textbf{#1} & \textbf{Description} \\ \hline}
\newcommand{\FirstTableSection}[1]{
\hline \textbf{#1} & \textbf{Description} \\ \hline}
\newcommand{\EndTableSection}{}

\newcommand{\TableEntry}[2]{\tabindent\lineno~~#1 & #2 \\}

\begin{figure}
\begin{center}
\begin{tabular}{|@{}l|@{\hspace{2pt}}l@{\hspace{2pt}}|}
\FirstTableSection{(a) EQ Analyses}
\TableEntry
{Built-in \EPEG ops}
{Axioms about primitive \PEG nodes ($\phi$, $\theta$, $\eval$, $\pass$) }
\TableEntry
{Basic Arithmetic}
{Axioms about arithmetic operators like $+$, $-$, $*$,
  $/$, $\text{\tt <<}$, $\text{\tt >>}$}
\TableEntry
{Constant Folding}
{Equates a constant expression with its constant value}
\TableEntry
{Java-specific}
{Axioms about Java operators like field/array accesses}
\TableEntry
{TRE}
{Replaces the body of a tail-recursive procedure with a loop}
\TableEntry
{Method Inlining}
{Inlining based on intraprocedural class analysis}
\TableEntry
{Domain-specific}
{User-provided axioms about application domains}
\EndTableSection

\hline\TableSection{(b) Optimizations}
\TableEntry
{Constant Prop/Fold}
{Traditional Constant Propagation and Folding}
\TableEntry
{Simplify Algebraic}
{Various forms of traditional algebraic simplifications}
\TableEntry
{Peephole SR}
{Various forms of traditional peephole optimizations}
\TableEntry
{Array Copy Prop}
{Replace read of array element by last expression written}
\TableEntry
{CSE for Arrays}
{Remove redundant array accesses}
\TableEntry
{Loop Peeling}
{Pulls the first iteration of a loop outside of the loop}
\TableEntry
{LIVSR}
{Loop-induction-variable Strength Reduction}
\startrange{unanticipated}\TableEntry
{Interloop SR}
{Optimization described in Section~\ref{sec:loops}}
\TableEntry
{Entire-loop SR}
{Entire loop becomes one op, \eg~$n$ incrs becomes ``plus $n$''}
\TableEntry
{Loop-op Factoring}
{Factor op out of a loop, \eg~multiplication}
\TableEntry
{Loop-op Distributing}
{Distribute op into loop, where it cancels out with another}
\TableEntry
{Partial Inlining}
{Inlines part of method in caller, but keeps the call}
\TableEntry
{Polynomial Factoring}
{Evaluates a polynomial in a more efficient manner
 \finishrange{unanticipated}}
\EndTableSection

\hline\TableSection{(c) DS Opts}
\TableEntry
{DS LIVSR}
{LIVSR on domain ops like matrix addition and multiply}
\TableEntry
{DS Code Hoisting}
{Code hoisting based on domain-specific invariance axioms}
\TableEntry
{DS Remove Redundant\linelabel{redundancy-removal}}
{Removes redundant computations based on domain axioms}
\TableEntry
{Temp. Object Removal\linelabel{temp-obj-removal}}
{Remove temp objects made by calls to, \eg, matrix libraries}
\TableEntry
{Math Lib Specializing}
{Specialize matrix algs based on, \eg, the size of the matrix}
\TableEntry
{Design-pattern Opts}
{Remove overhead of common design patterns}
\TableEntry
{Method Outlining}
{Replace code by method call performing same computation}
\TableEntry
{Specialized Redirect}
{Replace call with more efficient call based on calling context}
\EndTableSection
\hline

\end{tabular}
\end{center}
\caption{Optimizations performed by \Peggy. 
Throughout this table we use the following abbreviations:
EQ means ``equality'', 
DS means ``domain-specific'',
TRE means ``tail-recursion elimination'',
SR means ``strength reduction''}

\label{fig:opts}
\end{figure}

The following table shows the average time in milliseconds taken per
method for the 4 main \Peggy phases (for Pueblo, a timeout counts as
60 seconds).

\vspace{3pt}
\[
\begin{tabular}{|c|c|c|c|c|}
\hline
 & CFG to \PEG & Saturation & Pueblo & \PEG to CFG \\
\hline
Time & 13.9 ms & 87.4 ms &  1,499 ms & 52.8 ms \\
\hline
\end{tabular}
\]
\vspace{3pt}

All phases combined take slightly over 1.5 seconds. An end-to-end run
of Peggy is on average 6 times slower than Soot with all of its
intraprocedural optimizations turned on. Nearly all of our time is
spent in the Pseudo-Boolean solver. We have not focused our efforts on
compile-time, and we conjecture there is significant room for
improvement, such as better pseudo-boolean encodings, or other kinds
of profitability heuristics that run faster.

Since \Peggy is implemented in Java, to evaluate memory footprint, we
limited the JVM to a heap size of \heapsize MB, and observed that \Peggy
was able to compile all the benchmarks without running out of 
memory.

In \completionrate\% of compiled methods, the engine ran to complete
saturation, without imposing bounds. For the remaining cases, the
engine limit of \enginebound was reached, meaning that the engine 
ran until fully processing \enginebound expressions in the \EPEG, along
with all the equalities they triggered. In these cases, we cannot
provide a completeness guarantee, but we can give an estimate of the
size of the explored state space. In particular, using just \heapsize
MB of heap, our \EPEGs represented more than \lowerbound versions of
the input program (using geometric average).

\subsection{Implementing optimizations} 
\label{sec:opts}

The main goal of our evaluation is to demonstrate that common, as well
as unanticipated, optimizations result in a natural way from our
approach. To achieve this, we implemented a set of basic equality
analyses, listed in Figure~\ref{fig:opts}(a). We then manually browsed
through the code that \Peggy generates on a variety of benchmarks
(including SpecJVM) and made a list of the optimizations that we
observed. Figure~\ref{fig:opts}(b) shows the optimizations that we
observed fall out from our approach using equality analyses 1 through
6, and Figure~\ref{fig:opts}(c) shows optimizations that we observed
fall out from our approach using equality analyses 1 through 7. Based
on the optimizations we observed, we designed some micro-benchmarks
that exemplify these optimizations. We then ran \Peggy on each of these
micro-benchmarks to show how much these optimizations improve the code
when isolated from the rest of the program.

Figure~\ref{fig:benchmarks-runtime-opts} shows our experimental
results for the runtimes of the micro-benchmarks listed in
Figure~\ref{fig:opts}(b) and (c). The y-axis shows run-time normalized
to the runtime of the unoptimized code. Each number along the x-axis
is a micro-benchmark exemplifying the optimization from the
corresponding row number in Figure~\ref{fig:opts}. The ``rt'' and
``sp'' columns correspond to our larger raytracer benchmark and
SpecJVM, respectively. The value reported for SpecJVM is the average
ratio over all benchmarks within SpecJVM. Our experiments with Soot
involve running it with all intra-procedural optimizations turned on,
which include: common sub-expression elimination, lazy code motion,
copy propagation, constant propagation, constant folding, conditional
branch folding, dead assignment elimination, and unreachable code
elimination.  Soot can also perform interprocedural optimizations,
such as class-hierarchy-analysis, pointer-analysis, and
method-specialization. We did not enable these optimizations when
performing our comparison against Soot, because we have not yet
attempted to express any interprocedural optimizations in \Peggy. In
terms of runtime improvement, \Peggy performed very well on the
micro-benchmarks, optimizing all of them by at least 10\%, and in many
cases much more. Conversely, Soot gives almost no runtime
improvements, and in some cases makes the program run slower. For the
larger raytracer benchmark, \Peggy is able to achieve a 7\% speedup,
while Soot does not improve performance. On the SpecJVM benchmarks
both \Peggy and Soot had no positive effect, and \Peggy on average
made the code run slightly slower. This leads us to believe that
traditional intraprocedural optimizations on Java bytecode generally
produce only small gains, and in this case there were few or no
opportunities for improvement.

\begin{figure}[t]
  \input{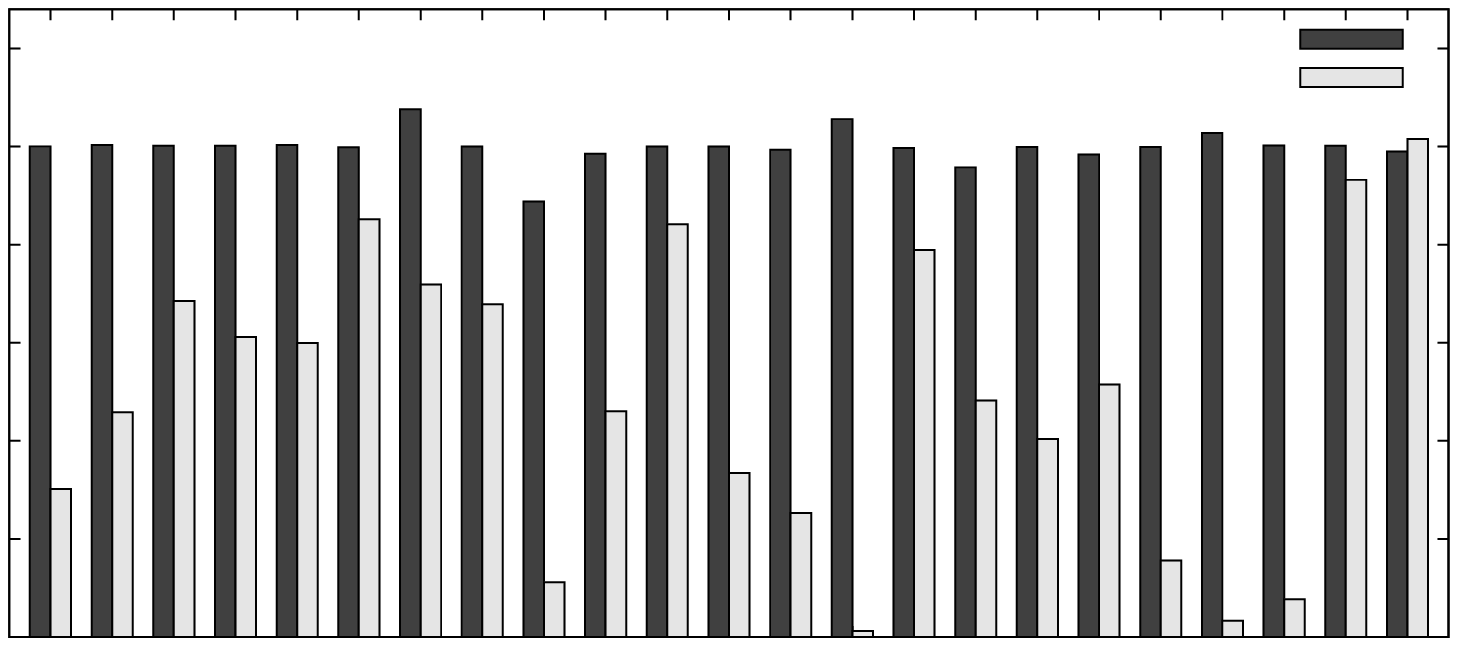}
  \caption{Runtimes of generated code from Soot and \Peggy, normalized
    to the runtime of the unoptimized code. The x-axis denotes the
    optimization number from Figure~\ref{fig:opts}, where ``rt'' is
    our raytracer benchmark and ``sp'' is the average over the SpecJVM
    benchmarks.}
  \label{fig:benchmarks-runtime-opts}
\end{figure}

With effort similar to what would be required for a compiler writer to
implement the optimizations from part (a), our approach enables the
more advanced optimizations from parts (b) and (c). 
\Peggy performs some optimizations (for example
\refrange{unanticipated}) that are quite complex given the simplicity
of its equality analyses. To implement such optimizations in a
traditional compiler, the compiler writer would have to explicitly
design a pattern that is specific to those optimizations. In contrast,
with our approach these optimizations fall out from the interaction of
basic equality analyses without any additional developer effort, and
without specifying an order in which to run them. Essentially, \Peggy
finds the right sequence of equality analyses to apply for producing
the effect of these complex optimizations.

With the addition of domain-specific axioms, our approach enables even
more optimizations, as shown in part (c). To give a flavor for these
domain-specific optimizations, we describe two examples.

The first is a ray tracer application (5 KLOCs) that one of the
authors had previously developed. To make the implementation clean and
easy to understand, the author used immutable vector objects in a
functional programming style. This approach however introduces many
intermediate objects. With a few simple vector axioms, \Peggy is able
to remove the overhead of these temporary objects, thus performing a
kind of deforestation optimization. This makes the application 7\%
faster, and reduces the number of allocated objects by 40\%. Soot is
not able to recover any of the overhead, even with interprocedural
optimizations turned on. This is an instance of a more general
technique where user-defined axioms allow \Peggy to remove temporary
objects (optimization~\reflineno{temp-obj-removal} in
Figure~\ref{fig:opts}).

Our second example targets a common programming idiom involving {\tt
List}s, which consists of checking that a {\tt List} contains an
element $e$, and if it does, fetching and using the index of the
element. If written cleanly, this pattern would be implemented with a
branch whose guard is $\mathtt{contains}(e)$ and a call to
$\mathtt{indexOf}(e)$ on the true side of the branch. Unfortunately,
$\mathtt{contains}$ and $\mathtt{indexOf}$ would perform the same
linear search, which makes this clean way of writing the code
inefficient. Using the equality axiom \hbox{$l.\mathtt{contains}(e) =
(l.\mathtt{indexOf}(e) \neq -1)$}, \Peggy can convert the clean code
into the hand-optimized code that programmers typically write, which
stores $\mathtt{indexOf}(e)$ into a temporary, and then branches if
the temporary is not $-1$. An extensible rewrite system would not be
able to provide the same easy solution: although a rewrite of
$l.\mathtt{contains}(e)$ to $(l.\mathtt{indexOf}(e) \neq -1)$ would
remove the redundancy mentioned above, it could also degrade
performance in the case where the list implements an efficient
hash-based $\mathtt{contains}$. In our approach, the equality simply
adds information to the \EPEG, and the profitability heuristic can
decide after saturation which option is best, taking the entire
context into account. In this way our approach transforms
$\mathtt{contains}$ to $\mathtt{indexOf}$, but only if
$\mathtt{indexOf}$ would have been called anyway.

These two examples illustrate the benefits of user-defined axioms. In
particular, the clean, readable, and maintainable way of writing code
can sometimes incur performance overheads. User-defined axioms allow
the programmer to reduce these overheads while keeping the code base
clean of performance-related hacks. Our approach makes domain-specific
axioms easier to add for the end-user programmer, because the
programmer does not need to worry about what order the user-defined
axioms should be run in, or how they will interact with the compiler's
internal optimizations. The set of axioms used in the programs from
Figure~\ref{fig:opts} is presented in Appendix~\ref{sec:axioms}.

\subsection{Translation Validation} 
\label{sec:tv}

We used \Peggy to perform translation validation for the Soot
optimizer~\cite{vall99soot}.  In particular, we used Soot to optimize
a set of benchmarks with all of its intraprocedural optimizations
turned on, which include: common sub-expression elimination, lazy code
motion, copy propagation, constant propagation, constant folding,
conditional branch folding, dead assignment elimination, and
unreachable code elimination.  The benchmarks included SpecJVM, along
with other programs, comprising a total of \tvnummethods
methods. After Soot finished compiling, for each method we asked
\Peggy's saturation engine to show that the original method was
equivalent to the corresponding method that Soot produced. The engine
was able to show that \tvsuccess\% of methods were compiled correctly.

Among the cases that \Peggy was unable to validate, we found three
methods that Soot optimized \emph{incorrectly}. In particular, Soot
incorrectly pulled statements outside of an intricate loop,
transforming a terminating loop into an infinite loop. It is a
testament to the power of our approach that it is able not only to
perform optimizations, but also to validate a large fraction of Soot
runs, and that in doing so it exposed a bug in Soot. Furthermore,
because most false positives are a consequence of our coarse
heap model (single $\sigma$ node), a finer-grained model can increase
the effectiveness of translation validation, and it would also enable
more optimizations.

Our equality saturation engine can easily be extended so that, after
each translation validation, it generates a machine-checkable proof of
equivalence. With this in place, the trusted computing base for our
translation validator would only be: (1) the proof checker, (2) the
built-in axioms used in translation validation, most of which we have
proved by hand, and (3) the conversion from Java bytecode to PEG.

\section{Related Work}
\label{sec:rel}

\mypara{Superoptimizers.} Our approach of computing a set of
programs and then choosing from this set is related to the approach
taken by super-optimizers~\cite{massalini, gcc-super-opt,
AikenSuperOpt, burg}. Superoptimizers strive to produce optimal code,
rather than simply improve programs. Although super-optimizers can
generate (near) optimal code, they have so far scaled only to
small code sizes, mostly straight line code. Our approach, on the
other hand, is meant as a general purpose paradigm that can optimize
branches and loops, as shown by the inter-loop optimization from
Section~\ref{sec:loops}.

Our approach was inspired by Denali~\cite{Denali}, a super-optimizer for
finding near-optimal ways of computing a given basic block. Denali
represents the computations performed in the basic block as an
expression graph, and applies axioms to create an E-graph data
structure representing the various ways of computing the values in the
basic block. It then uses repeated calls to a SAT solver to find the
best way of computing the basic block given the equalities stored in
the E-graph.
The biggest difference between our work and Denali is that our
approach can perform intricate optimizations involving branches and
loops. On the other hand, the Denali cost model is more precise than
ours because it assigns costs to entire sequences of operations, and
so it can take into account the effects of scheduling and register
allocation.

\mypara{Rewrite-Based Optimizers.} Axioms or rewrite-rules have
been used in many compilation systems, for example TAMPR~\cite{tampr},
ASF+SDF~\cite{ASF-SDF}, the ML compilation system of Visser {\it et
al.}~\cite{visseretal98}, and Stratego~\cite{stratego}. These systems,
however, perform transformations in sequence, with each axiom or
rewrite rule destructively updating the IR. Typically, such compilers
also provide a mechanism for controlling the application of rewrites
through built-in or user-defined \emph{strategies}. Our approach, in
contrast, does not use strategies -- we instead simultaneously explore
all possible optimization orderings, while avoiding redundant
work. Furthermore, even with no strategies, we can perform a
variety of intricate optimizations. 

\mypara{Optimization Ordering.} Many research projects have been
aimed at mitigating the phase ordering problem, including automated
assistance for exploring enabling and disabling properties of
optimizations~\cite{Whitfield90, WhitfieldSoffa97}, automated
techniques for generating good sequences~\cite{lctes99, lctes04,
comp-for-21st-century}, manual techniques for combining analyses and
optimizations~\cite{click-cooper}, and automated techniques for the
same purpose~\cite{Lerner-etal-02}. However, we tackle the problem
from a different perspective than previous approaches, in particular,
by simultaneously exploring all possible sequences of optimizations,
up to some bound. Aside from the theoretical guarantees from
Section~\ref{sec:formal}, our approach can do well even if every part
of the input program requires a different ordering. 

\mypara{Translation Validation.} Although previous approaches to
translation validation have been explored~\cite{pnueli98translation,
necula:00:trans-valid, voc:jucs}, our approach has the advantage that
it can perform translation validation by using the same technique as
for program optimization.

\mypara{Intermediate Representations.} Our main contribution is an
approach for structuring optimizers based on equality
saturation. However, to make our approach effective, we have also
designed the \EPEG representation. There has been a long line of work
on developing IRs that make analysis and optimizations easier to
perform~\cite{ssa, ZadeckVariableEquality, GSSA, ThinGSSA, PDG, VDG,
GlobalCodeMotionValueNumbering, VFG, DFG}. The key distinguishing
feature of \EPEGs is that a single \EPEG can represent
many optimized versions of the input program, which allows us to
use global profitability heuristics and to perform translation
validation.

We now compare the \PEG component of our IR with previous IRs. \PEGs
are related to SSA~\cite{ssa}, gated SSA~\cite{GSSA} and thinned-gated
SSA~\cite{ThinGSSA}. The $\mu$ function from gated SSA is similar to
our $\theta$ function, and the $\eta$ function is similar to our
$\eval$/$\pass$ pair. However, in all these variants of SSA, the SSA
nodes are inserted \emph{into} the CFG, whereas we do not keep the CFG
around. The fact that \PEGs are not tied to a CFG imposes fewer
placement constraints on IR nodes, allowing us to implicitly
restructure the CFG simply by manipulating the \PEG, as shown in
Section~\ref{sec:local-nonlocal}. Furthermore, the conversion from any
of the SSA representations back to imperative code is extremely simple
since the CFG is already there. It suffices for each assignment $x :=
\phi(a,b)$ to simply insert the assignments $x:=a$ and $x:=b$ at the
end of the two predecessors CFG basic blocks. The fact that our \PEG
representation is not tied to a CFG makes the conversion from \PEGs
back to a CFG-like representation much more challenging, since it
requires reconstructing explicit control information.

The Program Dependence Graph~\cite{PDG} (PDG) represents control
information by grouping together operations that execute in the same
control region. The representation, however, is still
statement-based. Also, even though the PDG makes many analyses and
optimizations easier to implement, each one has to be developed
independently. In our representation, analyses and optimizations fall
out from a single unified reasoning mechanism.

The Program Dependence Web~\cite{PDW} (PDW) combines the PDG with
gated SSA. Our conversion algorithms have some similarities with the
ones from the PDW. The PDW however still maintains explicit PDG
control edges, whereas we do not have such explicit control edges,
making converting back to a CFG-like structure more complex.

Dependence Flow Graphs~\cite{DFG} (DFGs) are a complete and
executable representation of programs based on
dependencies. However, DFGs employ a side-effecting storage model
with an imperative \textit{store} operation, whereas our
representation is entirely functional, making equational reasoning
more natural.

Like \PEGs, the Value Dependence Graph~\cite{VDG} (VDG) is a complete
functional representation. VDGs use $\lambda$ nodes (i.e. regular
function abstraction) to represent loops, whereas we use specialized
$\theta$, $\eval$ and $\pass$ nodes. Using $\lambda$s as a key
component in an IR is problematic for the equality saturation process.
In order to effectively reason about $\lambda$s one must particularly
be able to reason about substitution.  While this is possible to do
during equality saturation, it is not efficient.  The reason is that
equality saturation is also being done to the body of the $\lambda$
expression (essentially optimizing the body of the loop in the case of
VDGs), so when the substitution needs to be applied, it needs to be
applied to all versions of the body and even all future versions of
the body as more axioms are applied. Furthermore, one has to determine
when to perform $\lambda$ abstraction on an expression, that is to
say, turn $e$ into $(\lambda x.e_{\it body})(e_{\it arg})$, which
essentially amounts to pulling $e_{\it arg}$ out of $e$. Not only can
it be challenging to determine when to perform this transformation,
but one also has to take particular care to perform the transformation
in a way that applies to \emph{all} equivalent forms of $e$ and
$e_{\it arg}$.

The problem with $\lambda$ expressions stems in fact from a more
fundamental problem: $\lambda$ expressions use \emph{intermediate
  variables} (the parameters of the $\lambda$s), and the level of
indirection introduced by these intermediate variables adds reasoning
overhead. In particular, as was explained above for VDGs, the added
level of indirection requires reasoning about substitution, which in
the face of equality saturation is cumbersome and inefficient. An
important property of \PEGs is that they have no intermediate
variables.  The overhead of using intermediate variables is also why
we chose to represent effects with an effect token rather than using
the techniques from the functional languages community such as
monads~\cite{WadlerMonads1990,WadlerMonads1995, WadlerMonads1998} or
continuation-passing style~\cite{AppelBookContinuations1991,
  KennedyContinuations2007, HatcliffContinuations1994,
  FlanaganContinuations1993, AppelShrinkingLambda1997}, both of which
introduce indirection through intermediate variables. It is also why
we used recursive expressions rather than using syntactic fixpoint
operators.

\mypara{Dataflow Languages.}  Our \PEG intermediate representation
is related to the broad area of dataflow
languages~\cite{dataflowlangs}. The most closely related is the Lucid
programming language~\cite{lucid}, in which variables are maps from
iteration counts to possibly undefined values, as in our
\PEGs. Lucid's \textbf{first}/\textbf{next} operators are similar to
our $\theta$ nodes, and Lucid's \textbf{as soon as} operator is
similar to our $\eval$/$\pass$ pair. However, Lucid and \PEGs differ
in their intended use and application. Lucid is a programming language
designed to make formal proofs of correctness easier to do, whereas
\Peggy uses equivalences of \PEG nodes to optimize code expressed in
existing imperative languages. Furthermore, we incorporate a
$\monotonize$ function into our semantics and axioms, which guarantees
the correctness of our conversion to and from CFGs with loops.

\mypara{Theorem Proving.} Because most of our reasoning is
performed using simple axioms, our work is related to the broad area
of automated theorem proving. The theorem prover that most inspired
our work is Simplify~\cite{simplify}, with its E-graph data structure
for representing equalities~\cite{NelsonOpenCongruence80}. Our \EPEGs
are in essence specialized E-graphs for reasoning about
\PEGs. Furthermore, the way our analyses communicate through equality
is conceptually similar to the equality propagation approach used in
Nelson-Oppen theorem provers~\cite{Nelson:1979:SCD}.

\mypara{Execution Indices.} Execution indices identify the state of
progress of an execution~\cite{gotoharmful,execindex}. The call stack
typically acts as the interprocedural portion, and the loop iteration
counts in our semantics can act as the intraprocedural portion. As a
result, one of the benefits of \PEGs is that they make intraprocedural
execution indices explicit.

\section{Conclusion and future work}

We have presented a new approach to structuring optimizers that is
based on equality saturation. Our approach has a variety of benefits
over previous compilation models: it addresses the phase ordering
problem, it enables global profitability heuristics, and it performs
translation validation.

There are a variety of directions for future work. One direction is to
extend \Peggy so that it generates a proof of correctness for the
optimizations it performs. Each step in this proof would be the
application of an equality analysis. Since the majority of our
analyses are axiom applications, these proofs would be similar to
standard mathematical proofs. We would then like to use these proofs
as a way of automatically generating optimizations. In particular, by
determining which nodes of the original \PEG the proof depends on, and
what properties of these nodes are important, we will explore
how one can generalize not only the proof but also the
transformation. Using such an approach, we hope to develop a compiler
that can learn optimizations as it compiles.

Another direction involves addressing our heap linearizing issues when
reverting a \PEG to a CFG. One promising solution to this problem
involves adapting our \PEG representation to use \emph{string
  diagrams}~\cite{citeulike:4121553, 1432053}. Expressions are an
excellent theory for non-linear values; string diagrams are a similar
theory, but for linear values. A string diagram is comprised of nodes
with many inputs and many outputs along with strings which connect
outputs of nodes to inputs of other nodes. By default these strings
cannot be forked, capturing the linear quality of the values carried
by the strings; however, strings for non-linear types are privileged
with the ability to fork.
In addition to using string diagrams to encode linearity in our \PEGs,
we could also re-express all of our axioms in terms of string
diagrams, thus preserving the linear qualities of any strings
involved. This prevents the saturation engine from producing \PEGs
which cannot be linearized without additional information. Also,
string diagrams can be used to preserve well-formedness of
PEGs. Well-formedness constraints are the only quadratic component of
our Pseudo-Boolean profitability heuristic formulation, so removing
these constraints could drastically improve the speed of our
Pseudo-Boolean solver.

\newcommand{\meta}[1]{\mathbf{#1}}
\newcommand{\sgap}{\vspace{0.2cm}}
\newcommand{\mgap}{\vspace{0.7cm}}
\newcommand{\bgap}{\vspace{1.2cm}}
\newcommand{\desc}[1]{{\small [{\it #1}]}}
\newcommand{\ax}{\item[$\bullet$]}
\newcommand{\textop}[1]{\textsc{#1}}

\appendix
\section{Axioms}
\label{sec:axioms}

In this section we describe the axioms used to produce the
optimizations listed in Figure~\ref{fig:opts}. We organize the axioms
into two categories: general-purpose and domain-specific. The
general-purpose axioms are useful enough to apply to a wide range of
programming domains, while the domain-specific axioms give useful
information about a particular domain.

The axioms provided below are not a complete list of the ones
generally included in our engine during saturation. Instead, we
highlight only those that were necessary to perform the optimizations in
Figure~\ref{fig:opts}.

\subsection{General-purpose Axioms}

The axioms presented here are usable in a wide range of
programs. Hence, these axioms are included in all runs of \Peggy.


\begin{enumerate}[\hbox to8 pt{\hfill}]

\item\noindent{\hskip-12 pt\bf (Built-in \EPEG ops):}\ This group of axioms relates to the
  special \PEG operators $\theta, \eval,\text{ and }\phi$. Many of
  these axioms describe properties that hold for any operation \textsc{op}.

\begin{enumerate}[$\bullet$]
  \ax if $T=\theta_i(\meta{A},T)$ exists, then $T=\meta{A}$ \\
  \desc{If a loop-varying value always equals its previous value, then it equals its initial value}
  
  
  \ax if $\meta{A}$ is invariant w.r.t. $i$, then $\eval_i(\meta{A},\meta{P}) = \meta{A}$  \\
  \desc{Loop-invariant operations have the same value regardless of the current loop iteration}
  
  
  \ax \begin{tabbing}
    $\textop{op}(\meta{A_1},\ldots,\theta_j(\meta{B_i},\meta{C_i}),\ldots,\meta{A_k}) = $
    $\theta_j($\=$\textop{op}(\eval_j(\meta{A_1},0),\ldots,\meta{B_i},\ldots,\eval_j(\meta{A_k},0)), $ \\
               \>$\textop{op}(\peel_j(\meta{A_1}),\ldots,\meta{C_i},\ldots,\peel_j(\meta{A_k})))$
  \end{tabbing}
  \desc{Any operator can distribute through $\theta_j$}
  
  
  \ax $\phi(\meta{C},\meta{A},\meta{A}) = \meta{A}$ \\
  \desc{If a $\phi$ node has the same value regardless of its condition, then it is equal to that value}
  
  
  \ax $\phi(\meta{C},\phi(\meta{C},\meta{T_2},\meta{F_2}),\meta{F_1}) = \phi(\meta{C},\meta{T_2},\meta{F_1})$ \\
  \desc{A $\phi$ node in a context where its condition is true is equal to its true case}
  
  
  \ax \begin{tabbing} 
    $\textop{op}(\meta{A_1},\ldots,\phi(\meta{B},\meta{C},\meta{D}),\ldots,\meta{A_k}) = \phi(\meta{B},$\=$\textop{op}(\meta{A_1},\ldots,\meta{C},\ldots,\meta{A_k}),$ \\
                                                                                                        \>$\textop{op}(\meta{A_1},\ldots,\meta{D},\ldots,\meta{A_k}))$
  \end{tabbing}
  \desc{All operators distribute through $\phi$ nodes}
  
  
  \ax $\textop{op}(\meta{A_1},\ldots,\eval_j(\meta{A_i},\meta{P}),\ldots,\meta{A_k}) = \eval_j(\textop{op}(\meta{A_1},\ldots,\meta{A_i},\ldots,\meta{A_k}),\meta{P})$, when \\
  $\meta{A_1},\ldots,\meta{A_{i-1}},\meta{A_{i+1}},\ldots,\meta{A_k}$ are invariant w.r.t. $j$ \\
  \desc{Any operator can distribute through $\eval_j$}
\end{enumerate}


\item\noindent{\hskip-12 pt\bf (Code patterns):}\ These axioms are more elaborate and describe
  some complicated (yet still non-domain-specific) code patterns.
  These axioms are awkward to depict using our expression notation, so
  instead we present them in terms of before-and-after source code
  snippets.

\begin{enumerate}[$\bullet$]

  \ax Unroll loop entirely:
\begin{verbatim}
x = B;               ==     x = B;
for (i=0;i<D;i++)           if (D>=0) x += C*D;
   x += C;                  
\end{verbatim}
\desc{Adding C to a variable D times is the same as adding C*D (assuming $D\ge 0$)}


\ax Loop peeling:
\begin{verbatim}
A;                   ==     if (N>0) {
for (i=0;i<N;i++)              B[i -> 0];
   B;                          for (i=1;i<N;i++)
                                  B;
                            } else {
                               A;
                            }
\end{verbatim}
\desc{This axiom describes one specific type of loop peeling, where
  $B[i \rightarrow 0]$ means copying the body of $B$ and replacing all
  uses of $i$ with 0}


  \ax Replace loop with constant:
\begin{verbatim}
for (i=0;i<N;i++){}     ==     x = N;
x = i;
\end{verbatim}
\desc{Incrementing N times starting at 0 produces N}
\end{enumerate}


\item\noindent{\hskip-12 pt\bf (Basic Arithmetic):}\ This group of axioms encodes arithmetic
  properties including facts about addition, multiplication, and
  inequalities. Once again, this is not the complete list of
  arithmetic axioms used in \Peggy, just those that were relevant to
  the optimizations mentioned in Figure~\ref{fig:opts}.

\begin{enumerate}[$\bullet$]
  \ax $(\meta{A}*\meta{B}) + (\meta{A}*\meta{C}) = \meta{A}*(\meta{B}+\meta{C})$
  
  \ax if $\meta{C}\ne 0$, then $(\meta{A}/\meta{C})*\meta{C} = \meta{A}$
  
  \ax $\meta{A}*\meta{B} = \meta{B}*\meta{A}$

  \ax $\meta{A}+\meta{B} = \meta{B}+\meta{A}$
  
  \ax $\meta{A}*1 = \meta{A}$

  \ax $\meta{A}+0 = \meta{A}$
  
  \ax $\meta{A}*0 = 0$
  
  \ax $\meta{A}-\meta{A} = 0$
  
  \ax $\meta{A} \mod 8 = \meta{A} \& 7$
  
  \ax $\meta{A} + (-\meta{B}) = \meta{A}-\meta{B}$

  \ax $-(-\meta{A}) = \meta{A}$
  
  \ax $\meta{A}*2 = \meta{A}<<1$
  
  \ax $(\meta{A}+\meta{B})-\meta{C} = \meta{A}+(\meta{B}-\meta{C})$
  
  \ax $(\meta{A}+\meta{B})+\meta{C} = \meta{A}+(\meta{B}+\meta{C})$
  
  \ax if $\meta{A} \ge \meta{B}$ then $(\meta{A}+1) > \meta{B}$ 
  
  \ax if $\meta{A} \le \meta{B}$ then $(\meta{A}-1) < \meta{B}$ 
  
  \ax $(\meta{A}>\meta{A}) = \false$ 
  
  \ax $(\meta{A}\ge\meta{A}) = \true$
  
  \ax $\lnot(\meta{A}>\meta{B})~=~(\meta{A}\le\meta{B})$
  
  \ax $\lnot(\meta{A} \le \meta{B})~=~(\meta{A} > \meta{B})$

  \ax $(\meta{A}<\meta{B})~=~(\meta{B}>\meta{A})$
  
  \ax $(\meta{A} \le \meta{B})~=~(\meta{B} \ge \meta{A})$
  
  \ax if $\meta{A}\ge\meta{B}$ and $\meta{C}\ge 0$ then $(\meta{A}*\meta{C})\ge(\meta{B}*\meta{C})$
\end{enumerate}


\item\noindent{\hskip-12 pt\bf (Java-specific):}\ This group of axioms describes facts about
  Java-specific operations like reading from an array or field. Though
  they refer to Java operators explicitly, these axioms are still
  general-purpose within the scope of the Java programming language.

\begin{enumerate}[$\bullet$]
\ax $\textop{getarray}( \textop{setarray}(\meta{S},\meta{A},\meta{I},\meta{V}), \meta{A}, \meta{I}) = \meta{V}$  \\
  \desc{Reading $\meta{A}[\meta{I}]$ after writing $\meta{A}[\meta{I}]\leftarrow\meta{V}$ yields $\meta{V}$}

  
  \ax If $\meta{I}\ne \meta{J}$, then $\textop{getarray}(\textop{setarray}(\meta{S},\meta{A},\meta{J},\meta{V}),\meta{A},\meta{I}) = \textop{getarray}(\meta{S},\meta{A},\meta{I})$ \\
    \desc{Reading $\meta{A}[\meta{I}]$ after writing $\meta{A}[\meta{J}]$ (where $\meta{I}\ne \meta{J}$) is the same as reading before the write}

  
  \ax $\textop{setarray}(\textop{setarray}(\meta{S},\meta{A},\meta{I},\meta{V_1}),\meta{A},\meta{I},\meta{V_2}) = \textop{setarray}(\meta{S},\meta{A},\meta{I},\meta{V_2})$ \\
    \desc{Writing $\meta{A}[\meta{I}]\leftarrow\meta{V_1}$ then $\meta{A}[\meta{I}]\leftarrow\meta{V_2}$ is the same as only writing $\meta{V_2}$}

  
  \ax $\textop{getfield}(\textop{setfield}(\meta{S},\meta{O},\meta{F},\meta{V}),\meta{O},\meta{F}) = \meta{V}$ \\
    \desc{Reading $\meta{O}.\meta{F}$ after writing $\meta{O}.\meta{F}\leftarrow\meta{V}$ yields $\meta{V}$}


  \ax If $\meta{F_1}\ne \meta{F_2}$, \\
    \quad then $\textop{getfield}(\textop{setfield}(\meta{S},\meta{O},\meta{F_1},\meta{V}),\meta{O},\meta{F_2}) = \textop{getfield}(\meta{S},\meta{O},\meta{F_2})$ \\
    \desc{Reading $\meta{A}[\meta{I}]$ after writing $\meta{A}[\meta{J}]$ (where $\meta{I}\ne \meta{J}$) is the same as reading before the write}


  \ax $\textop{setfield}(\textop{setfield}(\meta{S},\meta{O},\meta{F},\meta{V_1}),\meta{O},\meta{F},\meta{V_2}) = 
  \textop{setfield}(\meta{S},\meta{O},\meta{F},\meta{V_2})$ \\
    \desc{Writing $\meta{O}.\meta{F}\leftarrow \meta{V_1}$ then $\meta{O}.\meta{F}\leftarrow\meta{V_2}$ is the same as only writing $\meta{V_2}$}
\end{enumerate}

\end{enumerate}


\subsection{Domain-specific} 
Each of these axioms provides useful information about a particular
programming domain. These could be considered ``application-specific''
or ``program-specific'' axioms, and are only expected to apply to that
particular application/program.

\begin{enumerate}[\hbox to8 pt{\hfill}]

\item\item\noindent{\hskip-12 pt\bf (Inlining):}\ Inlining in \Peggy acts like one giant axiom
  application, equating the inputs of the inlined \PEG with the actual
  parameters, and the outputs of the \PEG with the outputs of the
  \textsc{invoke} operator.

\begin{enumerate}[$\bullet$]
  \ax Inlining axiom:
\begin{verbatim}
x = pow(A,B);       ==      result = 1;
                            for (e = 0;e < B;e++)
                               result *= A;
                            x = result;
\end{verbatim}
\desc{A method call to pow is equal to its inlined body}
\end{enumerate}


\item\item\noindent{\hskip-12 pt\bf (Sigma-invariance):}| It is very common for certain Java methods to
  have no effect on the heap. This fact is often useful, and can
  easily be encoded with axioms like the following.

\begin{enumerate}[$\bullet$]
\ax $\rho_{\sigma}(\textop{invoke}(\meta{S},\meta{L},\textop{[{\tt Object List.get()}]},\meta{P}))~=~\meta{S}$ \\
  \desc{{\tt List.get} is $\sigma$-invariant}
  
  
\ax $\rho_{\sigma}(\textop{invoke}(\meta{S},\meta{L},\textop{[{\tt int List.size()}]},\meta{P}))~=~\meta{S}$ \\
  \desc{{\tt List.size} is $\sigma$-invariant}


\ax $\rho_{\sigma}(\textop{invokestatic}(\meta{S},\textop{[{\tt double Math.sqrt(double)}]},\meta{P})) = \meta{S}$ \\
  \desc{{\tt Math.sqrt} is $\sigma$-invariant}

\end{enumerate}


\item\item\noindent{\hskip-12 pt\bf (Vector axioms):}\ In our raytracer benchmark, there are many
  methods that deal with immutable 3D vectors. The following are some
  axioms that pertain to methods of the Vector class. These axioms
  when expressed in terms of \PEG nodes are large and awkward, so we
  present them here in terms of before-and-after source code snippets.

\begin{enumerate}[$\bullet$]
  
\ax $\text{\tt construct(}\meta{A},\meta{B},\meta{C}\text{\tt ).scaled(}\meta{D}\text{\tt )} = \text{\tt construct(}\meta{A}*\meta{D},\meta{B}*\meta{D},\meta{C}*\meta{D}\text{\tt )}$ \\
  \desc{Vector $(A,B,C)$ scaled by $D$ equals vector $(A*D,B*D,C*D)$}

  
\ax $\meta{A}\text{\tt .distance2(}\meta{B}\text{\tt )} = \meta{A}\text{\tt .difference(}\meta{B}\text{\tt ).length2()}$ \\
  \desc{The squared distance between $A$ and $B$ equals the squared length of vector $(A-B)$}
  
  
\ax  $\begin{array}{l} \meta{A}\text{\tt .getX()} = \meta{A}\text{\tt .mX} \\ \meta{A}\text{\tt .getY()} = \meta{A}\text{\tt .mY} \\ \meta{A}\text{\tt .getZ()} = \meta{A}\text{\tt .mZ} \end{array}$ \\
  \desc{Calling the getter method is equal to accessing the field directly}
  
  
\ax $\begin{array}{l}\text{\tt construct(}\meta{A},\meta{B},\meta{C}\text{\tt ).mX} = \meta{A} \\ \text{\tt construct(}\meta{A},\meta{B},\meta{C}\text{\tt ).mY} = \meta{B} \\ \text{\tt construct(}\meta{A},\meta{B},\meta{C}\text{\tt ).mZ} = \meta{C} \end{array}$ \\
  \desc{Accessing the field of constructed vector $(A,B,C)$ is equal to appropriate parameter}
  
  
\ax $\begin{array}{l}
  \text{\tt construct(}\meta{A},\meta{B},\meta{C}\text{\tt ).difference(construct(}\meta{D},\meta{E},\meta{F}\text{\tt ))} =  \\
  \text{\tt construct(}\meta{A}-\meta{D},\meta{B}-\meta{E},\meta{C}-\meta{F}\text{\tt )} \\
\end{array}$ \\
  \desc{The difference of vectors $(A,B,C)$ and $(D,E,F)$ equals $(A-D,B-E,C-F)$}

  
\ax $\text{\tt construct(}\meta{A},\meta{B},\meta{C}\text{\tt ).dot(construct(}\meta{D},\meta{E},\meta{F}\text{\tt ))} = \meta{A}*\meta{D} + \meta{B}*\meta{E} + \meta{C}*\meta{F}$ \\
  \desc{The dot product of vectors $(A,B,C)$ and $(D,E,F)$ equals $A*D+B*E+C*F$}
  
  
\ax  $\text{\tt construct(}\meta{A},\meta{B},\meta{C}\text{\tt ).length2()} = \meta{A}*\meta{A} + \meta{B}*\meta{B} + \meta{C}*\meta{C}$ \\
  \desc{The squared length of vector $(A,B,C)$ equals $A^2+B^2+C^2$}
  
  
\ax $\text{\tt construct(}\meta{A},\meta{B},\meta{C}\text{\tt ).negative()} = \text{\tt construct(}-\meta{A},-\meta{B},-\meta{C}\text{\tt )}$ \\
  \desc{The negation of vector $(A,B,C)$ equals $(-A,-B,-C)$}
  
  
\ax $\text{\tt construct(}\meta{A},\meta{B},\meta{C}\text{\tt ).scaled(}\meta{D}\text{\tt )} = \text{\tt construct(}\meta{A}*\meta{D},\meta{B}*\meta{D},\meta{C}*\meta{D}\text{\tt )}$ \\
  \desc{Scaling vector $(A,B,C)$ by $D$ equals $(A*D,B*D,C*D)$}
  
  
\ax $\begin{array}{l} \text{\tt construct(}\meta{A},\meta{B},\meta{C}\text{\tt ).sum(construct(}\meta{D},\meta{E},\meta{F}\text{\tt ))} =  \\ \text{\tt construct(}\meta{A}+\meta{D},\meta{B}+\meta{E},\meta{C}+\meta{F}\text{\tt )} \end{array}$ \\
  \desc{The sum of vectors $(A,B,C)$ and $(D,E,F)$ equals $(A+D,B+E,C+F)$}
  
  
\ax $\text{\tt getZero().mX} = \text{\tt getZero().mY} = \text{\tt getZero().mZ} = 0.0$  \\
  \desc{The components of the zero vector are 0}

\end{enumerate}


\item\item\noindent{\hskip-12 pt\bf (Design patterns):}\ These axioms encode scenarios that occur when
  programmers use particular coding styles that are common but
  inefficient.

\begin{enumerate}[$\bullet$]
  
\ax Axiom about integer wrapper object: \\
$\meta{A}\text{\tt .plus(}\meta{B}\text{\tt ).getValue()} = \meta{A}\text{\tt .getValue()} + \meta{B}\text{\tt .getValue()}$ \\
\desc{Where {\tt plus} returns a new integer wrapper, and {\tt getValue} returns the wrapped value}


\ax Axiom about redundant method calls when using {\tt java.util.List}:
\begin{verbatim}
Object o = ...                 ==   Object o = ...
List l = ...                        List l = ...
if (l.contains(o)) {                int index = l.indexOf(o);
   int index = l.indexOf(o);        if (index >= 0) {
   ...                                 ...
}                                   }
\end{verbatim}
\desc{Checking if a list contains an item then asking for its index is redundant}

\end{enumerate}


\item\item\noindent{\hskip-12 pt\bf (Method Outlining):}\ Method ``outlining'' is the opposite of
  method inlining; it is an attempt to replace a snippet of code with
  a procedure call that performs the same task. This type of
  optimization is useful when refactoring code to remove a common yet
  inefficient snippet of code, by replacing it with a more efficient
  library implementation.

\begin{enumerate}[$\bullet$]
  
  \ax Body of selection sort replaced with {\tt Arrays.sort(int[])}: 
\begin{verbatim}
length = A.length;                  ==       Arrays.sort(A);
for (i=0;i<length;i++) {
   for (j=i+1;j<length;j++) {
      if (A[i] > A[j]) {
         temp = A[i];
         A[i] = A[j];
         A[j] = temp;
      }
   }
}
\end{verbatim}
\end{enumerate}


\item\item\noindent{\hskip-12 pt\bf (Specialized Redirect):}\ This optimization is similar to Method
  Outlining, but instead of replacing a snippet of code with a
  procedure call, it replaces one procedure call with an equivalent
  yet more efficient one. This is usually in response to some learned
  contextual information that allows the program to use a special-case
  implementation.

\begin{enumerate}[$\bullet$]
  \ax     \begin{tabbing}
    if~\=$I=\textop{invokestatic}(\meta{S},\textop{[{\tt void sort(int[])}]},\textop{params}(\meta{A}))$ exists, \\
    \>then add equality $\text{\it isSorted}(\rho_{\sigma}(I),\meta{A}) = \true$
  \end{tabbing} 
  \desc{If you call {\tt sort} on an array $A$, then $A$ is sorted in the subsequent heap}

  
  \ax if  $\text{\it isSorted}(\meta{S},\meta{A}) = \true$, then \\
      $\textop{invokestatic}(\meta{S},\textop{[{\tt int linearSearch(int[],int)}]},\textop{params}(\meta{A},\meta{B})) = $ \\
      $\textop{invokestatic}(\meta{S},\textop{[{\tt int binarySearch(int[],int)}]},\textop{params}(\meta{A},\meta{B}))$  \\
  \desc{If array $A$ is sorted, then a linear search equals a binary search}
\end{enumerate}

\end{enumerate}

\section*{Acknowledgements}
  We would like to thank Jeanne Ferrante, Todd Millstein, Christopher Gautier, members of
  the UCSD Programming Systems group, and the anonymous reviewers for
  giving us invaluable feedback on earlier drafts of this paper.

\bibliographystyle{plain}
\bibliography{paper}

\end{document}